\documentclass[12pt,english]{article}
\usepackage[a4paper, total={6in, 9in}]{geometry}
\usepackage{amsfonts,amsthm,amsmath,amssymb,latexsym,comment}
\usepackage{float}
\usepackage{amssymb}
\usepackage{amsmath}
\usepackage{eurosym}
\usepackage{hyperref}
\usepackage{hhline}
\usepackage[pdftex]{graphicx}
\usepackage[square]{natbib}
\newcommand{\commentout}[1]{}
\newtheorem{proposition}{Proposition}
\newtheorem{corollary}{Corollary}
\newtheorem{theorem}{Theorem}

\newtheorem{lemma}{Lemma}
\newtheorem{example}{Example} 
\newtheorem{definition}{Definition}

\newcommand{\ACT}{\mathit{CT}}
\newcommand{\nciteyear}[1]{[\citeyear{#1}]}
\renewcommand{\cite}{\citep*}
\newcommand{\fullv}[1]{#1}
\newcommand{\shortv}{\commentout}

\date{}

\setcitestyle{acmnumeric}

\title{Communication games, sequential equilibrium, and mediators}

\author{
Ivan Geffner \\
Cornell University\thanks{Supported in part by NSF grants IIS-1703846 and 
IIS-0911036, ARO grant W911NF-17-1-0592,  MURI grant W911NF-19-1-0217
from the ARO, and a grant from Open Philanthropy.}\ \thanks{Declarations of interest: none.}
\and
Joseph Y. Halpern\\
Cornell University\footnotemark[1]\ \footnotemark[2]
}


\begin{document}

\begin{titlepage}

\maketitle

\begin{abstract}
We consider \emph{$k$-resilient sequential equilibria},
   strategy profiles where no player in a coalition of at most $k$
    players believes that it can increase its utility by deviating,
   regardless of its local state.
      We prove  that all $k$-resilient
    sequential equilibria that can be implemented with
    a trusted
    mediator can also be implemented without the mediator  
    in a synchronous system of $n$ players if $n >3k$.  In asynchronous
systems, where there is no global
    notion of time and messages may take arbitrarily long to get to
    their recipient, we prove that a 
    $k$-resilient sequential 
    equilibrium with a mediator can be implemented
    without the mediator if $n > 4k$.  
        These results match the lower bounds given by Abraham, Dolev, and
    Halpern~\nciteyear{ADH07} and Geffner and Halpern~\nciteyear{GH21}
    for implementing a Nash equilibrium without a mediator (which
    are easily seen to apply to implementing a sequential equilibrium)
    and improve the results of Gerardi
        \nciteyear{Gerardi04}, who showed that, in the case that $k=1$, a
        sequential equilibrium can be implemented
        in synchronous systems
        if $n \ge 5$.  
\end{abstract}

\end{titlepage}

\section{Introduction}

A standard technique for dealing with complicated problems in
distributed computing, cryptography, and game theory is to show that
the problem can be solved under the assumption that there is a
\emph{trusted mediator}, a third party to whom all agents in the
system can communicate, and then showing that we can \emph{implement}
the mediator; that is, the agents just communicating among themselves
(using what economists call \emph{cheap talk}) can simulate what the
mediator does.  In the 
distributed computing and
cryptography literature, the focus has been on
\emph{secure multiparty computation}  \cite{GMW87,yao:sc}. 
\commentout{
The goal
is to compute some function $f(x_1, \ldots, x_n)$, where there are $n$
agents, and $x_i$ is agent $i$'s input.  
}
Given $n$ agents in which each agent $i$ has input $x_i$, the goal is
to compute some function $f(x_1, \ldots, x_n)$ without revealing
anything about the agents' inputs besides what can be deduced from the
output of the function itself. 
With a mediator, this is
trivial: each agent $i$ 
just tells the mediator $x_i$, and the mediator tells the agents
$f(x_1,\ldots, x_n)$.  It is known that, as long as no
more than $k$ agents are malicious, this mediator can be
implemented if $n > 3k$.

Game theory adds incentives to the mix.  While in the work done in
cryptography, it is simply assumed that malicious agents will do what
they can to bring down the system, in the game-theoretic setting, it
is assumed that agents will deviate from a strategy only if it is
in their best interests to do so.  
To make this precise, in the game
theory literature, three games were considered: an \emph{underlying}
normal-form game $\Gamma$, a game $\Gamma_d$ with a mediator, where,
after communicating with the mediator, players make a move in $\Gamma$
and get the same payoffs that they do in $\Gamma$, and  a
\emph{cheap-talk game} $\Gamma_{\ACT}$, where players just talk to
each other and then make a move in the underlying game.  Forges
\nciteyear{F90} and Barany 
\nciteyear{Barany92} showed that a Nash equilibrium (NE) in the mediator
game $\Gamma_d$ 
could be \emph{implemented} by a NE
in the cheap-talk game
if
$n \ge 4$ 
in the sense that the distribution over action profiles in the NE in
$\Gamma_d$ is the same as that in the NE in $\Gamma_{\ACT}$. 
\commentout{
This
result was later generalized  to show that a \emph{$k$-resilient
  equilibrium} (one 
that tolerates deviations by coalitions of size up to $k$) in
$\Gamma_d$ can be implemented by a $k$-resilient equilibrium in
$\Gamma_{\ACT}$ if $n > 3k$ in the synchronous case
\cite{ADGH06} and if $n > 4k$ in the asynchronous case 
\cite{ADGH19}.  
}

As is well known, Nash
equilibrium considers only deviations on the \emph{equilibrium path}
(situations that arise with positive probability if the NE
is played).  
In extensive-form games such as $\Gamma_{\ACT}$,
Nash equilibrium does not always describe what intuitively would be 
reasonable play.
For example
if $A$ tells $B$ to give her  \$1 or she will destroy the world,
$B$ giving \$1 to $A$ and $A$ not destroying the world is a NE.
However, this equilibrium is based on a non-credible
threat.  
Would 
$A$
really destroy the world if
she doesn't get \$1?  For this reason, game theorists have 
considered solution concepts such as \emph{sequential equilibrium}
\cite{KW82}, 
where players cannot increase their utility by deviating even off the
equilibrium path 
(see Section~\ref{def:seq-eq} for a formal definition).  

Not surprisingly, the question of whether a sequential equilibrium
with a mediator can be implemented has been considered before in the game
theory literature.  Ben-Porath \nciteyear{Bp03} claimed that a
sequential equilibrium could be implemented in $\Gamma_{\ACT}$ if $n \ge 3$
provided that there is a \emph{punishment strategy} (a way for players
to punish players who are caught cheating---see
Section~\ref{sec:results} for a formal definition);%
\footnote{Unfortunately, there is a serious error in Ben-Porath's
  proof; see \cite{ADH07, 
        BenPorathNote21}.  A minor modification of Ben-Porath's proof
  does deal with the case 
  that $n > 3$ \cite{BenPorathNote21}.}
  Gerardi \nciteyear{Gerardi04} showed that,
in the case that $k=1$, a sequential equilibrium can be implemented if $n \ge 5$.  

Nash and sequential equilibria incentivize individual participants to
follow the 
given strategy since 
otherwise they would get a lower expected payoff.
However, if players can form coalitions and deviate in a coordinated
way, then a coalition of players may have incentive to deviate in a
Nash or sequential equilibrium.
For instance, consider a
normal-form game for $n$ players in which players can play either $0$
or $1$. If at least $n-1$ players play $0$, they all get $0$ utility,
but if two or more players play $1$, these players get a utility of
$1$ and everyone else gets a utility of $-1$. Clearly, playing $0$ is
a Nash equilibrium of this game. However, if two players can deviate
together, they can increase their payoff by playing $1$ instead of
$0$.  
Designing mechanisms that incentivize not only individuals, but also
coalitions to act as intended has become increasingly important,
especially in 
Internet applications (e.g., blockchain), where the same person can
maintain several identities. 
Not surprisingly, coalitions have received significant attention recently in the
literature (see, e.g., \cite{ES14, 
    Heller05}). 
Abraham, Dolev, Gonen and Halpern~\nciteyear{ADGH06} (ADGH
from now on) introduced the notion of \emph{$k$-resilient Nash equilibrium},
which is a strategy profile in which no coalition of up to $k$ players
can increase their payoff by deviating. They also generalized Forges
and Barany's results by showing that any $k$-resilient Nash
equilibrium in $\Gamma_d$ can be implemented in $\Gamma_{\ACT}$ if the
number of players $n$ satisfies $n > 3k$, and proved that the $n > 3k$
bound is tight~\cite{ADH07}. More precisely, if $n \le 3k$, then there
exists a game $\Gamma_d$ for $n$ players and a mediator and a
$k$-resilient Nash equilibrium $\vec{\sigma}$ in $\Gamma_d$ that can't
be implemented in $\Gamma_{\ACT}$. 

Two natural questions follow from these results. First, whether we can
generalize Gerardi's result and give necessary and sufficient conditions on
$n$ and $k$ to implement a $k$-resilient sequential equilibrium (which is
defined analogously to $k$-resilient Nash equilibrium), and second, whether
the $n \ge 5$ upper bound given by Gerardi is actually tight. It is
easy to check that the $n > 3k$ lower bound given in \cite{ADH07} for
Nash equilibrium applies without change to $k$-resilient sequential
equilibria. In particular, if $k = 1$, it shows that at least $4$
players are required to implement a sequential equilibrium, but it was
not known whether $n \ge 5$ is necessary.
Here, we show that we can
always implement 
a $k$-resilient sequential equilibrium in a game with a mediator and $n$ players if $n
> 3k$, matching the lower bound given in \cite{ADH07} and improving
Gerardi's result for $k = 1$. 

Following
Gerardi \nciteyear{Gerardi04} and Gerardi and Myerson \nciteyear{GM05}, we
also relate our results to two other solution concepts in normal-form
games: \emph{correlated equilibrium} \cite{Aumann87} and
\emph{communication equilibrium} \cite{F86,Myerson86}.  Both of these concepts
can be understood in terms of games with mediators.  A correlated
equilibrium is an equilibrium in a game with a mediator where the
players do not talk to the mediator; the mediator simply tells 
the agent which strategy to play.   A \emph{communication
equilibrium} in a Bayesian game is an equilibrium  in a game with
a mediator where the players can tell the mediator their types, and
the mediator then tells the players what strategy to 
play (see
Appendix~\ref{sec:definitions} 
for formal definitions).
\commentout{
As shown by Gerardi \nciteyear{Gerardi04}, if $\Gamma$ is a
normal-form game, the set of sequential equilibria in $\Gamma_d$ is
$\Gamma$ is a Bayesian game, the set of communication equilibria in
$\Gamma_d$ is the set of correlated equilibria of $\Gamma$. These
results can be easily generalized to show that the set of
$k$-resilient sequential equilibria in $\Gamma_d$ are the sets of
$k$-resilient correlated equilibria or the sets of $k$-resilient
communication equilibria of $\Gamma$ depending if $\Gamma$ is a
normal-form or a Bayesian game respectively. Thus, in 
}
Our results give a
characterization of the set $SE_k(\Gamma_{\ACT})$ of outcomes that can be
implemented with $k$-resilient sequential equilibria in
the cheap-talk game
$\Gamma_{\ACT}$; if $\Gamma$ is a normal-form game, then
$SE_k(\Gamma_{\ACT})$ is the set of $k$-resilient correlated equilibria
of $\Gamma$, while if $\Gamma$ is a Bayesian game,
$SE_k(\Gamma_{\ACT})$ is the set of $k$-resilient communication
equilibria of $\Gamma$. 

Sequential equilibrium involves describing not only what the players
do at each point in an extensive-form game, but describing what their
beliefs are, even off the equilibrium path.  It must be shown that
players are always best responding to their beliefs.  The key
difficulty in our proof involves finding appropriate beliefs
that are consistent with the equilibrium strategy.
The proof given by Gerardi involves an extensive case analysis for
both consistency and sequential rationality, which makes it hard to
generalize to other settings. Our proof introduces an interesting new
technique. 
We show that all strategies in $\Gamma_{\ACT}$ (even those that are not
an equilibrium) admit a consistent \emph{$k$-paranoid belief system},
where all coalitions $K$ 
of at most $k$ players always believe that, if
other players deviated, they did so by sending inappropriate messages
only to players in
$K$.  That is, all coalitions of size at most $k$ believe that the
remaining players are being truthful among themselves.
We show that, given a $k$-resilient Nash equilibrium $\vec{\sigma}$,
we can extend 
it to a $k$-resilient sequential equilibrium by constructing a
$k$-paranoid belief 
system.  The idea is that, given these $k$-paranoid beliefs, players
in a coalition 
$K$ will not believe that there is anything that they can do to
prevent the remaining players from playing their part of the equilibrium.
We then apply these results to the $k$-resilient Nash equilibrium of \cite{ADGH06} to get our result.

\fullv{
All of the results presented so far assume a \emph{synchronous}
setting: communication 
proceeds in atomic rounds, and all messages sent during round $r$
are received by round $r+1$.  In an \emph{asynchronous} setting,
there are no rounds and messages sent by the players may take
arbitrarily long to get to their 
recipients. 
Asynchrony is a standard assumption in 
the distributed computing
and cryptography
literature, precisely because many systems
that practitioners care about,
such as markets or the internet,
are asynchronous in practice.
Considering asynchronous systems can have significant implications for how
players will play a game.
For instance, in an online second-price auction, the seller can
benefit from inserting fake transactions whose value is between that
of the
highest and second-highest bid immediately after a new highest bid
is received, thus increasing the second-highest price at no 
cost~\cite{Roughgarden2020}.
This type of
attack can be carried out only in asynchronous or \emph{partially synchronous}
systems (where there is an upper bound on how long messages take to
arrive), since in a synchronous system, all bids are received at the
same time. (In a synchronous system, the seller would have to guess
what the players will bid in order to benefit from a fake transaction.) 
\commentout{
For instance, all blockchain implementations assume \emph{partial
    synchrony}, where there is an upper 
bound on how long messages take to arrive.
}

This example shows that asynchrony
and partial synchrony allow players
to influence the communication pattern in ways that they
can't in a synchronous setting.
Other examples include the fact that,
in a fully asynchronous
system, a player that didn't send a message is indistinguishable from
a player whose message is delayed (which means that players can
safely deviate by not sending messages for a while, which they cannot
do in a synchronous system), 
and that,
in a partially synchronous
system, players may read other players' messages before sending their own and
adapt their strategy accordingly. As a result, 
the number of honest players required to implement a given
functionality is greater in an asynchronous setting. For instance, the
upper bound for asynchronous secure computation is $n > 4k$
\cite{BCG93}, as opposed to $n > 3k$ bound in synchronous
systems proved by Ben-Or, Goldwasser, and Wignerson \nciteyear{BGW88}.  

The simplicity of our approach allows us to generalize our main result
to the asynchronous setting with very little work. 
Given a $k$-resilient sequential equilibria $\vec{\sigma}$ in an
asynchronous game with $n$ players and a mediator, we extend 
implementation given by Abraham et al. \nciteyear{ADGH19} of asynchronous
$k$-resilient Nash equilibria in $\Gamma_{\ACT}$ for $n > 4k$
by constructing a consistent $k$-paranoid belief
system. Reasoning analogous to that used in the synchronous case
shows that the resulting strategy and belief system form a
$k$-resilient sequential equilibria. In the asynchronous setting, this
result is also optimal, since it matches the $n > 4k$ lower
bound given by Geffner and Halpern~\nciteyear{GH21}. We believe that
the techniques presented in this paper can be applied to design
$k$-resilient sequential equilibrium in many other scenarios as well. 
}

The rest of the paper is organized as follows.
In Section~\ref{sec:resilience}, we provide the concepts and
definitions required to understand the main results and its proofs
(some of the most basic definitions can be found in
Appendix~\ref{sec:definitions}). In Section~\ref{sec:results}, we state
the main result and give a short outline of its proof. The
details are given in Section~\ref{sec:proof}.
In Section~\ref{sec:descriptions}, we provide our characterization of
$SE_k(\Gamma_{CT})$.  
\fullv{
In Section~\ref{sec:sync-async}, we carefully define asynchronous
systems and extend the results of
Section~\ref{sec:sync-async} to the asynchronous case; these
results are further extended in Section~\ref{sec:extension} assuming
that players can send arbitrarily precise real numbers in their
messages (rather than just rational numbers, which is what we assume
up to this point).
}
\shortv{These results are extended in Section~\ref{sec:extension} assuming
that players can send arbitrarily precise real numbers in their
messages (rather than just rational numbers, which is what we assume
up to this point). }

\section{$k$-resilient equilibrium}\label{sec:resilience}

In this section, we extend the different types of equilibria that
appear in the literature (which are discussed in
Appendix~\ref{sec:definitions}) to account for deviations of
coalitions of at most $k$ players.
For future reference, in normal-form games, we use $n$ to denote the number of players, $P$ to denote the set of players, $A$ to denote the set of possible action profiles, and $U$ to denote the tuple of utility functions. In Bayesian games, we use $T$ to denote the possible type profiles and $q$ to denote the prior distribution over $T$. Moreover, in extensive-form games, we use $G$ to denote the game tree, $M$ to denote the function that outputs which player moves at each node, and $R$ to denote the tuple of equivalence relations between the nodes in which two nodes $v$ and $w$ are equivalent according to the $i$th relation if $i$ cannot distinguish these between $v$ and $w$ (i.e., if $v$ and $w$ are in the same partition). The full definitions can be found in Appendix~\ref{sec:definitions}.

Intuitively, traditional notions of
equilibria guarantee that no individual player can increase her own
payoff by deviating from the proposed strategy. For each of these
equilibria we consider also its $k$-resilient variant, which states
that no coalition of at most $k$ players can jointly increase their
payoff by deviating, even if they do so in a coordinated way. We also
consider the notion of \emph{strong $k$-resilience}, which guarantees
that no individual player inside the coalition can increase its own
payoff, even at the expense of other coalition members (as opposed to the
original notion, in which no member of the coalition can be worse than
by following the proposed strategy).  

\subsection{$k$-resilient Nash, sequential, Bayesian Nash, and
  communication equilibrium} 

We begin by extending the definition of Nash equilibrium
(Definition~\ref{def:nash-eq}) to $k$-resilient Nash
equilibrium:

\begin{definition}
In a normal-form game $\Gamma$, a (mixed) strategy profile
$\vec{\sigma} = (\sigma_1, \ldots, \sigma_n)$ is a \emph{$k$-resilient
Nash equilibrium} (resp., \emph{strongly $k$-resilient Nash equilibrium})  
if, for all coalitions $K$ 
such that $|K| \le k$,
and all strategies $\vec{\sigma}'_K \in \Delta(A_K)$
$$u_i(\vec{\sigma}_K, \vec{\sigma}_{-K}) \ge u_i(\vec{\sigma}'_K,
\vec{\sigma}_{-K})$$
for some (resp., for all) $i \in K$.
\end{definition}

We can similarly extend the notion of correlated equilibrium
(see Definition~\ref{def:correlated-eq} of
Appendix~\ref{sec:definitions}) to $k$-resilient correlated 
equilibrium: 

\begin{definition}\label{def:k-correlated}
Given a normal-form game $\Gamma = (n, A, U)$, a distribution 
$p \in \Delta(A)$ is a \emph{$k$-resilient correlated equilibrium} (resp., \emph{strongly $k$-resilient correlated equilibrium}) if, for
all subsets 
$K \subseteq P$ such that $|K| \le k$,
and all 
action profiles
  $\vec{a}'_K, \vec{a}''_K$ for
players 
in $K$ such that $p(\vec{a}'_K) > 0$,  $$\sum_{\vec{a} \in \Delta(A) :
    \vec{a}_K = \vec{a}'_K} u_i(\vec{a}'_K, \vec{a}_{-K})p(\vec{a}\mid
\vec{a}_K = \vec{a}'_K) \ge \sum_{\vec{a} \in \Delta(A) : 
  \vec{a}_K = \vec{a}'_K} u_i(a''_K, a_{-K})p(\vec{a}\mid \vec{a}_K = \vec{a}'_K)$$
  for some (resp., for all) $i \in K$.
\end{definition}

Simply put, a distribution 
$p \in \Delta(A)$ 
is a $k$-resilient correlated
equilibrium if no 
subset of at most $k$ players
would be better off by
deviating if they knew that the 
action profile
used was sampled according
to $p$ (and they knew their components of the profile), even if they could
coordinate their deviations. 

The definition of $k$-resilient Nash equilibrium in extended-form
games is equivalent. However, the extension of sequential equilibrium
(Definition~\ref{def:seq-eq}) to $k$-resilient sequential equilibrium
requires additional definitions \cite{ADH13}.  

\begin{definition}
 Let $\Gamma = (P,G,M,U,R)$ be an extensive-form game and $K \subseteq
P$ be a subset of players. We define the equivalence relation $\sim_K$
by $v \sim_K v'$ iff $v \sim_i v'$ for all $i \in K$.%
\footnote{The $\sim_i$ relation defines player $i$'as information
sets in an extensive-form game.  Se Appendix~\ref{sec:definitions} for
the formal definition.} 
The equivalence
classes of $\sim_K$ are called \emph{$K$-information sets}. 
\end{definition}

Intuitively, two nodes are related according to $\sim_K$ if they are indistinguishable for all players $i \in K$. With this, we can extend the notion of belief systems to account for coalitions of players that share information between themselves.

\begin{definition}
A \emph{$k$-belief system} $b$ in a game $\Gamma$ is a function that
maps each $K$-information set $I$ such that $K \subseteq P$ and $|K|
\le k$ to a distribution over the nodes in $I$.
\end{definition}

We can define what it 
means
for a $k$-belief system to be consistent
with $\vec{\sigma}$ just as in the case of 
standard
belief systems.
We say that a $k$-belief system $b$ is consistent with strategy
$\vec{\sigma}$ if there exists a sequence $\vec{\sigma}^1,
\vec{\sigma}^2, \ldots$ of completely-mixed strategies that converges
to $\vec{\sigma}$ such that the beliefs induced by Bayes' rule
converge to $b$. 
With these definitions in hand, we can find the desired generalization
of sequential equilibrium.

\begin{definition}\label{def:k-sequential}
  A pair $(\vec{\sigma},b)$ consisting of a strategy profile
$\vec{\sigma}$ and a $k$-belief system $b$ consistent with $\vec{\sigma}$
is a 
\emph{$k$-resilient sequential equilibrium}
(resp., \emph{strongly $k$-resilient sequential equilibrium}) 
if, for all 
$K \subseteq P$ with $|K| \le k$,
all $K$-information sets $I$, 
and all strategies
$\tau_K$ for players in $K$ 
(w.r.t. $\sim_K$),
$$u_i(\vec{\sigma}, I, b) \ge u_i((\tau_K, \vec{\sigma}_{-K}), I, b)$$
for some (resp., for all) $i \in K$.
\end{definition}

Note that the notions of $1$-resilient correlated equilibrium and $1$-resilient
sequential equilibrium are equivalent to the standard
notions of correlated equilibrium and sequential equilibrium,
respectively. 
Note that, in games with a mediator $d$, since
the mediator never deviates from its strategy, a belief system $b$ is
consistent with strategy $\vec{\sigma} + \sigma_d$ (where
$\vec{\sigma} + \sigma_d$ means that players play strategy profile
$\vec{\sigma}$ and the mediator plays strategy $\sigma_d$) if there
exists a sequence $\vec{\sigma}^1 + \sigma_d, 
\vec{\sigma}^2 + \sigma_d, \ldots$ of completely-mixed strategies that converges
to $\vec{\sigma} + \sigma_d$ such that the beliefs induced by Bayes' rule
converge to $b$. The definition of $k$-resilient sequential
equilibrium in mediator games is identical to the one given in
Definition~\ref{def:k-sequential}, using this definition of consistent
beliefs. Intuitively, this means that the players believe that
the mediator never deviates from its protocol. 

In a Bayesian game with type space $T$ (see
Appendix~\ref{sec:bayesian} for the formal definitions), a
\emph{correlated strategy profile} is a map $\mu : T 
\rightarrow \Delta(A)$. Note that all distributions over strategy
profiles can be viewed as correlated 
strategy profiles, but the converse is not true. For instance, in a game
with two players in which $A_i = T_i = \{0,1\}$ for all $i$, a
correlated strategy profile may consist of both players playing action $t_1 +
t_2 \bmod 2$; this cannot be represented by a strategy profile, since
players must independently 
choose
their action given their type.

The expected utility of player $i$ in a coalition $K$
when playing (a possibly correlated) strategy profile $\vec{\mu}$,
where $T_K$ is the set of possible 
type profiles
of the players in $K$, is
$$u_i^K(\vec{\mu}) = \sum_{\vec{t}_K \in T_K} q(\vec{t}_K) \sum_{\vec{t} \in T}
q(\vec{t}\mid \vec{t}_K ) u_i(\vec{\mu}(\vec{t})),$$ 
where 
$u_i(\vec{\mu}(\vec{t}))$ denotes the expected utility of player
$i$ when an action profile 
is chosen
according to 
$\mu(\vec{t})$.
Intuitively, we are assuming that players in $K$ can share their
types, which is why we condition on $\vec{t}_K$.  
With this, we can extend Bayesian Nash equilibrium
(see Definition~\ref{def:bayesian-nash} in Appendix~\ref{sec:definitions}) to $k$-resilient Bayesian Nash
equilibrium:  

\begin{definition}
In a Bayesian game $\Gamma = (P,T, q, A, U)$, a strategy profile
  $\vec{\mu} := (\mu_1, \ldots, \mu_n)$ 
is a \emph{$k$-resilient Bayesian Nash equilibrium} (resp., \emph{strongly
$k$-resilient Bayeisan Nash equilibrium}) if,  
for
all coalitions $K$ with $|K| \le k$ and
all maps $\mu'_K : T_K \rightarrow
\Delta(A_K)$, 
$$u_i^K(\vec{\mu}) \ge u_i^K(\vec{\mu}_{-K}, \mu'_K)$$
for some (resp., for all) $i \in K$.
\end{definition}

We can similarly generalize communication equilibrium
(see Definition~\ref{def:comm-eq} of
Appendix~\ref{sec:definitions})) to $k$-resilient communication
equilibrium: 

\begin{definition}\label{def:k-comm}
    A
     correlated strategy profile
    $\mu : T \rightarrow \Delta(A)$ is a \emph{$k$-resilient communication
equilibrium} (resp., \emph{strongly $k$-resilient communication equilibrium} of Bayesian Game $\Gamma = (P,T, q, A, U)$ if, for all $K
\subseteq P$
such that $|K| \le k$,
   all $\vec{t}_K \in T_K$,  all $\psi: T_K \rightarrow T_K$, and all
$\varphi : A_K \rightarrow A_K$, we have that $$\sum_{\vec{t}_{-K} \in T_{-K}}
\sum_{\vec{a} \in A} q(t_{-K}, \vec{t}_K)\mu(\vec{a} \mid \vec{t}_{-K},
\vec{t}_K)u_i(\vec{t}_{-K}, \vec{t}_K,\vec{a}) \ge $$ $$\sum_{\vec{t}_{-K}
  \in T_{-K}} \sum_{\vec{a} \in A} q(t_{-K}, \vec{t}_K)\mu(\vec{a} \mid
\vec{t}_{-K}, \psi(\vec{t}_K))u_i(\vec{t}_{-K},
\vec{t}_K,\vec{a}_{-K}, \varphi(\vec{a}_K))$$
for some (resp., for all) $i \in K$.
\end{definition}

As in the case of the standard communication equilibrium, we can
identify a correlated strategy profile $\mu$ with a canonical game
with a mediator in which each player $i$ must send its type $t_i$ to
the mediator, then the mediator samples an action profile $\vec{a}$
from $\mu(\vec{t})$ and sends $a_i$ to each player $i$, and then each
player $i$ plays whatever it received from the mediator.  
In Definition~\ref{def:k-comm}, the functions $\psi$ and
$\varphi$ represent possible deviations for players in $K$ in this canonical game; $\psi$
describes how they might lie to the mediator about their type (by sending type
profile
$\psi(\vec{t}_K)$ 
instead of their actual type profile
$\vec{t}_K$), while $\varphi$ represents how 
they
might deviate in what
they do 
(playing strategy profile $\varphi(\vec{a}_K)$ instead of the
strategy profile $\vec{a}_K$ suggested by the
mediator). Definition~\ref{def:k-comm} says that $\mu$ is a
$k$-resilient (resp., strongly $k$-resilient) communication
equilibrium if none of these deviations is 
profitable for all players (resp., for some player) in $K$. 

\commentout{
In this section, we review the basic concepts and definitions that
are needed for this paper.
\commentout{
We begin by reviewing basic
game-theoretic definitions, and then review some distributed computing 
primitives that are used  in our protocols.
}
We begin by presenting
several types of games that are common in
the literature.  For each of these
games, we also 
introduce some of 
the \emph{solution concepts} that we are interested in. A solution
concept describes what rational agents should do in a game, according
to some theory  of rationality.  Typically, a solution concept
determines for each game a set of \emph{strategy profiles},  
where a strategy profile $\vec{s} = (s_1, \ldots, s_n)$ is a tuple
consisting of one strategy for each agent, and a strategy is 
a description of how each player should
move/act in all ``situations'' in the given game.  (As we shall see,
what counts as a situation depends 
on
the type of game we consider.) 



A normal-form game $\Gamma$ is a tuple $(P, A, U)$, where
$P =
\{1, \ldots, n\}$ is the set of \emph{players}, $A = A_1 \times \cdots 
A_n$, where  $A_i$ is the set of possible 
actions
for
player $i \in P$, and $U = (u,
\ldots, u_n)$ is an $n$-tuple of \emph{utility}
functions $u_i: A \rightarrow \mathbb{R}$, again, one for each player
$i \in P$.  
A
\emph{pure strategy
  profile} 
  $\vec{a}$
  is
  an $n$-tuple of 
  actions
  $(a_1, \ldots,
  a_n)$, with $a_i \in A_i$.
  A \emph{mixed strategy} for player $i$ is
an element of $\Delta(A_i)$, the set of probability distributions
on $A_i$.  We extend $u_i$ to mixed strategy profiles
$\sigma = (\sigma_1, \ldots,
\sigma_n)$ by defining
$u_i(\sigma) = 
\sum_{(a_1,\ldots,a_n) \in A} \sigma_1(a_1)\ldots\sigma_n(a_n) u_i(a_1,
\ldots, a_n)$: that is, the sum over all pure strategy profiles $\vec{a}$ of the
probability of playing $\vec{a}$ according to
$\sigma$ times
the utility 
of $\vec{a}$ to $i$.

\begin{definition}
In a normal-form game $\Gamma$, a (mixed) strategy profile
$\vec{\sigma} = (\sigma_1, \ldots, \sigma_n)$ is a \emph{Nash equilibrium}
if, for all $i \in P$ and strategies 
$\sigma_i' \in \Delta(A_i)$,
$$u_i(\sigma_i, \vec{\sigma}_{-i}) \ge u_i(\sigma'_i,
\vec{\sigma}_{-i}).$$ 
\end{definition}

\begin{definition}
  In a normal-form game $\Gamma = (P, A, U)$, a distribution
    $p \in \Delta(A)$ is a \emph{correlated equilibrium} (CE) if,
  for all  
  $i \in P$
and all $a'_i, a''_i \in A_i$ such that $p(a'_i) >
0$,  $$\sum_{\vec{a} \in \Delta(A) : a_i = a'_i} u_i(a'_i,
\vec{a}_{-i})p(\vec{a} \mid a_i = a'_i) \ge \sum_{\vec{a} \in \Delta(A)
    : a_i = a'_i} u_i(a_i'', 
\vec{a}_{-i})p(\vec{a} \mid a_i = a'_i).$$ 
\end{definition}

Intuitively, for a distribution $p$ over 
action profiles
to be a
correlated equilibrium, it cannot be worthwhile
for player $i$ to deviate from 
$a_i$ to $a_i'$
if $i$ knows only that 
$\vec{a}$ is sampled from distribution $p$ (and $a_i$).
Note that all Nash equilibria are correlated equilibria as well.
We can extend 
these definitions
to 
a setting in which coalitions of $k$ players may deviate in a
coordinated way:

\begin{definition}
In a normal-form game $\Gamma$, a (mixed) strategy profile
$\vec{\sigma} = (\sigma_1, \ldots, \sigma_n)$ is a \emph{$k$-resilient Nash equilibrium}
if, for all coalitions $K$ of size at most $k$, 
all $i \in K$, and all strategies $\vec{\sigma}'_K \in \Delta(A_K)$
$$u_i(\vec{\sigma}_K, \vec{\sigma}_{-K}) \ge u_i(\vec{\sigma}'_K,
\vec{\sigma}_{-K}).$$ 
\end{definition}

\begin{definition}\label{def:k-correlated}
Given a normal-form game $\Gamma = (n, A, U)$, a distribution 
$p \in \Delta(A)$ is a \emph{$k$-resilient correlated equilibrium} if, for
all subsets 
$K \subseteq P$ such that $|K| \le k$,
all $i \in K$,
and all 
action profiles
  $\vec{a}'_K, \vec{a}''_K$ for
players 
in $K$ such that $p(\vec{a}'_K) > 0$,  $$\sum_{\vec{a} \in \Delta(A) :
    \vec{a}_K = \vec{a}'_K} u_i(\vec{a}'_K, \vec{a}_{-K})p(\vec{a}\mid
\vec{a}_K = \vec{a}'_K) \ge \sum_{\vec{a} \in \Delta(A) : 
  \vec{a}_K = \vec{a}'_K} u_i(a''_K, a_{-K})p(\vec{a}\mid \vec{a}_K = \vec{a}'_K).$$ 
\end{definition}
Simply put, a distribution 
$p \in \Delta(A)$ 
is a $k$-resilient correlated
equilibrium if no 
subset of at most $k$ players
would be better off by
deviating if they knew that the 
action profile
used was sampled according
to $p$ (and they knew their components of the profile), even if they could
coordinate their deviations. 

An \emph{extensive-form game} $\Gamma$ is a tuple $(P, G, M, U, R)$, where
\begin{itemize}
  \item
    $P = \{1, \ldots, n\}$ is the set of players.
\item $G$ is a
game 
tree
in which each node represents a possible state of the game and
each of the edges going out of a node an action that a player
can perform in the state corresponding to that node. (In the sequel,
we refer to the edges as actions.)
\item
  $M$ is a 
function that associates with each 
non-terminal
node of $G$ (we let 
$G^\circ$
denote the 
non-terminal
nodes) a player in $P$
that describes which player moves at each of these nodes; if $M(v) =
i$, then all the edges going out of node $v$ must correspond to
actions that player $i$ can play.
\item
$U = (u_1, \ldots, u_n)$ 
is an $n$-tuple of 
utility
functions from the leaves of $G$ to $\mathbb{R}$.
\item
$R = (\sim_1, \ldots, \sim_n)$ is an $n$-tuple of equivalence
relations on the nodes of  
$G^\circ$, one for each player.
Intuitively, if $v \sim_i v'$, then nodes $v$
and $v'$ are indistinguishable to player $i$.
Clearly, if $v
\sim_i v'$ for some $i \in [n]$, 
then the set of possible actions that can be performed at $v$ and $v'$
must be identical and $M(v) = M(v')$
(for otherwise, $i$ would have a basis for
distinguishing $v$ and $v'$).  
\end{itemize}
The equivalence relation $\sim_i$ induces a partition of $G^\circ$ into
equivalence classes called \emph{information sets} (of player $i$).
It is more standard in game theory to define $\sim_i$ as an
equivalence relation only on the nodes where $i$ moves.  We define it
on all 
non-terminal
nodes because if $K$ is a subset of players, we are
then able to define $\sim_K$ as the the intersection of $\sim_i$ for 
$i \in K$.  
Intuitively, $v \sim_K v'$ if the agents in 
$K$ cannot distinguish $v$ from $v'$, 
even if they pool all their information together.
We will need the equivalence relation $\sim_K$ for some of our
definitions below.

A \emph{(pure) strategy}
$s_i$
for player $i$ in an extensive-form game
$\Gamma$ is a function
that maps each 
node $v$ where $i$ moves
to an action in 
$v$,
with the constraint that if two nodes are indistinguishable to $i$, then $s_i$ must choose the same action on both. More precisely, if 
$v \sim_i v'$ then
$s_i(v) = s_i(v')$.
Each pure strategy profile $\vec{s}$
induces a unique path from the root of the tree to some
terminal node
$\ell \in G \setminus G^\circ$ where, given node 
$v$
on the path, the
next node in the path is 
defined by $s_{M(v)}(v)$.
The payoff of player $i$
given strategy $\vec{\sigma}$ is given by $u_i(\ell)$, although we
also write $u_i(\vec{\sigma})$ for simplicity. 
A
\emph{behavioral strategy} for player $i$ 
allows randomization. More precisely, a behavioral strategy
is a function $s_i$ that maps each node 
$v$
such that 
$M(v) = i$
to a  distribution over
$i$'s possible actions at 
$v$ (again, with the requirement that $s_i(v) = s_i(v')$ if $v \sim_i v'$).
We denote by $u_i(\vec{\sigma})$ the
expected payoff of player $i$  if players play (behavioral) strategy 
profile $\vec{\sigma}$. 

Nash equilibrium is defined for extensive-form games just as it is in
strategic-form games.
Given a NE $\vec{\sigma}$, the \emph{equilibrium path} consists of
the nodes of $G$ that can be reached with positive probability
using $\vec{\sigma}$. 
It is well known that in extensive-form games, Nash equilibrium does
not always 
describe what intuitively would be a reasonable play.
For instance,
consider the following game for two players in which $(p_1,p_2)$
describes the payoff of players 1 and 2 respectively: 
\begin{center}
    \includegraphics[]{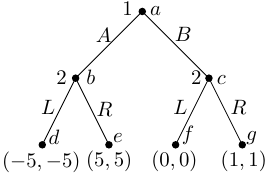}
\end{center}
In this case, the strategy profile in which
player 1 plays $B$ in $a$ and player 2 plays $L$ in $b$ and $R$ in $c$
is a Nash equilibrium. 
However, the reason that it is better for player 1 to play
$B$ is that if it plays $A$ then player 2 would play $L$ and they
would both get a utility of $-5$. This means that player 1 is being
influenced by an irrational threat, since if 1 plays $A$ it is in
player 2's best interest to play $R$ instead of $L$. In order 
to avoid these situations, we can extend the notion of Nash
equilibrium to \emph{subgame-perfect Nash equilibrium}, where
$\vec{\sigma}$ is a subgame-perfect Nash equilibrium in game $\Gamma$
if $\vec{\sigma}$ is a Nash equilibrium in all of subgames of $\Gamma$. In
the example above, the only subgame-perfect equilibrium is the one
given by $\sigma_1(a) = A$, $\sigma_2(b) = R$ and $\sigma_2(c) = R$.

Unfortunately, subgame-perfect equilibrium may not be well-defined
if  players have nontrivial information sets.
Consider the following game, where $b$ and $b'$ are in the same
information set for player 2, as are $c$ and $c'$.
\begin{center}
    \includegraphics[]{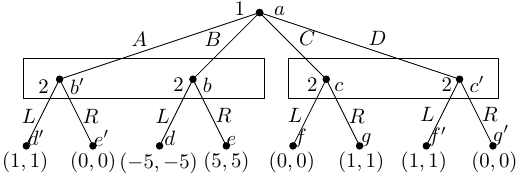}
\end{center}
In $b'$, player 2 would play $L$; 
in $b$, it would play $R$.
For player 2 to decide its
move, it must have a belief about the probability of being at each
node of the information set, and this belief must be consistent with
the strategy $\vec{\sigma}$ being played. For example, if
$\sigma_1(a) = B$
if player 2 is in information set $\{b, b'\}$,
then 2 would be sure that the true node is $b'$. 
We capture the players' beliefs by using a \emph{belief system} $b$, 
which is
a function from information sets $I$ to a probability
distribution over nodes in $I$. Intuitively, if $I$ is an information
set for player $i$, then $b$ represents
how likely it is for $i$ to be in each of the nodes of $I$.
Note that if $\vec{\sigma}$
is \emph{completely mixed}, which means that all actions are taken with positive probability, then
$b$ is uniquely determined by Bayes' rule.
More generally, we 
 say that a belief 
 system $b$ is \emph{consistent}
 with a strategy $\vec{\sigma}$ if there exists a
 sequence  $\vec{\sigma}^1, \vec{\sigma}^2,  \ldots$ of
 completely-mixed strategies that converges to $\vec{\sigma}$ such that the
 beliefs 
induced by Bayes' rule converge to $b$.
Given an extensive-form game $\Gamma = (P,G, M, U, R)$, a strategy
profile $\vec{\sigma}$, a belief system $b$, and an information set
$I$ for player $i$, let $u_i(\vec{\sigma}, I, b)$  denote $i$'s
expected utility conditional on being in information set $I$ and having
belief system $b$, if $\vec{\sigma}$ is played.

\begin{definition}[\cite{KW82}]\label{def:seq-eq}
A pair $(\vec{\sigma},b)$ consisting of a strategy profile
$\vec{\sigma}$ and a belief system $b$ consistent with $\vec{\sigma}$
is a \emph{sequential equilibrium}
if, for all players
$i$, all information sets $I$ of player $i$, and all strategies
$\tau_i$ for player $i$, $$u_i(\vec{\sigma}, I, b) \ge u_i((\tau_i,
\vec{\sigma}_{-i}), I, b).$$
\end{definition}

Even though it is standard to define a sequential equilibrium as a
pair $(\vec{\sigma}, b)$ consisting of strategy profile and a belief
system, for convenience we also say that a strategy profile
$\vec{\sigma}$ is a sequential equilibrium if there exists a belief
system $b$ such that $(\vec{\sigma}, b)$ is a sequential
equilibrium. 

Abraham, Dolev, and Halpern~\nciteyear{ADH13} (ADH) generalized
sequential equilibrium to allow for deviations by up to $k$ players.
To make this precise, we first need to define a generalization of a
belief system.

\begin{definition}
  Let $\Gamma = (P,G,M,U,R)$ be an extensive-form game and $K \subseteq
P$ be a subset of players. We define the equivalence relation $\sim_K$
by $v \sim_K v'$ iff $v \sim_i v'$ for all $i \in K$. The equivalence
classes of $\sim_K$ are called \emph{$K$-information sets}. 
\end{definition}

\begin{definition}
A \emph{k-belief system} $b$ in a game $\Gamma$ is a function that
maps each $K$-information set $I$ such that $K \subseteq P$ and $|K|
\le k$ to a distribution over the nodes in $I$.
\end{definition}
We can define what it 
means
for a $k$-belief system to be consistent
with $\vec{\sigma}$ just as we did in the case of belief systems.
With these definitions in hand, we can find the desired generalization
of sequential equilibrium.

\begin{definition}\label{def:k-sequential}
  A pair $(\vec{\sigma},b)$ consisting of a strategy profile
$\vec{\sigma}$ and a $k$-belief system $b$ consistent with $\vec{\sigma}$
is a 
\emph{$k$-resilient sequential equilibrium}
if, for all 
$K \subseteq P$ with $|K| \le k$,
all $K$-information sets $I$, 
and all strategies
$\tau_K$ for players in $K$ 
(w.r.t. $\sim_K$),
$$u_i(\vec{\sigma}, I, b) \ge u_i((\tau_K, \vec{\sigma}_{-K}), I, b)$$
for all $i \in K$.
\end{definition}

Note that the notions of $1$-resilient correlated equilibrium and $1$-resilient
sequential equilibrium are equivalent to the standard
notions of correlated equilibrium and sequential equilibrium,
respectively. 

\subsubsection{Bayesian Games}

\commentout{
\emph{Bayesian games} extend normal-form games by
assuming that each player $i \in P$ has a \emph{type} 
$t_i \in T_i$.  A player's type can be thought of as private
information that the player has, such as whether he is lazy or
industrious.  In our applications, an agent's type will be its input.
Types are assumed to 
be
sampled from a
distribution $q \in \Delta (T)$, where $T = T_1 \times \cdots \times T_n$.
The utility $u_i$ of a player $i$ is a function of not only the action
profile played, but also of of the type profile $(t_1, \ldots, t_n)$.
Formally, a Bayesian game is a tuple $(P, T, q, A, U)$, where,
as in  normal-form games, 
$P$, $A$, and $U$ 
are  the set of players, their 
actions, 
and their utility
functions, respectively;
$T$ is the set  
of possible type profiles, and $q$ is a distribution in $\Delta(T)$.
}

In all of the previous definitions, the utility of each player is
assumed to be common knowledge (\emph{perfect information}). However,
this is not always the case. Bayesian games capture this idea by
assuming that each player $i$ has a type 
$t_i \in T_i$ sampled from a
distribution $q \in \Delta (T)$, where $T = (T_1, \ldots, T_n)$, and
that the utility $u_i$ of $i$ is not only a function of the action
profile being played, but also of its type $t_i$.
Formally, a \emph{Bayesian games} is a tuple $(P, T, q, A, U)$, where,
as in  normal-form games, 
$P$, $A$, and $U$ 
are  the set of players, their 
actions, 
and their utility
functions, respectively;
$T$ is the set  
of possible type profiles, and $q$ is a distribution in $\Delta(T)$.

A \emph{strategy} in a Bayesian game for player $i$ is a map $\mu_i :
T_i \rightarrow \Delta(A_i)$. Intuitively, 
a strategy in a Bayesian game tells player $i$ how to choose its
action given its type. A \emph{correlated strategy profile} is a map $\mu : T
\rightarrow \Delta(A)$. Note that all distributions over strategy
profiles can be viewed as correlated 
strategy profiles, but the converse is not true. For instance, in a game
for two players in which $A_i = T_i = \{0,1\}$ for all $i$, a
correlated strategy profile may consist of both players playing action $t_1 +
t_2 \bmod 2$; this cannot be represented by a strategy profile, since
players must independently 
choose
their action given their type.  

Since the distribution $q$ is common knowledge, given a (possibly
correlated) strategy profile $\vec{\mu}$ in $\Gamma$, the expected
utility of a member $i$ of a coalition $K$
is
$$u_i^K(\vec{\mu}) = \sum_{\vec{t}_K \in T_K} q(\vec{t}_K) \sum_{\vec{t}}
q(\vec{t}\mid \vec{t}_K ) u_i(\vec{\mu}(\vec{t})),$$ 
where 
$u_i(\vec{\mu}(\vec{t}))$ denotes the expected utility of player
$i$ when an action profile 
is chosen
according to 
$(\mu(\vec{t}))$.
Intuitively, we are assuming that players in $K$ can share their
types, which is why we condition on $\vec{t}_K$.  
This allows us to define $k$-resilient Nash equilibrium in Bayesian
 games: 

\begin{definition}
In a Bayesian Game $\Gamma = (P,T, q, A, U)$, a strategy profile
$\vec{\mu} := (\mu_1, \ldots, \mu_n)$ 
is a $k$-resilient Nash equilibrium if, 
for
all coalitions $K$ with $|K| \le k$,
all maps $\mu'_K : T_K \rightarrow
\Delta(A_K)$, 
and all $i \in K$, 
$$u_i^K(\vec{\mu}) \ge u_i^K(\vec{\mu}_{-K}, \mu'_K)$$
\end{definition}

We can generalize correlated equilibrium to
Bayesian games as
follows. We can view 
a correlated equilibrium as a distribution $p \in \Delta(A)$ such that
if a trusted mediator samples an action profile $\vec{a} \in p$ and
sends action $a_i$ to each player $i$, it is always better for $i$ to
play $a_i$ rather than something else. In a Bayesian game,
the mediator instead samples the action
profile from a distribution that depends on the type profile.
More precisely, suppose that players send their types to a
trusted mediator, the mediator samples an action profile
$\vec{a}$ from a distribution 
$\mu(\vec{t})$
that depends on the type
profile $\vec{t}$ received,
and then sends action $a_i$ to each player $i$. We
say that 
$\mu$ 
is a
\emph{communication equilibrium} if it is optimal 
for the players to (a) tell their true type to the mediator, and (b)
play the action sent by the mediator.
The following definitions make this precise.

\begin{definition}
[\cite{F86,Myerson86}]
  A 
  correlated strategy profile
  $\mu : T \rightarrow \Delta(A)$ is a \emph{communication equilibrium} of
    Bayesian Game $\Gamma = (P,T, q, A, U)$ if, for all $i \in P$, all
    $t_i \in T_i$, all
    $\psi: T_i \rightarrow T_i$
    and all $\varphi : A_i \rightarrow
        A_i$, we have that $$\sum_{\vec{t}_{-i} \in T_{-i}} \sum_{\vec{a} \in A} q(t_{-i},
t_i)\mu(\vec{a} \mid \vec{t}_{-i}, t_i)u_i(\vec{t}_{-i}, t_i,\vec{a})
\ge $$ $$\sum_{\vec{t}_{-i} \in T_{-i}} \sum_{\vec{a} \in A} q(t_{-i},
t_i)\mu(\vec{a} \mid \vec{t}_{-i}, \psi(t_i))u_i(\vec{t}_{-i},
t_i,\vec{a}_{-i}, \varphi(a_i)).$$ 
\end{definition}


\begin{definition}\label{def:k-comm}
    A
     correlated strategy profile
    $\mu : T \rightarrow \Delta(A)$ is a \emph{$k$-resilient communication
equilibrium} of Bayesian Game $\Gamma = (P,T, q, A, U)$ if, for all $K
\subseteq P$, 
   all $\vec{t}_K \in T_K$, 
all $i \in K$, all $\psi: T_K \rightarrow T_K$, and all
$\varphi : A_K \rightarrow A_K$, we have that $$\sum_{\vec{t}_{-K} \in T_{-K}}
\sum_{\vec{a} \in A} q(t_{-K}, \vec{t}_K)\mu(\vec{a} \mid \vec{t}_{-K},
\vec{t}_K)u_i(\vec{t}_{-K}, \vec{t}_K,\vec{a}) \ge $$ $$\sum_{\vec{t}_{-K}
  \in T_{-K}} \sum_{\vec{a} \in A} q(t_{-K}, \vec{t}_K)\mu(\vec{a} \mid
\vec{t}_{-K}, \psi(\vec{t}_K))u_i(\vec{t}_{-K},
\vec{t}_K,\vec{a}_{-K}, \varphi(\vec{a}_K)).$$ 
\end{definition}

In Definition~\ref{def:k-comm}, the functions $\psi$ and
$\varphi$ represent possible deviations for players in $K$; $\psi$
describes how they might lie to the mediator about their type (by sending type
profile
$\varphi(\vec{t}_K)$ 
instead of their actual type profile
$\vec{t}_K$), while $\varphi$ represents how the might deviate in what
they do 
(playing strategy profile $\varphi(\vec{a}_K)$ instead of the
strategy profile $\vec{a}_K$ suggested by the
mediator). Definition~\ref{def:k-comm} says that $\mu$ is a
$k$-resilient communication equilibrium if none of these deviations is
profitable for the players in $K$. 

We can combine the notions of extensive-form game and Bayesian game in
the obvious way to get \emph{extensive-form Bayesian games}: we start
with an extensive form game, add a type space $T$ and a commonly known
distribution $q$ on $T$, and then 
have the utility function  depend on the type profile as well the 
terminal node
reached. We leave formal details to the reader.

\commentout{
\textbf{Actions vs. Strategies}

In the literature, sometimes the elements of the sets $S_i$ in normal-form and Bayesian games are referred to as \emph{strategies} instead of \emph{actions}. We reserve the term \emph{strategy} for a more generic notion of player behavior in the different types of games presented so far. For instance, a pure strategy in a normal-form game is essentially equivalent to an action, a strategy in an extensive-form game  
}
}

\subsection{Cheap-talk games}

Given a normal-form game $\Gamma = (n, A, U)$ 
or a Bayesian game $\Gamma = (P, T, q, A, U)$, 
consider an extension
$\Gamma_{\ACT}$ in which players can communicate
among themselves before
playing an action in $\Gamma$. We call this extension a
\emph{cheap-talk game}. More precisely, in a cheap-talk game
$\Gamma_{\ACT}$, players are able to exchange messages 
and play an action in
$\Gamma$ (although they may play an action in $\Gamma$ at most once).
The payoff of each player is defined by the utilities $U$ of $\Gamma$, given the action 
profile played in $\Gamma$.
In this paper we always assume that messages are \emph{authenticated}, so that each recipient knows who sent the message, and
messages are never corrupted or modified.

We can view cheap-talk games as possibly infinite extensive-form games
in which the nodes
are the
\emph{global histories}
$\vec{h} = (h_1, \ldots,
h_n)$ of the 
players, and the 
actions associated with an edge involve either performing an 
internal computation
(e.g., tossing a random coin or updating the local state with the
messages received), 
sending messages, or playing an
action in the underlying game.  A global history $\vec{h}$ is just a tuple
of local histories, where the local history $h_i$ of player $i$
contains 
all internal computations,
all messages sent by $i$ and received by $i$, along with their time stamps
(in the synchronous case) or the
number of
times that $i$ was scheduled
since the beginning of the game
(in the asynchronous case). Two 
global histories $\vec{h} = (h_1, \ldots, h_n)$ and $\vec{h}' = (h_1', \ldots, h_n')$
are
indistinguishable by player $i$ if 
$h_i = h'_i$.
Analogously, $\vec{h}$ and $\vec{h}'$ are indistinguishable by a coalition $K$ of
players if $\vec{h}_K = \vec{h}'_K$.

A strategy profile $\vec{\sigma}$ in a cheap-talk game
$\Gamma_{\ACT}$ \emph{induces} a strategy profile $\vec{\tau}$ in $\Gamma$ 
if the outcome that results from
playing $\vec{\sigma}$  
in 
$\Gamma_{\ACT}$ 
is
identical
to that that results from playing $\vec{\tau}$
in $\Gamma$. If $\Gamma$ is a normal-form game, the outcome that
results from playing 
$\vec{\sigma}$ is the 
distribution over 
action profiles in $\Gamma$ that results
when $\vec{\sigma}$ is played in $\Gamma_{\ACT}$. If $\Gamma$ is 
a Bayesian game, the outcome is a map from type profiles to
distributions over action profiles 
(i.e., a correlated strategy profile).
%
A strategy profile  $\vec{\sigma}$ \emph{implements} a strategy
profile $\vec{\sigma}'$ 
if $\vec{\sigma}$ and $\vec{\sigma}'$
induce the same strategy profiles in $\Gamma$.

\commentout{
A strategy profile $\vec{\sigma}$ in a cheap-talk game
$\Gamma_{\ACT}$ \emph{implements} a
strategy $\vec{\tau}$ in $\Gamma$ if the outcome that results from
playing $\vec{\sigma}$  
in 
$\Gamma_{\ACT}$ 
is
identical
to that that results from playing $\vec{\tau}$
in $\Gamma$.
%
If $\Gamma$ is a normal-form game, the outcome that results from playing
$\vec{\sigma}$ is the 
distribution over 
action profiles in $\Gamma$ that results
when $\vec{\sigma}$ is played in $\Gamma_{\ACT}$. If $\Gamma$ is 
a Bayesian game, the outcome is a map from type profiles to
distributions over action profiles 
(i.e. a correlated strategy profile).
}
Unless specified otherwise, we assume that all communication in
cheap-talk games is \emph{synchronous}, which means that it proceeds
in rounds and that all messages sent in a given round by each player are
guaranteed to be received by their recipient at the beginning of the
next round.
\fullv{
The analysis of \emph{asynchronous} communication is done
in section~\ref{sec:sync-async}. 
}

These definitions are illustrated by the following example.

\begin{example}\label{ex:comm-games}
  Consider a Bayesian game $\Gamma$ with two players. Each player $i \in
  \{1,2\}$ has a type $t_i \in \{0,1\}$ and can play actions in
$\{0,1\}$. If the action $a_i$ played by player $i$ satisfies $a_i =
  t_1t_2$, then that player gets a utility of 1; otherwise, she gets
  0. Consider a strategy profile $\vec{\sigma}$ for the cheap-talk
extension $\Gamma_{\ACT}$ of $\Gamma$ in which each player sends its type to the
other player in the first round, then the players each locally compute
$t_1t_2$ and play it in the underlying game. It is easy to check that
$\vec{\sigma}$ is a Bayesian Nash equilibrium (both players get the
maximum possible payoff), and that it implements the correlated
strategy profile 
that maps a given type profile $(t_1, t_2)$ to the distribution that assigns probability $1$ to $(t_1t_2, t_1t_2)$ and $0$ to everything else.
\end{example}

\commentout{
\section{Synchrony vs. asynchrony}\label{sec:sync-async}
In the game-theory literature, communication between the players is
assumed to proceed in synchronous rounds.  That is, at each round $t$,
all the players send one message to each other player (we identify not
sending a message with sending a special message $\bot$)
and these messages are received by
their intended recipients before the beginning of round $t+1$.
The assumption that we identify not sending a message with sending
$\bot$ is made without loss of generality---player $j$ can tell if
player $i$ has not sent her a message.
Moreover, it is typically assumed that it is common knowledge when the
communication phase ends and that, after it ends, all the players
simultaneously move in the underlying game.  We  call this the
\emph{synchronous} setting.

Here we also consider an \emph{asynchronous} setting that is quite
standard in the distributed computing literature.  In the asynchronous
setting, 
there is no global notion of time, and messages may take arbitrary
amounts of time to be delivered (although we do assume that all
messages are eventually delivered).
Thus, we can no longer identify not sending a message with sending
$\bot$; if $j$ has not received a message from player $i$, $j$ is not
sure if this is because $i$ did not send $j$ a message or if $i$ sent
a message that has not yet been delivered.
For ease of exposition, we assume
that message delivery is under the control of a
\emph{scheduler}, who 
also decides when
each player gets to move, with the guarantee that all players
eventually get to move if they want to move. In both the synchronous
and the asynchronous setting, we assume that the messages are
\emph{authenticated}, so that each recipient knows who sent the message, and
messages are never corrupted or modified.
If $\Gamma_{\ACT}$ is asynchronous, the payoff of the players may
depend on the strategy $\sigma_e$ of the scheduler.
This means that the definition of implementation and
of the solution concepts must take the scheduler into account.
We extend the definitions of Nash equilibrium, correlated equilibrium,
sequential equilibrium, and communication equilibrium by requiring that the
relevant inequality 
holds for all choices of $\sigma_e$. For example
a strategy profile $\vec{\sigma}$ is a Nash
equilibrium in an asynchronous setting if, for all $i \in P$, all
strategies $\tau_i$ for 
player $i$, and all schedulers $\sigma_e$, $u_i(\vec{\sigma},
\sigma_e) \ge u_i(\tau_i, \vec{\sigma}_{-i}, \sigma_e)$.
Since
the action profile played might depend on the
scheduler, 
a strategy $\vec{\sigma}$ in $\Gamma_{\ACT}$ might induce
more than one strategy in $\Gamma$ (see Example~\ref{ex:race-game}
below). 
A strategy profile $\vec{\sigma}$ for
$\Gamma_{\ACT}$ \emph{implements a set $S$ of strategy profiles in
$\Gamma$} in an asynchronous setting 
if (a)
for every scheduler $\sigma_e$ in $\Gamma_{\ACT}$,
the outcome obtained when playing $\vec{\sigma}$ with scheduler
$\sigma_e$ is the same as that obtained when playing some strategy profile
in $S$, 
and (b) for every strategy profile $\vec{\tau}$ in $S$, there exists a
scheduler $\sigma_e$ such that
the outcome obtained when playing $\vec{\sigma}$ with $\sigma_e$
is the same as that obtained when playing $\vec{\tau}$. Intuitively, 
$\vec{\sigma}$ implements $S$ if the set of
outcomes that result from playing $\vec{\sigma}$ with different
choices of scheduler are precisely those that result from playing
strategy profiles in
$S$. As in the synchronous setting, if $\Gamma$ is a
normal-form game, we take the strategy profiles in $S$ to consist of
distributions over action profiles in $\Gamma$, while if $\Gamma$ is 
a Bayesian game, the strategy profiles in $S$ are 
correlated strategy profiles.

In cheap-talk games in synchronous systems, we assumed (as is
standard in the literature) that the game tree is 
finite and at the last move on a path, the players play an action in the
underlying game; we take the utility of a 
terminal node
to be the
utility of the corresponding action profile in the normal-form game.
We cannot assume
this in asynchronous systems. Indeed, in asynchronous systems, to
allow for players who deviate from an equilibrium, we must consider
game trees with infinite paths. For example, a player $i$ may wait forever
for a message that never arrives, because the sender deviated and
never sent it, so $i$ may never play an action in the underlying game.   We
must thus define what the payoffs are on infinite paths and at the 
terminal node
of
a finite path where some players do not play an action in the
underlying game.    This amounts to treating players who have not
played an action as, in fact, having played some action.
The two main approaches for doing this are the \emph{default-move
approach}
\cite{ADGH19}, 
where if player $i$ never plays an action in the
underlying game,  its action is sampled from a default distribution, and
the \emph{Aumann and Hart} (AH) approach \cite{AH03}, where the
action played by $i$ is some function of $i$'s local history.
The AH approach essentially 
assumes
that players 
can leave a ``will'' defining
the action in the underlying game that they should play if they never
actually play an action in the underlying game while playing the
cheap-talk game.  
In the asynchronous setting
(with both  the AH and the default move approach),
we say that a strategy profile $\vec{\sigma}$ 
in $\Gamma_{\ACT}$
\emph{implements} a strategy profile
$\vec{\tau}$ 
in $\Gamma$
if
the set of strategy profiles implemented by $\vec{\sigma}$ (for
different choices of scheduler)
$\{\vec{\tau}\}$ (thus, playing $\vec{\sigma}$ results in the 
same
outcome no matter what the scheduler does).

The following example illustrates some of the new subtleties that asynchrony introduces.

\begin{example}\label{ex:race-game}
Let $\Gamma$ be a normal-form game for $n$ players where the set
of actions is $\{1,2,\ldots, n\}$, and let $\Gamma_{d,asyn}$ be a cheap-talk
extension of $\Gamma$ in the asynchronous setting, where there is a trusted
mediator $d$ and the AH approach is used.   Suppose that
$\vec{\sigma}$ is the strategy profile that proceeds as follows:
each player sends an empty message to the mediator when
they are first scheduled. The mediator waits until
it receives  a message, and then sends a message to all players
with the index of the player whose message was received
first. When a player $j$ receives a message from the mediator with a
number $i \in \{1,2,\ldots, n\}$, player $j$ plays action $i$. If 
player $j$ never receives a message from the mediator, then according
to its will, $j$ plays action $j$.
The player whose index appears the most often in the resulting action
profile receives a payoff of 1 (if there are ties, they all get 1),
while the remaining players get 0. 

This game can be viewed as a
race. The player whose message gets to the mediator first receives a
payoff of 1 while the rest receive 0. Note that, regardless of the
scheduler, all players play the same action in $\Gamma$.
However, the scheduler decides which action is played. Thus, the
strategy $\vec{\sigma}$ in the example implements the set of
strategies $\{(1,1 \ldots, 1), (2,2,\ldots, 2), \ldots, (n, n, \ldots,
n)\}$. This strategy
is not a Nash equilibrium. To see this,
consider the scheduler $\sigma_e$ that schedules the players
sequentially (first player 1, then player 2, and so on).  
If player $i$ sends two messages to the
mediator and is the only player to do so, then the scheduler delivers
$i$'s first message before it delivers 
any other player's message. If more than one player sends two messages to the
mediator, the schedulers chooses one of these players at
random and delivers her message first.  The remaining messages 
are delivered in some random order.
With this scheduler, the players benefit by deviating and
sending two messages to the mediator instead of just
one. Thus, $\vec{\sigma}$ is not a Nash equilibrium. A similar
argument can be used to show that, for all games $\Gamma$, if
$\vec{\sigma}$ is a $k$-resilient Nash equilibrium in $\Gamma_{\ACT}$,
then the payoffs of the players when playing $\vec{\sigma}$ cannot
depend on the scheduler (see~\cite{ADGH19}). 
\end{example}

This example shows that in asynchronous systems, the set of possible
deviations is much larger than in synchronous
systems. However, there are cases where controlling the
scheduling of the messages and players does not give that much power
to an adversary. For instance, if we consider an asynchronous
cheap-talk extension of Example~\ref{ex:comm-games}, the strategy
profile in which both agents send their type to each other the first
time they are scheduled and play action $t_1t_2$ is a Nash
equilibrium (note that this is almost equivalent to the strategy
proposed in Example~\ref{ex:comm-games}, however there is no notion of
``rounds'' in an asynchronous setting). Many of the
results that hold in the synchronous case also hold in the
asynchronous case, but may require a larger proportion of the players
to be honest. For instance, asynchronous multiparty secure
computation tolerates deviations by at most a quarter of the agents,
while synchronous multiparty secure computation tolerates deviations
by at most a third \cite{BGW88,BCG93}. Our results hold for similar
bounds.  (An
intuitive explanation of why there is a difference between the
thresholds for the synchronous and asynchronous settings can be
found in Appendix~\ref{sec:tools2}.)
}

\section{The main result}\label{sec:results}
\commentout{
Given a normal-form or Bayesian game $\Gamma$, let $\Gamma_{\ACT, syn}$
be the synchronous cheap-talk extension of $\Gamma$ and let
$\Gamma_{\ACT, asyn}$ be its asynchronous extension. Let $\Gamma_{d, syn}$
and $\Gamma_{d, asyn}$ be the synchronous and
asynchronous cheap-talk extensions of $\Gamma$ in which players can
also communicate with a trusted mediator $d$. 
If
$\Gamma_{\ACT}$ is a 
synchronous
cheap-talk game,
let $SE_k(\Gamma_{\ACT, syn})$ denote the set of possible strategies 
in $\Gamma$ induced by $k$-resilient sequential equilibria in
$\Gamma_{\ACT, syn}$ (note that if $\Gamma$ is a normal-form game,
$SE_k(\Gamma_{\ACT, syn})$ is a set of distributions over action profiles,
while if $\Gamma$ is a Bayesian game, $SE_k(\Gamma_{\ACT, syn})$ is a set of
correlated strategy profiles).
If $\Gamma_{\ACT, asyn}$ is asynchronous, we define
$SE_k(\Gamma_{\ACT, asyn})$ analogously, except that in this case,
since the outcome may depend on the 
scheduler
(see the discussion in
Section~\ref{sec:sync-async}), if $\Gamma$ is a normal-form game, then
$SE_k(\Gamma_{\ACT, asyn})$ is a set of sets of strategy profiles, and if
$\Gamma$ is a Bayesian game, then $SE_k(\Gamma_{\ACT, asyn})$ is a set
of sets of correlated strategy profiles. 
\commentout{
We also denote by $CE_k^S(\Gamma)$ the set of possible distributions
over action profiles induced by a strongly $k$-resilient correlated 
equilibrium in $\Gamma$. $Com_k^S(\Gamma)$ and $SE_k^S(\Gamma)$ are
defined analogously. 
}
}

Given a normal-form or Bayesian game $\Gamma$, let $\Gamma_{\ACT}$
be its cheap-talk extension and $\Gamma_{d}$ be its cheap-talk
extension where players can 
also communicate with a trusted mediator $d$. 
Let $SE_k(\Gamma_{\ACT})$ (resp., $SE_k(\Gamma_d)$) denote the set of
possible strategies  
in $\Gamma$ induced by $k$-resilient sequential equilibria in 
$\Gamma_{\ACT}$ (resp., $\Gamma_d)$.  Note that if $\Gamma$ is a normal-form game,
$SE_k(\Gamma_{\ACT})$ is a set of distributions over action profiles,
while if $\Gamma$ is a Bayesian game, $SE_k(\Gamma_{\ACT})$ is a set of
correlated strategy profiles. 
We define $SE_k^S$ equivalently but for strong $k$-resilient sequential equilibria instead.

In the rest of this paper,  we consider only equilibria in
which each action profile is sampled with rational probability. The
reason for this restriction is that it is a standard assumption in distributed
systems that messages have finite length and that players cannot
perform arbitrarily large operations. This means that agents cannot
encode real numbers into messages and they cannot perform operations
on real numbers, in general. If we relax these constraints and allow
players to send messages 
and operate with
arbitrary
real numbers, our results can be
extended to all equilibria inside the convex hull of rational
equilibria, using techniques due to Gerardi
\nciteyear{Gerardi04}.
These techniques are described in Section~\ref{sec:extension}.

\commentout{
Our two main results show that we can implement 
a  (strongly) $k$-resilient equilibrium with a mediator using cheap talk
if $n > 3k$ in synchronous systems
and if $n > 4k$ in asynchronous systems.
}
Our main result shows that we can implement 
a  (strongly) $k$-resilient equilibrium with a mediator and $n$ players using cheap talk
if $n > 3k$.

\begin{theorem}\label{thm:main}
  If $\Gamma = (P, T, q, A,U)$ is a Bayesian game for $n$ players and
  $n > 3k$, then
  $SE_k(\Gamma_{\ACT}) = SE_k(\Gamma_{d})$ 
    and $SE_k^S(\Gamma_{\ACT}) = SE_k^S(\Gamma_{d})$.
\end{theorem}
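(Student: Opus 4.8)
The plan is to establish the two inclusions $SE_k(\Gamma_{\ACT}) \subseteq SE_k(\Gamma_d)$ and $SE_k(\Gamma_d) \subseteq SE_k(\Gamma_{\ACT})$ separately, and likewise for $SE_k^S$. The first inclusion is easy: given a $k$-resilient sequential equilibrium $(\vec{\sigma}, b)$ in $\Gamma_{\ACT}$ that induces the strategy $\mu$ in $\Gamma$, I would let the mediator in $\Gamma_d$ run the uninformative strategy $\sigma_d$ that sends no messages (so its state is always trivially determined) and have the players run $\vec{\sigma}$, ignoring the mediator entirely. Any deviation by a coalition $K$ in $\Gamma_d$, projected onto what $K$ does among the players and in the underlying game, is a deviation available in $\Gamma_{\ACT}$, and, since the mediator contributes nothing to payoffs, it yields the same utilities; moreover every $K$-information set of $\Gamma_d$ collapses to a $K$-information set of $\Gamma_{\ACT}$ with the same belief distribution, so $b$ extends (by the forced point masses on the mediator's state) to a consistent belief system. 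Hence $(\vec{\sigma} + \sigma_d, b)$ is a $k$-resilient sequential equilibrium in $\Gamma_d$ inducing $\mu$, giving $\mu \in SE_k(\Gamma_d)$.

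For the main inclusion, let $(\vec{\sigma}, b)$ be a $k$-resilient sequential equilibrium in $\Gamma_d$ inducing $\mu$. Instantiating the $K$-information set in Definition~\ref{def:k-sequential} with the root shows that $\vec{\sigma}$ is a $k$-resilient Nash equilibrium of $\Gamma_d$. Since $n > 3k$, the implementation of $k$-resilient Nash equilibria due to ADGH \cite{ADGH06} produces a $k$-resilient Nash equilibrium $\vec{\sigma}'$ of $\Gamma_{\ACT}$ inducing the same strategy $\mu$; in $\vec{\sigma}'$ the players simulate the mediator $d$ via a secure multiparty computation protocol that, because $n - k \geq 2k+1$, stays correct as long as at most $k$ players deviate. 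It then remains to find a $k$-belief system $b'$ making $(\vec{\sigma}', b')$ a $k$-resilient sequential equilibrium. Here I would invoke the key technical lemma, proved in Section~\ref{sec:proof}: every strategy profile in $\Gamma_{\ACT}$ admits a consistent \emph{$k$-paranoid} belief system, one in which, at every $K$-information set with $|K|\le k$, the coalition $K$ believes that anything that has gone wrong is due solely to messages that players outside $K$ sent incorrectly to members of $K$ --- equivalently, the players outside $K$ have been behaving correctly among themselves. Take $b'$ to be such a system for $\vec{\sigma}'$.

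The crux is verifying sequential rationality of $(\vec{\sigma}', b')$. Fix a coalition $K$ with $|K|\le k$, a $K$-information set $I$, and a deviation $\tau_K$. By the $k$-paranoid property, at $I$ the coalition $K$ believes the $n-k \geq 2k+1$ players outside $K$ are still faithfully executing the mediator-simulation protocol, and hence --- whatever $K$ does from $I$ on --- that those players will complete the protocol and play exactly the action the simulated mediator prescribes. Consequently $I$ corresponds to a $K$-information set $I_d$ of $\Gamma_d$ describing the same partial interaction with the mediator, and, using robustness of the protocol for $n-k > 2k$, every continuation $\tau_K$ from $I$ translates into a continuation $\tilde{\tau}_K$ from $I_d$ in $\Gamma_d$ with the same distribution over payoffs: inconsistent messages $K$ sends to different honest players are reconciled by the protocol, so their net effect is that of sending the mediator a single (possibly false) report together with playing some actions --- exactly the deviations permitted in $\Gamma_d$. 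Since $(\vec{\sigma}, b)$ is a $k$-resilient sequential equilibrium in $\Gamma_d$, at $I_d$ some $i \in K$ does not gain from $\tilde{\tau}_K$; the same $i$ does not gain from $\tau_K$ at $I$. Therefore $(\vec{\sigma}', b')$ is a $k$-resilient sequential equilibrium inducing $\mu$, so $\mu \in SE_k(\Gamma_{\ACT})$. For $SE_k^S$ the same argument applies mutatis mutandis, replacing ``for some $i \in K$'' by ``for all $i \in K$'' throughout and using the strongly $k$-resilient version of the ADGH implementation.

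The main obstacle is twofold: (i) constructing the consistent $k$-paranoid belief system, which requires a delicate choice of the trembling completely-mixed strategies so that, conditional on reaching any off-path $K$-information set, a deviation aimed only at members of $K$ is infinitely more likely than any deviation occurring among the players outside $K$; and (ii) making the correspondence $I \mapsto I_d$ and the deviation translation $\tau_K \mapsto \tilde{\tau}_K$ rigorous, which is exactly where the structural guarantees of the ADGH secure-computation protocol --- robustness and the effective ``commitment'' of a minority coalition to a single report given an honest super-majority --- are needed.
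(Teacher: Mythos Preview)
Your high-level plan matches the paper's: the easy inclusion is trivial, and the hard one is proved by (i) building a $k$-resilient Nash equilibrium $\vec{\sigma}$ in $\Gamma_{\ACT}$ via secure computation and (ii) upgrading it to a sequential equilibrium using a consistent $k$-paranoid belief system. Where your argument diverges---and, as you yourself flag, where the real gap lies---is in the sequential-rationality step.

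The problematic move is the correspondence $I \mapsto I_d$ and the deviation translation $\tau_K \mapsto \tilde\tau_K$. A $K$-information set $I$ in $\Gamma_{\ACT}$ may sit in the middle of a VSS or circuit-computation round and have no natural counterpart in $\Gamma_d$; even when one exists, you would also need the pushed-forward beliefs $b'(I)$ and $b(I_d)$ to agree, which you do not establish. The paper avoids this entirely. It first replaces $SE_k(\Gamma_d)$ by $Com_k(\Gamma)$ via Proposition~\ref{prop:desc2}, and then argues sequential rationality directly from the $k$-paranoid beliefs: at every $I$, the coalition $K$ believes that (a) the players in $\overline K$ will complete the protocol and play $\vec a_{-K}$ for some $\vec a\sim\mu(\vec t)$ no matter what $K$ does, and (b) $K$ learns nothing beyond $\vec a_K$. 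Point (b) uses property~(2) of Definition~\ref{def:k-paranoid}---that the lies are drawn from a fixed distribution independent of the sender's local history---which your proposal does not mention but is essential (otherwise an off-path message could leak the sender's type). Given (a) and (b), the fact that $\mu$ is a $k$-resilient communication equilibrium yields sequential rationality immediately (Proposition~\ref{prop:seq-rational}); no information-set matching with $\Gamma_d$ is needed or attempted.

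A second point: the paper does not use ADGH as a black box. It writes down a specific four-phase protocol in which (Phase~3) every honest player sends every player $j$ its share of $a_j$, so members of $K$ can reconstruct $\vec a_K$ even if they did nothing, and (Phase~4) a player who failed to share its true type or cannot reconstruct plays a best response assuming the others follow $\mu$. These off-path specifications are precisely what make the direct argument via properties (a) and (b) go through; a generic $k$-resilient Nash implementation need not have them, so invoking ADGH abstractly is not enough.
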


\commentout{
\begin{theorem}\label{thm:main2}
  If $\Gamma = (P,T, q, A,U)$ is a Bayesian game for $n$ players and
  $n > 4k$, then
  $SE_k(\Gamma_{\ACT, asyn}) = SE_k(\Gamma_{d, asyn})$
and 
$SE_k^S(\Gamma_{\ACT, asyn}) = SE_k^S(\Gamma_{d, asyn})$
with both the default move and AH approaches. 
\end{theorem}
}

\commentout{
As shown in Section~\ref{sec:descriptions}, the sets $SE_k(\Gamma_{d,
syn})$, $SE_k^S(\Gamma_{d, syn})$, $SE_k(\Gamma_{d, asyn})$, and (in
some cases) $SE_k^S(\Gamma_{d, asyn})$ have a relatively simple
characterization. For instance, it can be shown that if $\Gamma$ is a
normal-form game, then $SE_k(\Gamma_{d, syn})$ is the set of
$k$-resilient correlated equilibria of $\Gamma$; and if $\Gamma$ is a
Bayesian game, then $SE_k(\Gamma_{d, syn})$ is the set of
$k$-resilient communication equilibria of $\Gamma$. (Gerardi
\nciteyear{Gerardi04} proved these results for $k = 1$.)
}
As shown in Section~\ref{sec:descriptions}, the sets
$SE_k(\Gamma_{d})$ and $SE_k^S(\Gamma_{d})$ have a relatively simple 
characterization. For instance, it can be shown that if $\Gamma$ is a
normal-form game, then $SE_k(\Gamma_{d})$ is the set of
$k$-resilient correlated equilibria of $\Gamma$; and if $\Gamma$ is a
Bayesian game, then $SE_k(\Gamma_{d})$ is the set of
$k$-resilient communication equilibria of $\Gamma$. (Gerardi
\nciteyear{Gerardi04} proved these results for $k = 1$.)

\commentout{

\begin{proposition}\label{prop:desc1}
If $\Gamma = (P,A,U)$ is a normal-form game for $n$ players, then
$SE_k(\Gamma_{d, syn}) = CE_k(\Gamma)$
and 
$SE_k^S(\Gamma_{d, syn}) = CE_k^S(\Gamma)$
for all $k \le n$. 
\end{proposition}

\begin{proof}
  Clearly, the distribution over action profiles that results from
  playing a $k$-resilient 
  sequential equilibrium
(resp., strongly $k$-resilient sequential equilibrium)
$\vec{\sigma}$ in $\Gamma_{d, syn}$ 
is also a $k$-resilient
correlated equilibrium 
(resp., strongly $k$-resilient correlated equilibrium)
of $\Gamma$, for otherwise there exists a coalition
$K$ of at most $k$ players 
such that all (resp., some)
members of the coalition
can increase their 
utility
by deviating
from
$\vec{\sigma}$ by playing a different action in the underlying game.
For the opposite inclusion, observe that all $k$-resilient
(resp., strongly $k$-resilient) 
correlated
equilibria $p$ 
of 
$\Gamma$ can be easily implemented with a mediator
the mediator samples an action profile
$\vec{a}$ following distribution $p$ and gives $a_i$ to each player
$i$. Then each player $i$ plays whatever is sent by the mediator.
\commentout{
and
if it never receives an action during the communication phase, it
plays the best response assuming that all other players play their
part of $\vec{a}$ and $\vec{a}$ is sampled from $p$.
}
\end{proof}

Theorem~\ref{thm:main} and Proposition~\ref{prop:desc1} together
imply the following corollary: 

\begin{corollary}
If $\Gamma = (P,A,U)$ is a normal-form game for $n$ players and $n > 3k$, then
  $SE_k(\Gamma_{\ACT, syn}) = CE_k(\Gamma)$ 
and 
$SE_k^S(\Gamma_{\ACT, syn}) = CE_k^S(\Gamma)$. 
\end{corollary}

\commentout{
\begin{corollary}
If $\Gamma = (P,A,U)$ is a normal-form game for $n$ players, then
$SE_k(\Gamma_{\ACT, asyn}) = CE_k(\Gamma)$
and 
$SE_k^S(\Gamma_{\ACT, asyn}) = CE_k^S(\Gamma)$
for all $k$ such that $n >
4k$, with both the DM and AH approaches. 
\end{corollary}
}

In the asynchronous setting, the description of $SE_k(\Gamma_{d,
  asyn})$ is a bit more convoluted. Reasoning similar to that used in
the proof of
  Proposition~\ref{prop:desc1} shows that for a fixed scheduler, the
  a $k$-resilient (resp., strongly $k$-resilient) sequential
  equilibrium in $SE_k(\Gamma_{d, asyn})$ induces a $k$-resilient (resp.,
strongly $k$-resilient) correlated equilibrium in $\Gamma$. Thus,
$SE_k(\Gamma_{d, asyn}) \subseteq \mathcal{P}(CE_k(\Gamma))$ and
$SE_k^S(\Gamma_{d, asyn}) \subseteq \mathcal{P}(CE_k^S(\Gamma))$,
where $\mathcal{P(S)}$ denotes the power set of $S$ (i.e., the set of
all subsets of $S$). The next proposition gives a precise description
of $SE_k(\Gamma_{d, asyn})$ and $SE_k^S(\Gamma_{d, asyn})$.

\begin{proposition}\label{prop:desc-asyn}
Given a set $S$ of strategies, let $\mathcal{P}_=(S)$ the set of 
  nonempty subsets $S'$ of $S$ such that every element of $S'$ gives the
same expected utility to all players. If $\Gamma = (P,A,U)$ is a
normal-form game for $n$ players and $n \ge k$, then 
$SE_k(\Gamma_{d, asyn}) = \mathcal{P}_=(CE_k(\Gamma))$
and 
$SE_k^S(\Gamma_{d, asyn}) = \mathcal{P}_=(CE_k^S(\Gamma))$.
\end{proposition}
\begin{proof}
In games with a mediator, we write $\sigma_d$ to denote a generic
strategy for the mediator, and $\vec{\sigma} + \sigma_d$ to denote a
strategy profile for the players and the mediator.
To show that $SE_k(\Gamma_{d, asyn}) \subseteq
\mathcal{P}_=(CE_k(\Gamma))$ and $SE_k^S(\Gamma_{d, asyn}) \subseteq
\mathcal{P}_=(CE_k^S(\Gamma))$, 
suppose by way of contradiction that some strategy $\vec{\sigma} +
\sigma_d$ in $SE_k(\Gamma_{d, asyn})$ (resp., $SE_k^S(\Gamma_{d,
  asyn})$) induces a set $S$ of strategies such that, for some player $i$,
there exist two strategies $\vec{\tau}, \vec{\tau}' \in S$ induced
by schedulers $\sigma_e$ and $\sigma_e'$, respectively, such that
$u_i(\vec{\tau}) < 
u_i(\vec{\tau}')$.  Consider a scheduler $\sigma_e''$ that does the
following: it first schedules player $i$. If $i$ sends a message to
itself when it is first scheduled, then $\sigma_e$ acts like
$\sigma'_e$, and otherwise it acts like $\sigma''_e$. By construction,
when the scheduler is $\sigma''_e$, 
$i$ gains by deviating from $\sigma_i$
and sending a message to itself when it is first scheduled.  (This
construction will not work if $\sigma_i$ already requires $i$ to send
a message to itself; in this case, we construct $\sigma''_e$
so that the signal from $i$ to the
scheduler is encoded differently, for example, by $i$ sending two
messages to itself.)  This
shows that $SE_k(\Gamma_{d, asyn}) \subseteq
\mathcal{P}_=(CE_k(\Gamma))$ and $SE_k^S(\Gamma_{d, asyn}) \subseteq
\mathcal{P}_=(CE_k^S(\Gamma))$. 

To prove the opposite inclusions, 
since we consider only equilibria with rational
probabilities, the set $CE_k(\Gamma)$ (resp., $CE_k^S(\Gamma)$) is
countable, as is any subset $S \subseteq CE_k(\Gamma)$ (resp., any
subset $S \subseteq CE_k^S(\Gamma)$). Given $S = \{\vec{\tau}^1,
\vec{\tau}^2, \ldots \}$ such that for all $k$ and players $i$ and
$j$, $\vec{\tau}^k_i = \vec{\tau}^k_j$, consider
the following strategy $\vec{\sigma} + \sigma_d$ in $SE_k(\Gamma_{d,
  asyn})$ (resp., $SE_k^S(\Gamma_{d, asyn})$). Each player sends an
empty message to the mediator when it is first scheduled. Let $N$
be the number of times the mediator that has been scheduled before
receiving a message from some player. The mediator samples
an action profile $\vec{a}$ from $\vec{\tau}^N$ (or from $\vec{\tau}^1$
if $S$ is finite and has fewer than $N$ elements), and sends $a_i$ to
each player $i$. Players play whatever they receive from the
mediator. It is straightforward to check that for each strategy
$\vec{\tau}$ in $S$, there is a scheduler $\sigma_e$ that induces
$\vec{\tau}$ when the players and the mediator play $\vec{\sigma} +
\sigma_d$. To check that $\vec{\sigma} + \sigma_d$ is indeed a
$k$-resilient (resp., strongly $k$-resilient) correlated equilibrium,
note that all strategies in $S$ are $k$-resilient (resp., strongly
$k$-resilient) correlated equilibria of $\Gamma$ that give the same
utility to each player, and thus (a) no coalition of $k$ or less
players can benefit from playing an action profile  different from the
one suggested by the mediator, and (b) regardless of what the
coalition does, the expected 
payoff at the end of the protocol is constant. 
\end{proof}

Theorem~\ref{thm:main2} and Proposition~\ref{prop:desc-asyn} together imply:
\begin{corollary}
  If $\Gamma = (P,A,U)$ is a normal-form game for $n$ players and $n >
  4k$, then
  $SE_k(\Gamma_{\ACT, asyn}) = \mathcal{P}_=(CE_k(\Gamma))$
and 
$SE_k^S(\Gamma_{\ACT, asyn}) = \mathcal{P}_=(CE_k^S(\Gamma))$
with both the DM and AH approaches. 
\end{corollary}

For Bayesian games, we have the following result.

\begin{proposition}\label{prop:desc2}
  If $\Gamma = (P, T, q, A,U)$ is a Bayesian game for $n$ players and
  $n \ge k$, then
  $SE_k(\Gamma_{d, syn}) = Com_k(\Gamma)$ 
and 
$SE_k^S(\Gamma_{d, syn}) = Com_k^S(\Gamma)$ .
\end{proposition}

\begin{proof}
Suppose that $\vec{\sigma}$ is a $k$-resilient (resp., strongly
$k$-resilient) sequential equilibrium of $\Gamma_{d, syn}$. Then the
correlated strategy profile $\mu$ induced by $\vec{\sigma}$ in
$\Gamma$ must be a $k$-resilient (resp., strongly $k$-resilient)
communication equilibrium. If $\mu$ is not a $k$-resilient (resp.,
strongly $k$-resilient) communication equilibrium, then 
there exists a coalition $K$ of players with $|K| \le k$ and two
functions $\psi: T_K \rightarrow T_K$ and $\varphi : A_K \rightarrow
A_K$ such that the inequality of Definition~\ref{def:k-comm} does not
hold for some (resp., for all) $i \in K$. It follows that if
agents in $\Gamma_{d, syn}$ $K$ play $\vec{\sigma}$ as if they had
type profile $\psi(\vec{t}_K)$ instead of their true types, and then
play action $\varphi(\vec{a}_K)$ instead of the action profile
$\vec{a}_K$, the utility of all (resp., some) agents would strictly
increase, which contradicts the assumption that $\vec{\sigma}$ is a
$k$-resilient 
(resp., strongly $k$-resilient) sequential equilibrium of $\Gamma_{d,
  syn}$. For the opposite inclusion, recall from the discussion after
  Definition~\ref{def:k-comm} that we can identify a correlated
strategy profile $\mu$ in $\Gamma$ with a canonical strategy
$\vec{\sigma}$ in $\Gamma_{d, syn}$ in which players tell the mediator
their type and the mediator computes which action profile they should
play according to $\mu$. If $\mu$ is a $k$-resilient (resp., strongly
$k$-resilient) communication equilibrium, then the canonical strategy
$\vec{\sigma}$ is a $k$-resilient (resp., strongly $k$-resilient)
sequential equilibrium of $\Gamma_{d, syn}$ that induces $\mu$ (see
the discussion after Definition~\ref{def:k-comm} for details).
\end{proof}

Theorem~\ref{thm:main} and Proposition~\ref{prop:desc2} together imply:

\begin{corollary}
If $\Gamma = (P, T, q, A,U)$ be a Bayesian game for $n$ players and $n
> 3k$, then 
  $SE_k(\Gamma_{\ACT, syn}) = Com_k(\Gamma)$ 
and 
$SE_k^S(\Gamma_{\ACT, syn}) = Com_k^S(\Gamma)$ .
\end{corollary}

Unfortunately, we do not believe that there is a simple
description of $SE_k(\Gamma_{d, asyn})$.
\commentout{
Asynchrony seems to make things much more complicated.
Reasoning analogous
to that of the proof of Proposition~\ref{prop:desc1} shows that
$SE_k(\Gamma_{d, asyn}) \subseteq Com_k(\Gamma)$, 
but it is still an open problem if all $k$-resilient communication
equilibria of $\Gamma$ can be implemented with a mediator in
asynchronous systems.
}
To understand why, note that a key step in the proofs of
Theorems~\ref{thm:main} and \ref{them:main2} is for players
to send their types to the mediator.
In asynchronous systems, the
mediator cannot  
distinguish between players that don't send their type to the mediator
and players
that send it 
but the receipt of the message is delayed by the scheduler.
Nevertheless, if players can somehow punish those that never send
their type to the mediator, 
\commentout{
the same construction used in the synchronous case would suffice for the asynchronous
case as well. More precisely, 
with the AH approach, all $\mu \in
Com_k(\Gamma)$ can be implemented in $\Gamma_{d, asyn}$ if
$\Gamma$ has a $k$-\emph{punishment equilibrium} with respect to
$\mu$. 
}
we can guarantee that it is optimal for all players to send a type to the mediator, and thus
we can extend the constructions used in the proof of
Propositions~\ref{prop:desc-asyn} and \ref{prop:desc2} to
describe $SE_k(\Gamma_{d, asyn})$ and $SE_k^S(\Gamma_{d, asyn})$.  

\begin{definition}[\cite{Bp03,ADGH06}]
If $\Gamma = (P, T, q, S,U)$ is a Bayesian game for $n$ players and
$\mu \in Com_k(\Gamma)$ 
or $\mu \in Com_k^S(\Gamma)$,
then a  \emph{$k$-punishment equilibrium}
  with respect 
  to $\mu$ is a 
  $k$-resilient Bayesian Nash equilibrium $\mu'$ such that $u_i(\mu) >
  u_i(\mu')$ for all players $i$. 
  \commentout{
  strategy profile $\vec{\tau}$ such that
  \begin{itemize}
      \item [(a)] $\vec{\tau}$ is a $k$-resilient Nash equilibrium of $\Gamma$.
      \item [(b)] If at least $n-k$ players play
        their part of 
        $\vec{\tau}$, all players are guaranteed to get a lower expected payoff than the expected utility of $\mu$. 
  \end{itemize}
}
\end{definition}

Intuitively, a $k$-punishment equilibrium
with respect to $\mu$
is a strategy that is a 
$k$-resilient
equilibrium which, if played by 
at least $n-k$ 
players, results in all
players being worse off than they would be with $\mu$.
If a $k$-punishment equilibrium exists, then 
we have the following.

\begin{proposition}\label{prop:desc3}
\commentout{
  If $\Gamma = (P, T, q, S,U)$ is a Bayesian game,
 $\mu \in Com_k(\Gamma)$
 (resp., $\mu \in Com_k^S(\Gamma)$),
 and 
$\Gamma$ has a $k$-punishment equilibrium $\vec{\tau}$
with respect to $\mu$, then $\mu \in
SE_k(\Gamma_{d, asyn})$ 
(resp., $\mu \in SE_k^S(\Gamma_{d, asyn})$) 
with the AH approach or with the default
move approach if the default move is $\vec{\tau}$. 
}
If $\Gamma = (P, T, q, S,U)$ is a Bayesian game,
$S$ is a subset of $k$-resilient (resp., strongly $k$-resilient)
communication equilibria of $\Gamma$ such that, for all players $i$
and all types $t_i$ of $i$, the expected utility of $i$ given $t_i$ is
the same for all $\mu$ in $S$, and there exists a $k$-punishment
equilibrium  $\vec{\tau}$ with respect to some $\mu$ in $S$, then $S \in
SE_k(\Gamma_{d, asyn})$ (resp., $S \in SE_k^S(\Gamma_{d, asyn})$) with
the AH approach or with the default 
move approach if the default move for player $i$ is $\tau_i$.
\end{proposition}

Note that if $\vec{\tau}$ is a $k$-punishment equilibrium
$\vec{\tau}$ with respect to some $\mu$ in $S$, then it is a
$k$-punishment equilibrium with respect to all $\mu \in S$, since all $\mu$
give the same utility to the players. 

\begin{proof}
Since we are considering only distributions with rational
probabilities, the set $S$ is countable. Let $S = \{\mu^1, \mu^2,
\ldots\}$. Consider a strategy $\vec{\sigma} +  \sigma_d$ in
which each player sends its type to the mediator when it is first 
scheduled and the mediator waits until it receives the type profile
of all players. Let $N$ be the number of times that the
mediator is scheduled before receiving all the types. The
mediator samples an action profile $\vec{a}$ from $\mu_N$ (or from
$\mu_1$ if $S$ is finite and has fewer than $N$ elements), and sends
$a_i$ to each player $i$. Players play whatever they receive from the
mediator. If player $i$ never receives an action, they play $\tau_i$
(either because it is the default action or because it leaves it in
its will). As in the proof of Proposition~\ref{prop:desc-asyn}, it is
easy to check that $\vec{\sigma}$ induces $S$. To see that
$\vec{\sigma}$ is a $k$-resilient (resp., strongly $k$-resilient)
sequential equilibrium, note that a player $i$ cannot gain by not
sending its type to the mediator, since then the 
remaining players will play $\vec{\tau}$, which is strictly worse for
$i$ than
the equilibrium payoff. Moreover, given that all players send an input
to the mediator, since all correlated strategies $\mu \in S$ are
$k$-resilient (resp., strongly $k$-resilient) communication
equilibria, no coalition of $k$ or fewer players can gain
by lying to the mediator or by playing an action other than the
one the mediator sends. 

It remains to show that playing $\tau_i$ is optimal if player $i$
never gets a message from the mediator. If a player $i$
does not
receive an action to play,
then this must be because the mediator did not receive all inputs from
the players, because the mediator does not deviate from its strategy.
Thus, $i$ will believe that no player received a message from the
mediator,  
ans so all players will will play their part of $\vec{\tau}$. Since 
$\vec{\tau}$ is a $k$-resilient Nash equilibrium,
playing $\tau_i$ is optimal. 
\end{proof}

\commentout{
\begin{proof}
  Given 
  $\mu \in SE_k(\Gamma_{d, asyn})$ (resp., $\mu \in SE_k^S(\Gamma_{d, asyn})$),
  consider the strategy $\vec{\sigma}_{\ACT}$ in
$\Gamma_{d, asyn}$ where each player sends its type to the
  mediator, the mediator waits until it receives the type of all
players (note that it may never receive the types of all players),
then computes 
$\vec{a} := \mu(\vec{t})$ and sends $a_i$ to 
each player $i$.
If a player receives an action from the
mediator, it plays that action. If player $i$ never receives an action
from a mediator, it plays $\tau_i$
(either because it is the default action or because it leaves it in its will).
It is easy to check that
$\vec{\sigma}_{\ACT}$ is a $k$-resilient (resp., strongly $k$-resilient) sequential equilibrium: since
$\vec{\tau}$ 
is a $k$-punishment equilibrium with respect to $\mu$, it
is never in the interest of any player not to send its input. Since 
$\mu \in Com_k(\Gamma)$, it is also not in the interest 
of the players to lie about their inputs
or to play an action different from what the mediator
sends. Moreover, if a player $i$ does not receive an action to play,
it will believe that no one else did
(note that if a player receives an action from the mediator, it is guaranteed that all players will also receive one eventually since the mediator cannot deviate from the protocol), and thus that everyone else will play their part of $\vec{\tau}$. Since
$\vec{\tau}$ is a $k$-resilient Nash equilibrium,
playing $\tau_i$ is optimal. 
\end{proof}
}

Proposition~\ref{prop:desc3} implies the following: 

\begin{corollary}
    If $\Gamma = (P, T, q, S,U)$ is a Bayesian game 
and $\mu$ is a $k$-resilient (resp., strongly $k$-resilient)
correlated equilibrium of $\Gamma$ such that  there exists a
$k$-punishment equilibrium  $\vec{\tau}$ w.r.t. $\mu$, then $\{\mu\}
\in SE_k(\Gamma_{d, asyn})$ (resp., $\{\mu\}  \in SE_k^S(\Gamma_{d,
  asyn})$) with the AH approach or with the default 
move approach if the default move is $\vec{\tau}$.
\end{corollary}
}

\subsection{Outline of the proof}

The proof of 
Theorem~\ref{thm:main}
is
divided into two parts. We first 
describe a relatively simple strategy profile $\vec{\sigma}$ in $\Gamma_{\ACT}$
that implements the desired 
strategy profile in $\Gamma_d$
and
tolerates deviations of up to $k$ players.
Moreover, with $\vec{\sigma}$,
no subset
of at most $k$ players can sabotage the joint computation or learn
anything other than what they were originally supposed to learn. These
security properties of $\vec{\sigma}$ guarantee that it 
is a $k$-resilient Nash equilibrium.

The next step in the proof is
extending $\vec{\sigma}$ to a $k$-resilient sequential equilibrium
$\vec{\sigma}^{seq}$. The key idea for doing this is
showing that there exists a belief system $b$ such that all coalitions
$K$ of at most $k$ players believe that all players that deviated from
the protocol did so by sending inappropriate messages only to players in $K$;
that is, each coalition $K$ of size at most $k$ believes
that if $i, j \not \in K$, then $i$ and $j$ always send each other the messages
that they are supposed to send according to $\vec{\sigma}$. 
If players have belief system $b$, then all
coalitions $K$ with $|K| \le k$ believe, regardless of their
local history, that the remaining players will always compute their
part of the sequential equilibrium correctly, since $\vec{\sigma}$
tolerates deviations by up to $k$ players 
and all the players not in $K$
get the  messages that they are supposed to get according to
$\vec{\sigma}$ from all players except possibly those in $K$ (and thus
do not observe deviations by more than $k$ players). 
Moreover, in the construction
of $\vec{\sigma}$, if fewer than $k$ players deviate, all players
eventually receive enough information to compute their own action,
even if they don't take part in the computation. These two
properties are enough to guarantee that players in $K$ cannot gain by
deviating during the communication phase, since (they
believe that) nothing they can do will affect
what the remaining players play in the underlying game, and they
can't learn any additional information. 

Constructing a strategy profile 
that implements a given 
sequential equilibrium
with a mediator is in general nontrivial, since players must not only jointly
compute the action that the mediator sends to each player, but
this must be done in such a way that 
$i$ does not learn anything from the communication phase other than
its own action $a_i$. 
For if $i$ was able to get such information, $i$ might be tempted to
deviate (since it might have more information about what the utility
of 
deviating
would be).

There are well-known techniques in distributed computing for
constructing protocols with the appropriate security properties.
The standard solution (which we also use) involves distributing the
information about the necessary computations between the players in
such a way that each player knows only a part or \emph{share}
of that information.
Players can thus jointly compute the mediator's local history and
simulate everything that the mediator would do,
without leaking any information.
More precisely, if a player $i$ would send its type $t_i$ to the
mediator in the mediator game, in the cheap-talk game
it instead shares $t_i$ among all players in such a way that no
player (or coalition of players) 
can learn anything about it. When all the types have been distributed,
players can then compute the actions 
that the mediator would have sent to each player in the game with the
mediator. However, each player $i$ 
does not directly compute its own action $a_i$, but rather its share of every
action $a_j$. Then each player $i$ sends the share of each action
$a_j$ to player $j$. This guarantees that only $j$ learns $a_j$. 

There are two standard primitives that we use in this construction,
\emph{verifiable secret sharing} (VSS)
and \emph{circuit computation} (CC).  Roughly speaking, VSS takes as
input a value $v$, and distributes to each player $i$ a share $s^i_v$
of $v$.  If we are interested in $k$-resilience, we can arrange for
these shares to have the property that knowing a subset of at most $k$
shares gives no information about $v$, while knowing 
strictly more than $k$
suffices to reconstruct $v$. CC is used to distribute the
shares of the output of a function for which each of the players has
shares of all of its inputs.  More precisely, given a function $f: D^t
\rightarrow D$ and $t$ values $v_1, \ldots, v_t$ such that each player
$i$ knows only its own share $s_{v_j}^i$ of $v_j$, for $j=1, \ldots, t$, the
CC procedure takes as inputs the shares $s_{v_1}^i, \ldots,
s_{v_t}^i$ of each player $i$, and allows player $i$ to compute
its share $s^i$ of $f(v_1, \ldots, v_t)$ in such a way that no player
learns any information about $v_1, \ldots, v_t$. Therefore, VSS
allows players to share private values in a secret way, and CC allows
players to generate (shares of) new values without leaking any
information about all values previously shared.
We define VSS and CC formally in 
Appendix~\ref{sec:tools2},
and show that,
if $n > 3k$,
then
it is possible to implement VSS and CC
in a $k$-resilient way, so that they both tolerate deviations by up to $k$
players.
In the next section, we prove Theorem~\ref{thm:main} assuming that we
can implement VSS and CC.

\section{Proof of Theorem~\ref{thm:main}}\label{sec:proof}

In this section, we prove Theorem~\ref{thm:main} for the case of
$k$-resilient sequential equilibrium. The case of strongly
$k$-resilient sequential equilibrium is similar.

\subsection{Constructing a $k$-resilient Nash equilibrium}\label{sec:nash}

By definition, $SE_k(\Gamma_{\ACT}) \subseteq SE_k(\Gamma_{d})$. It thus remains to show that a
    $k$-resilient
    sequential
equilibrium
with a mediator can be implemented by a 
$k$-resilient sequential equilibrium in the cheap-talk game.
Let $Com_k(\Gamma)$ be the set of $k$-resilient communication equilibria of $\Gamma$.
As we noted in
Section~\ref{sec:results}, $SE_k(\Gamma_{d}) =
Com_k(\Gamma)$, and thus $SE_k(\Gamma_{\ACT}) \supseteq
SE_k(\Gamma_{d})$ reduces to showing that all $k$-resilient
communication equilibria $\mu$ of $\Gamma$ can be implemented with
cheap talk without a mediator.
We begin by constructing a strategy
profile $\vec{\sigma}$ in $\Gamma_{\ACT}$ that is a $k$-resilient
Nash equilibrium and induces the same
correlated strategy profile as
$\mu$.

Given $\mu$, consider the canonical protocol $\vec{\sigma}_{can}$ for $n$ players and a mediator 
in which each player $i$ sends its type $t_i$ to the mediator, and the
mediator then computes the type profile $\vec{t}$ of the players using
the messages received, replacing the types of players that didn't
send theirs by a default type $\bot$.
(Since we are working in synchronous systems for this part of the
proof, the mediator can tell if a player did not send his type.)
The mediator then
samples $\vec{a} \in \Delta(A)$
according to the distribution
$\mu(\vec{t})$ and sends $a_i$ to each player 
$i$.
If $i$ sent its true type at the beginning of the game
and receives an action $a_i$ from the mediator, it plays $a_i$.
Otherwise,
it plays the best response 
assuming that the remaining
players are playing their part of $\mu(\vec{t})$ (note that the best
response is well-defined since both $\mu$ and the distribution $q$ of
type profiles are common knowledge). It is straightforward to check
that $\vec{\sigma}_{can}$ induces the same
correlated strategy profile 
as $\mu$, and that $\vec{\sigma}_{can}$ is a
$k$-resilient sequential equilibrium. The details are left to the
reader.

The idea for constructing $\vec{\sigma}$ is that instead of having a
mediator that receives the types of the players and then computes
$\mu$, the players perform these tasks by themselves using VSS and CC
to help them simulate the 
mediator:
each player $i$, instead of sending its type $t_i$ to the mediator, 
shares it 
among
all the players using a $k$-resilient VSS
implementation. After the number of rounds needed to perform VSS,
players compute $(a_1, \ldots, a_n) := \mu(t_1, \ldots, t_n)$ with CC,
using as input the shares of each of the types $t_i$. Each player $i$
then sends to each player $j$ $i$'s share of $a_j$. Thus, each
player $i$ can compute  $a_i$ without learning anything
about the other players' actions. 

Note that, since $\vec{a}$ is sampled from $\mu(\vec{t})$, players are
required to randomize in a coordinated way.
The problem of jointly sampling an element from a
distribution with rational probabilities over a finite set can be
reduced to the problem of sampling an integer from $[N] := \{1,
\ldots, N\}$ for some $N$, as the following proposition shows.  

\begin{proposition}\label{prop:sample}
Given a 
correlated strategy profile $\mu$
with rational parameters,
there exists an integer $N$ and a function
$\mu^* : T \times [N]
\rightarrow A$
such that, if $X_{\vec{t}}$ is the distribution
over action profiles 
obtained from playing $\mu^*(\vec{t}, r)$
when $r$ is uniformly sampled from $[N]$,
then $X_{\vec{t}} = \mu(\vec{t})$
for  all $\vec{t} \in T$. 
\end{proposition}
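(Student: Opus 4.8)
The plan is to clear denominators. Since $\mu$ is a correlated strategy profile with rational parameters, for each type profile $\vec{t} \in T$ and each action profile $\vec{a} \in A$ the probability $\mu(\vec{t})(\vec{a})$ is a rational number, say $p_{\vec{t},\vec{a}}/q_{\vec{t},\vec{a}}$ in lowest terms. First I would let $N$ be a common multiple of all the denominators $q_{\vec{t},\vec{a}}$ as $\vec{t}$ ranges over $T$ and $\vec{a}$ over $A$ (both finite sets, so $N$ is a well-defined positive integer); for instance take $N = \mathrm{lcm}\{q_{\vec{t},\vec{a}} : \vec{t}\in T,\ \vec{a}\in A\}$, or even just the product of all these denominators. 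Then for each $\vec{t}$, every $\mu(\vec{t})(\vec{a})$ can be written as $m_{\vec{t},\vec{a}}/N$ for some nonnegative integer $m_{\vec{t},\vec{a}}$, and since $\mu(\vec{t})$ is a probability distribution we have $\sum_{\vec{a}\in A} m_{\vec{t},\vec{a}} = N$.

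Next I would define $\mu^*$ by partitioning $[N]$ for each fixed $\vec{t}$: fix an arbitrary ordering $\vec{a}^{(1)}, \vec{a}^{(2)}, \ldots$ of $A$, and assign to $\vec{a}^{(1)}$ the integers $1,\ldots,m_{\vec{t},\vec{a}^{(1)}}$, to $\vec{a}^{(2)}$ the next $m_{\vec{t},\vec{a}^{(2)}}$ integers, and so on; since the $m_{\vec{t},\vec{a}}$ sum to $N$, this gives a well-defined function $\mu^*(\vec{t},\cdot) : [N] \to A$ whose fiber over $\vec{a}$ has exactly $m_{\vec{t},\vec{a}}$ elements. Then for $r$ uniform on $[N]$, the probability that $\mu^*(\vec{t},r) = \vec{a}$ is $m_{\vec{t},\vec{a}}/N = \mu(\vec{t})(\vec{a})$, so the induced distribution $X_{\vec{t}}$ equals $\mu(\vec{t})$, as required.

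There is essentially no obstacle here: the statement is elementary once one observes that finiteness of $T$ and $A$ makes "common denominator" legitimate, and the content is purely the standard reduction of sampling from a finite rational distribution to sampling a uniform integer. The only point requiring any care is bookkeeping — making sure the partition of $[N]$ is consistent for each $\vec{t}$ and that the fiber sizes come out exactly right — which the construction above handles. If one wants a single $N$ that works uniformly in $\vec{t}$ (as the statement demands), taking the least common multiple over all $\vec{t}$ and $\vec{a}$ at the outset suffices, and the padding argument (some $m_{\vec{t},\vec{a}}$ may be zero, which is fine) goes through verbatim.
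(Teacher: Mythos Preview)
Your proposal is correct and takes essentially the same approach as the paper: choose $N$ as a common denominator for all the rational probabilities, then for each $\vec{t}$ partition $[N]$ into blocks whose sizes are proportional to the probabilities $\mu(\vec{t})(\vec{a})$. Your write-up is in fact slightly more explicit than the paper's (you spell out the lcm construction and the consecutive-block partition), but the underlying idea is identical.
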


\begin{proof}
  Fix an integer $N$ such that, 
  for all $\vec{t}$ and all action profiles $\vec{a}$, the 
probability $\mu(\vec{t})(\vec{a})$ is an integer 
  multiple of $1/N$. For each $\vec{t}$, 
partition $N$ into subsets $C_1^{\vec{t}}, C_2^{\vec{t}}, \ldots,
C_m^{\vec{t}}$ such that $|C_i^{\vec{t}}|$ 
  is proportional to the probability of $\mu(\vec{t})(\vec{a}^i)$ for
some indexing $\vec{a}^1, \ldots, \vec{a}^m$ of the action profiles. 
We take $\mu^*(\vec{t}, r)$ to be  $\vec{a}^i$ 
if $r \in C_i^{\vec{t}}$ for some $i$.  It is
straightforward to check that $\mu^*$ satisfies the desired
properties.  
\end{proof}

Players can thus compute the desired action profile using a
deterministic function $\mu^*$, given that they are able to jointly
sample an integer $r \in N$ uniformly at random.

We next present a
formal construction of the protocol $\vec{\sigma}$ described above.
This protocol is an adaptation of the  multiparty computation
protocol of Ben-Or, Goldwasser, and Wigderson
\nciteyear{BGW88} to a setting in which the output of the function
must be sampled from a common distribution. 

\begin{enumerate}
    \item [Phase 1:] Each player $i$ generates a random number $r_i$
            between $1$ and $N$ uniformly at random. Then $i$ shares $r_i$
      and its type $t_i$ using a $k$-resilient VSS procedure. If a
         player $i$ does not share $t_i$ or $r_i$, then $t_i$ and
      $r_i$ are taken to be a default type $\bot \in T_i$ and $0$,
      respectively. 
    \item [Phase 2:] For each player $j$, let $r_j^i$ be $i$'s share
      of $r_j$ and let $t_j^i$ be $i$'s share of $t_j$. Each player $i$
      initiates a $k$-resilient CC of the functions $\mu^*_1, \ldots,
      \mu^*_n$, where $\mu^*_j(\vec{t}, \vec{r})$  is
      the $j$th component of $\mu^*(\vec{t}, r)$ and $r := r_1 +
      \ldots + r_n \bmod N$. Player $i$'s inputs for each of these CC
      instances are the shares $r_1^i, \ldots, r_n^i$ and
      $t_1^i, \ldots, t_n^i$ computed in Phase 1. 
    \item [Phase 3:] For each player $i$, let $o_j^i$ be $i$'s output
            of the CC instance of $\mu^*_j$ (this output is $i$'s share of
            $\mu^*_j(\vec{t}, \vec{r})$). For each player $j$, player
            $i$ sends $o_j^i$ to $j$.             
    \commentout{
    \item [Phase 4:] Whenever $i$ receives at least $2k+1$ shares
      $o_i^j$ from different players such that any subset of size
      $k+1$ of those shares defines the same secret $s_i$, $i$ plays
      $s_i$ and terminates. If $i$ never gets to compute $s_i$, it
      plays a default action $\bot \in S_i$. 
      }
        \item [Phase 4:] If $i$ shared its true type in Phase 1 and
          $i$ receives at least $2k+1$ shares 
          $o_i^j$ from different players such that every subset  of
          those shares of size $k+1$ defines the same secret $a_i$,
          $i$ plays $a_i$ and terminates. Otherwise, $i$ plays a best
          response given 
          its local history,
      assuming that all the remaining players
      shared their true type and
      play their part of the
      communication equilibrium. 
      The optimal response depends on which was the value $t_i$ that $i$ shared in Phase 1 (note that $t_i$ is 0 if $i$ didn't successfully share a type), and which shares of $o_i$ $i$ received from other players. Note that these values give a posterior distribution over the possible types of the other players and the actions that they played since $i$ is assuming that they share their true type $\vec{t}_{-i}$ and
      play their part of the
      communication equilibrium with input $\vec{t}$. 
\end{enumerate}

Since the VSS and CC primitives are $k$ resilient, the protocol above
is explicitly defined in all scenarios in which at most $k$ players
deviate. However, there might be histories in which more than $k$
players deviate and other players are not able to continue because
either they 
didn't receive enough messages or the messages they received were
inconsistent. (For example, players will not able to jointly compute
the functions $u_i^*$ in Phase 2 if insufficiently many players send
messages. This means they 
won't
be able to start Phase 3.) We extend
the protocol to cover all these cases using the following convention:
if a player is not able to continue because the protocol does not
explicitly state what to do in a certain situation, the player does
not take part in subsequent phases. For instance, if a player $i$ is
not able to compute the functions $u_i^*$ in Phase 2, then $i$ does
not take part in 
Phase 3. Note, however, that the action played at the end of Phase 4
is always specified even if a player doesn't complete some of the
previous phases. 

\begin{proposition}\label{prop:nash-eq}
If $n > 3k$, $\vec{\sigma}$ is a $k$-resilient Nash equilibrium and
induces the same 
correlated strategy profile 
as $\mu$. 
\end{proposition}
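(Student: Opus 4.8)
The plan is to establish two things: that $\vec{\sigma}$ induces the same correlated strategy profile as $\mu$, and that $\vec{\sigma}$ is a $k$-resilient Nash equilibrium. For the first claim, I would argue that when all players follow $\vec{\sigma}$, the VSS and CC primitives succeed (there are no deviations, so in particular at most $k$ deviations), so each player $i$ shares its true type $t_i$ and a uniformly random $r_i \in [N]$; by the correctness of CC, each player $i$ obtains its share $o_i^j$ of $\mu^*_j(\vec{t},\vec{r})$ for every $j$, hence in Phase 3 every player $j$ receives enough consistent shares to reconstruct $a_j = \mu^*_j(\vec{t}, r)$ with $r = \sum_i r_i \bmod N$. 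Since each $r_i$ is uniform and independent, $r$ is uniform on $[N]$, so by Proposition~\ref{prop:sample} the induced distribution over action profiles conditional on $\vec{t}$ is exactly $\mu(\vec{t})$. This matches the canonical protocol $\vec{\sigma}_{can}$, which induces the same correlated strategy profile as $\mu$.

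For the equilibrium claim, fix a coalition $K$ with $|K| \le k$ and an arbitrary deviating strategy $\tau_K$. The key point is that, because $n > 3k$, both VSS and CC are $k$-resilient, so no matter what $K$ does: (a) $K$ learns nothing about $\{t_j : j \notin K\}$ or $\{r_j : j \notin K\}$ beyond what is implied by the shares revealed to $K$ and by $K$'s own outputs $\{a_i : i \in K\}$ — the secrecy property of VSS/CC; and (b) every honest player $j \notin K$ still reconstructs its intended action $a_j$ in Phase 4, because the shares $\{o_j^i : i \notin K\}$ it receives from honest players are consistent and of size $n - k \ge 2k+1$, so $j$ plays its part of $\mu^*(\vec{t}', r)$ where $\vec{t}'$ agrees with $\vec{t}$ off $K$ (the types of players in $K$ may be replaced by $\bot$ or by whatever $K$ shared, but from the honest players' viewpoint this is just a legitimate run of the communication equilibrium on some type profile). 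Therefore the only real choices available to $K$ amount to: which types $t_i'$ (or $\bot$) to input for $i \in K$ in Phase 1, which $r_i'$ to contribute, and which action each $i \in K$ finally plays. But this is precisely the set of deviations captured by the functions $\psi$ (lying about types, or sending $\bot$) and $\varphi$ (playing a different action) in Definition~\ref{def:k-comm}. Since $\mu \in Com_k(\Gamma)$, none of these deviations strictly increases the payoff of every member of $K$ — for the strongly $k$-resilient case, the same reduction to Definition~\ref{def:k-comm} gives that no single member of $K$ gains. Contributing $r_i'$ non-uniformly does not help either, since $r = \sum_i r_i \bmod N$ is still uniform from the honest players' perspective as long as at least one honest $r_j$ is uniform and independent, so $K$ cannot bias the sampled action profile.

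The main obstacle I anticipate is making precise claim (b) together with the reduction to Definition~\ref{def:k-comm} — specifically, verifying that whatever a deviating coalition does in Phases 1--3 is observationally equivalent (from the honest players' side) to the coalition honestly running the protocol with some chosen type profile $\psi(\vec{t}_K)$, so that the honest players genuinely end up playing their part of $\mu$ on that type profile, and that the coalition's final play is just $\varphi$ applied to the action profile $\mu$ would have assigned. This requires the robustness guarantees of VSS (a dealer is committed to a well-defined value after the sharing phase, even a malicious one) and of CC (honest players' output shares are consistent and determine the correct function value on the committed inputs). I would invoke the formal $k$-resilience properties of VSS and CC stated in Appendix~\ref{sec:tools2}, and then the argument is essentially the same bookkeeping as in the proof that $SE_k(\Gamma_d) = Com_k(\Gamma)$; the details, as in the excerpt's style, can largely be left to the reader once the reduction is set up.
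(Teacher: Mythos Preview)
Your proposal is correct and follows essentially the same approach as the paper: both argue that the $k$-resilience of VSS and CC guarantees honest players terminate with consistent shares encoding an action profile sampled from $\mu(\vec{t})$ (using that $r$ remains uniform since at least one honest $r_j$ is), that honest players can reconstruct their action from the $\ge 2k+1$ honest shares received in Phase 3, and that secrecy ensures the coalition learns nothing beyond $\vec{a}_K$, so optimality follows from $\mu$ being a $k$-resilient communication equilibrium. Your explicit reduction of the coalition's effective choices to the maps $\psi,\varphi$ of Definition~\ref{def:k-comm} is a bit more detailed than the paper's terse ``playing $a_i$ is optimal,'' but the argument is the same.
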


\begin{proof}
  Clearly $\vec{\sigma}$ induces the same 
correlated strategy profile 
as $\mu$ by construction. The fact that $\vec{\sigma}$ is a
$k$-resilient NE follows directly from the properties of $k$-resilient
VSS and $k$-resilient CC: no matter what a coalition of at most $k$
players does, the remaining $n-k$ players are guaranteed to terminate
all their VSS and CC instances correctly. Since honest players choose
$r_i$ uniformly at random, the number
$r := r_1 + \ldots + r_n \bmod N$ 
is
also randomly distributed in $[N]$ regardless of what a coalition of
at most $k$ players decides to share. This guarantees that the actions
$a_1, \ldots, a_n$ encoded by the shares $(o_1^1, \ldots, o_1^n),
\ldots, (o_n^1, \ldots, o_n^n)$ of honest players are distributed 
according to $\mu(\vec{t})$. The correctness of the output of each
player follows from the fact that if a player $i$ waits until
receiving at least $2k+1$ shares that define the same secret $a_i$, at
least $k+1$ of those shares were sent by honest players and thus
define the correct value $a_i$ (note that if $n > 3k$, all players are
guaranteed to be able to reconstruct their action $a_i$ in Phase 4
even if a coalition of $k$ players 
deviates,
since at least $2k+1$
players are honest). To complete the proof, note that the secrecy
properties of the $k$-resilient VSS and CC guarantee that no coalition
of at most $k$ players can learn anything besides their own actions,
and thus that playing the action $a_i$ computed in Phase 4 is
optimal. 
\end{proof}

\subsection{Constructing a $k$-resilient sequential equilibrium}\label{sec:sequential} 

Proposition~\ref{prop:nash-eq} shows that, for all $\mu \in
Com_k(\Gamma)$, there exists a strategy $\vec{\sigma}$ in
$\Gamma_{\ACT}$ such that $\vec{\sigma}$ is a $k$-resilient Nash
equilibrium that implements $\mu$. Our aim is to extend $\vec{\sigma}$
to a $k$-resilient sequential equilibrium. The difficulty in
constructing such an extension is due to the fact that the $k$-resilience of
VSS and CC guarantees that no coalition of at most $k$ players can
prevent the remaining players from carrying out the VSS and CC
computations 
if
no other player deviates (which is all that is needed
to show that the construction 
gives a $k$-resilient Nash equilibrium). However, it is not clear what
the 
coalition can accomplish starting at a point off the equilibrium path
where they detect that other players are 
deviating as well.  

For instance, consider a normal-form game $\Gamma$ for $n$ players
such that the set of actions for each player is $[n]$. 
The payoffs are as follows:
$\vec{a}$,
if at least $n-k$   coordinates of an action profile $\vec{a}$ are 
different, then 
all players get a payoff of $0$; otherwise, all players get
a payoff of $1$.
Consider the $k$-resilient correlated equilibrium $p$
in which all possible permutations of $[n]$ are played with equal probability,
giving all players a payoff of 0.
Suppose that players play the 
protocol $\vec{\sigma}$ described in Section~\ref{sec:nash}. If at
least $n-k$ of the players are honest, it is guaranteed by
construction that all of them will play different actions by the end
of the protocol, and thus that all players will get a payoff of 0
regardless of what the remaining $k$ players do. However, consider the
case that a coalition $K$ of at most $k$ players follows the protocol but
at some point detects an inconsistency among their local
histories. Since players in $K$ were following the protocol
faithfully, it must be the case that one of the remaining $n-k$
players deviated. If this leads the players in $K$ to believe that
there is a chance that at least two of the remaining players played the
same action $a \in [n]$, it is worthwhile for players in $K$ to
deviate and play $a$ at the end of the protocol. 
\commentout{
which means that 
the payoff of each
player if an action profile in the support of $p$ is 0.
Consider the following situation, where
players play the 
protocol $\vec{\sigma}$ described in Section~\ref{sec:nash}: if a
coalition $K$ of at most $k$ players is being honest but they detect
an inconsistency between their local histories, it must be the case that one of
the remaining $n-k$ players deviated. Since $k$-resilient
implementations of VSS and CC can guarantee the correctness of
the outputs only when at most $k$ players deviate, it is certainly
possible that there exists a strategy for players in $K$ such that,
from this point on,
they get a payoff of 1.
}
If this is the case, then we do not have a sequential equilibrium.

Our approach to showing 
that
$\mu$ can be implemented with a $k$-resilient
sequential equilibrium is 
to show that 
there exists a belief system such that
all 
coalitions $K$ of size at most
$k$ believe, regardless of their local history, that 
\begin{itemize}
\item [(a)] the remaining players will successfully complete their
  part of $\vec{\sigma}$, no matter what the players in $K$ do;
\item [(b)] players in $K$ cannot infer anything about the
  action profile $\vec{a}$ being played other than their own part
  $\vec{a}_K$ (and $\mu$). 
\end{itemize}
It is important to stress that property (b) does in fact depend on the
beliefs of players in $K$: If some player $i$ in $K$ receives a
message $\bot$ from a player $j$ that was not supposed to be sent on
the equilibrium path, if $i$ believes that $j$ sends that message only
when $j$ has type $t_j$ (whether or not this is actually the case),
then $i$ will believe that it has acquired important information when
it receives $\bot$, and adjust its play accordingly.
However, if $i$ believes that $j$ chose that
message uniformly at random, then $i$ will not adjust its play. The
following proposition shows that if properties (a) and (b) are
satisfied, then $\vec{\sigma}$ is $k$-sequentially rational 
\commentout{
(i.e., it
  is never in the interest of coalitions of size at most $k$ to deviate from
$\vec{\sigma}$)  during the whole communication phase. 
}
 (i.e., it
is never in the interest of coalitions of size at most $k$ to deviate from
$\vec{\sigma}$). 

\begin{proposition}\label{prop:seq-rational}
If $b$ is a belief system compatible with $\vec{\sigma}$ that 
  satisfies properties (a) and (b), then $(\vec{\sigma}, b)$ is
a $k$-resilient sequential equilibrium.
\end{proposition}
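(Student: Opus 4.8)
The plan is to show that $\vec{\sigma}$ is $k$-sequentially rational with respect to $b$; since $b$ is assumed compatible (hence consistent) with $\vec{\sigma}$, this gives the claim. So fix a coalition $K$ with $|K| \le k$, a $K$-information set $I$, and an alternative strategy $\tau_K$ for the players in $K$; I must produce some $i \in K$ with $u_i(\vec{\sigma}, I, b) \ge u_i((\tau_K, \vec{\sigma}_{-K}), I, b)$.

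The first step is to use property (a) to pin down the play of the $n - |K| \ge n - k > 2k$ players outside $K$. Conditioned on being at $I$ with beliefs $b$, and regardless of whether $K$ continues with $\vec{\sigma}_K$ or with $\tau_K$, every $j \notin K$ runs $\vec{\sigma}_j$ to completion and, by the construction of $\vec{\sigma}$ in Section~\ref{sec:nash} together with the $k$-resilience of VSS and CC, correctly reconstructs the $j$th component $a_j$ of $\mu^*(\vec{t}, r)$ (where $\vec{t}$ is the profile of types actually shared in Phase~1, with $\bot$ in the coordinates of players who failed to share, and $r = \sum_\ell r_\ell \bmod N$), and moreover sends each $i \in K$ its shares of $a_i$. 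The only features of this computation that $K$ can influence are (i) the types it shares in Phase~1, which are frozen once Phase~1 ends because VSS is binding and thus cannot be altered by anything $K$ does in Phases~2--4, and (ii) the actions $\vec{a}_K$ it plays in Phase~4; since honest players draw their $r_\ell$ uniformly, $r$ is uniform on $[N]$ no matter what $K$ does, so $K$ cannot bias it. Hence, relative to the honest execution, any behaviour of $K$ reduces to a (possibly randomized) report map $\psi : T_K \to T_K$ used in Phase~1 together with a (possibly randomized) action map $\varphi : A_K \to A_K$ used in Phase~4.

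The second step is to apply property (b): at $I$, and at every later point, $K$'s local history tells it nothing about the played action profile beyond its own components $\vec{a}_K$ (and the common-knowledge data $\mu$ and $q$), so the map $\varphi$ above is a legitimate function of $\vec{a}_K$ alone, and the restriction of $b(I)$ to the types of players outside $K$ is just $q(\cdot \mid \vec{t}_K)$. Combining the two steps, the distribution over (true type profile, played action profile) induced by $(\tau_K, \vec{\sigma}_{-K})$ from $I$ under $b$ is precisely the one induced in the canonical mediated game for $\mu$ by the deviation in which players in $K$ report $\psi(\vec{t}_K)$ to the mediator and play $\varphi(\vec{a}_K)$; and following $\vec{\sigma}_K$ from $I$ yields, conditioned on $K$'s information, the honest outcome of $\mu$ (on the equilibrium path, $\psi$ and $\varphi$ are the identity; off it, the Phase~4 best-response clause makes $\vec{\sigma}_K$ a best response against that same distribution). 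Since $\mu$ is a $k$-resilient communication equilibrium, Definition~\ref{def:k-comm} guarantees that no choice of $\psi$ and $\varphi$ strictly raises the conditional expected utility of every member of $K$, so there is some $i \in K$ for which the deviation is not profitable, which is exactly $u_i(\vec{\sigma}, I, b) \ge u_i((\tau_K, \vec{\sigma}_{-K}), I, b)$.

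I expect the main obstacle to be making the reduction in the previous paragraph rigorous for $K$-information sets $I$ that lie off the equilibrium path --- in particular, histories at which $K$ has already misreported in Phase~1, and histories at which (because $b$ is $k$-paranoid) $K$ has seen messages it attributes to deviations directed only at $K$. One must check that at such $I$ the paranoid beliefs still make $K$ certain that the players outside $K$ are computing and playing their part of $\mu$ on the reported types, so that properties (a) and (b) continue to hold there, and that the best-response instruction built into Phase~4 of $\vec{\sigma}$ makes $\vec{\sigma}_K$ already optimal against $b(I)$, so that the communication-equilibrium inequality still dominates the comparison. A secondary point that must be handled carefully, and where the $k$-resilience and secrecy of VSS and CC from Section~\ref{sec:nash} are genuinely used, is that the commitment provided by VSS is real, so that once Phase~1 is over the only remaining degree of freedom available to $K$ is the action map $\varphi$.
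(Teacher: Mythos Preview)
Your proposal is correct and follows essentially the same approach as the paper: use property~(a) to conclude that the players outside $K$ will play $\vec{a}_{-K}$ drawn from $\mu$ on the (possibly misreported) types regardless of what $K$ does, use property~(b) to conclude that $K$ learns only $\vec{a}_K$, and then invoke the $k$-resilient communication-equilibrium inequality to rule out any profitable $(\psi,\varphi)$ deviation. The paper organizes this phase-by-phase (arguing that following $\vec{\sigma}$ is a best response separately in Phases~1--4, and that doing nothing is optimal when $K$ cannot continue), whereas you package the whole deviation into a single $(\psi,\varphi)$ pair and appeal to Definition~\ref{def:k-comm} once; the content is the same. The obstacles you flag in your last paragraph are exactly the points the paper handles via the blanket assumption that properties~(a) and~(b) hold at \emph{every} $K$-information set (on or off path), together with the Phase~4 best-response clause built into $\vec{\sigma}$, so they are not additional work beyond verifying the hypotheses.
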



\begin{proof}
  If $b$ satisfies property (a), then at all nodes in the game tree
  (including nodes off the equilibrium path), no matter what the
  players in $K$ do, 
  the remaining players will eventually terminate their part of
$\vec{\sigma}$ and thus sample an action profile $\vec{a} \in
\mu(\vec{t})$ and play their part $\vec{a}_{-K}$. Property (a) also
guarantees that by the end of Phase 4 (and the end of the
communication phase), players not in $K$ will send the shares of
$\vec{a}_K$ to players in $K$, which means that players in $K$ believe
that they will always learn $\vec{a}_K$. Property (b)
guarantees that players in $K$ won't learn anything besides their own
actions. 
Properties (a) and (b) together imply that players in $K$ believe
that they cannot prevent the remaining 
players from playing their part of an action profile $\vec{a}$ sampled
according to
$\mu$, and that players in $K$ will eventually learn $\vec{a}_K$ and nothing else.
Since $\mu$ is a $k$-resilient
communication equilibrium,
it is always a best response for the players in $K$ to
share their own type in Phase 1, as long as they can continue the
basic protocol.
Similar 
arguments
show that following the basic protocol if
they can (i.e., they receive messages enough consistent with the
protocol) is a best response in phases 2 and 3.  Moreover, it also
follows from properties (a) and (b) that, since the players in $K$
cannot influence the outcome, doing nothing is a best response if they
cannot continue the basic protocol (in all phases). 
It remains to show that the action played in Phase 4 is optimal. By
construction, this
reduces to showing that if a coalition $K$ of players with $|K| \le k$
shared their true type in Phase 1 and were able to compute their
actions $a_i$ in Phase 4, 
it is optimal for them to play $a_i$
(note that otherwise $i$ is already best responding given that the
belief system $b$
satisfies properties (a) and (b)). 
This follows from the fact that, by properties (a) and (b), players in
$K$ believe that all remaining players computed $\vec{a}$ and thus
played $\vec{a}_{-K}$. Since $\mu$ is a
$k$-resilient communication equilibrium, it is then always in the
interest of players in $K$ to play $\vec{a}_K$. 
\end{proof}

\commentout{

It follows from Proposition~\ref{prop:seq-rational} if a belief system
$b$ satisfies properties (a) and (b), strategy $\vec{\sigma}$ can be
extended in a simple way to a $k$-resilient communication equilibrium
$\vec{\sigma}^{seq}$ that implements $\mu$:
each coalition of players
$K$ with $|K| \le k$ plays $\vec{\sigma}_K$ during the communication
and then plays the action in the underlying game  that maximizes their
utility, given their histories and their beliefs.
Thus, provided that we can find a belief system consistent with
$\vec{\sigma}^{seq}$ that has the properties (a) and (b), we are done.

Given $b$, let $(\vec{\sigma})_1, (\vec{\sigma})_2, \ldots$ be a
sequence of completely-mixed strategies in $\Gamma_{\ACT, syn}$ such
that $\lim_{m \rightarrow \infty} (\vec{\sigma})_m  = \vec{\sigma}$
and such that the beliefs induced by $(\vec{\sigma})_m$ converge to
$b$. Let $(\vec{\sigma})_1^{seq}, (\vec{\sigma})_2^{seq}, \ldots$ be a
sequence of protocols such that $(\vec{\sigma})_m^{seq}$ is identical
to $(\vec{\sigma})_m$ except that, whenever players follow the
protocol in Phase 4, they play the best response instead of whatever
they would have played in $\vec{\sigma}$. Let $b^{seq}$ be the limit
of the beliefs systems induced by $(\vec{\sigma})_m$ when $m
\rightarrow \infty$. Then we have the following: 

\begin{proposition}\label{prop:k-seq-eq}
If $b$ is a belief system compatible with $\vec{\sigma}$ that
  satisfies properties (a) and (b), then $(\vec{\sigma}^{seq}, b^{seq})$
is a $k$-resilient sequential equilibrium. 
\end{proposition}

\begin{proof}
The proof reduces to show that the best response whenever a coalition $K$ of players of size at most $k$ follows the protocol and is able to terminate all phases is to play the actions $\vec{s}_K$ computed in Phase 4. To see this, note that, by construction, properties (a) and (b) are also satisfied by $b^{seq}$, which means that players in $K$ always believe that the remaining players were able to sample an action profile $\vec{s} \leftarrow \mu(\vec{t})$ and compute $\vec{s}_{-K}$. If playing $\vec{s}_{-K}$ is their best response, properties (a) and (b) imply that players in $K$ always believe, according to $b^{seq}$, that the remaining players are playing $\vec{s}_{-K}$, which exactly the same that players in $K$ would have believed if they had belief system $b$. Thus, a best response with $b$ is also a best response with $b^{seq}$.
It remains to show that playing $\vec{s}_K$ is a best response with belief system $b$ whenever a coalition $K$ of players of size at most $k$ follows the protocol and is able to terminate all phases successfully. If players in $K$ compute action profile $\vec{s}_K$, by properties (a) and (b), they believe that the remaining players computed $\vec{s}_K$ and thus played $\vec{s}_{-K}$ as described in $\vec{\sigma}$. Since $\mu$ is a $k$-resilient communication equilibrium, it is then always in the interest of players in $K$ to play $\vec{s}_{K}$.
\end{proof}
}

Proposition~\ref{prop:seq-rational} shows that Theorem~\ref{thm:main}
reduces to finding a belief system $b$ compatible with $\vec{\sigma}$
that satisfies properties (a) and (b). 
We
start by showing how such a belief system can be constructed for a
fixed subset $K$ with $|K| \le k$.
Consider the sequence 
$\vec{\sigma}^1, \vec{\sigma}^2, \ldots$ of strategy profiles in which player $i$
proceeds as follows with $\vec{\sigma}^m_i$: Given local history $h_i$,
$i$ 
does exactly what it would have done with $\vec{\sigma}$,
except that if it is supposed to send a message $msg$ to another
player $j$, if 
$j \in K$, with probability $1/m$ it replaces $msg$ by a random
message $msg'$ sampled according to some fixed distribution that gives
all messages positive probability, and sends $msg'$
instead.
If $j \not \in K$, $i$ replaces $msg$ with $msg'$ with probability $m^{-m}$.
That is, players play as in $\vec{\sigma}$, except that they
may lie to players in $K$ with a low probability,
and to players outside of $K$ with a much lower probability. 
By construction, each $\vec{\sigma}^m$ is completely mixed and the belief
$b$ induced by the limits of the beliefs in  
$\vec{\sigma}^1, \vec{\sigma}^2, \ldots$
satisfies the properties (a) and (b)
described above for the set $K$.
\commentout{
\footnote{The strategies $(\vec{\sigma})_1^K,
  (\vec{\sigma})_2^K, \ldots$ are not fully mixed for ease of
  exposition. However, the same properties are satisfied if we assume
  that players  lie also to players not in $K$ but with a much smaller
  probability (for instance $m^{-m}$).}. 
  }
  To see this, note that
property (a) is satisfied because players in $K$ believe that the
remaining $n-k$ players are being truthful between themselves
(note that the probability of anyone lying to a player not in $K$ is
negligible)  
and
thus, since the VSS and CC implementations are $k$-resilient, that
they can carry out all the necessary computations without the help of
players in $K$. Moreover, since players in $K$ believe that if they get a
message incompatible with $\vec{\sigma}$ (i.e., a message
that could not have been according to $\vec{\sigma}$, given their joint
histories), then they are getting a message
drawn from some fixed 
distribution,
they learn no useful
information about the action profile being played by getting this
message. Property (b) follows.

This example shows how to construct a $k$-belief system compatible
with $\vec{\sigma}$ that satisfies properties (a) and (b) for a
particular subset $K$ with $|K| \le k$, but it unfortunately does
not satisfy these properties for all such sets simultaneously. Note
that it is critical that players in $K$ believe that they are the only
ones being lied to, and also that the lies sent to players in $K$
convey no information since they do not depend on the senders' local
histories (in  the example, they are sampled
from a fixed distribution). The next definitions generalizes
these ideas: 

\commentout{
This example
shows that if a subset $K$ of players of size at most $k$ believe that
they are the only ones being lied to, 
then properties (a) and (b) hold for that subset $K$. This
motivates the following definitions:
}

\begin{definition}
\commentout{
  Player $i$ \emph{lies relative to $\vec{\sigma}$ in global history
    $\vec{h}$} if $i$ sends a message to a player $j \ne i$ in $\vec{h}$ that
    is incompatible with $\vec{\sigma}$.  
A subset $K$ of players is \emph{truthful relative to
  $\vec{\sigma}$ in $\vec{h}$} 
  if, in $\vec{h}$, no player in $K$ ever 
lies to another player in $K$.
}
Given history $h_i$ of player $i$, a message $msg$ in $h_i$ from $i$
to $j$ is a \emph{lie}  
if it is incompatible with $\vec{\sigma}$. A subset $K$ of players is \emph{truthful relative to
  $\vec{\sigma}$ in global history $\vec{h}$} 
  if, in $\vec{h}$, no player in $K$ ever 
  \emph{lies} (i.e., sends a message that is a lie) to another player in $K$.
\end{definition}

\commentout{
Note that the message that $i$ was supposed to send to $j$ is uniquely
determined by $i$'s local history, since it includes all of $i$'s
randomization.
}
\commentout{
Note that, given a global history $\vec{h}$, we can determine whether
a message sent by $i$ to $j$ is compatible with $\vec{\sigma}$ even if
$\sigma_i(h_i)$ gives a distribution over messages.  This is because
$h_i$ includes all of $i$'s randomization up to the point where it is
supposed to send a message, so we can tell which message actually
should be sent.
}
Note that since the local history $h_i$ of player $i$ includes all of
$i$'s randomization up to the point where it is 
supposed to send a message, given 
local history $h_i$,
there is a unique
message sent by $i$ to $j$ that is compatible with $\vec{\sigma}$,
even if $\vec{\sigma}$ is a mixed strategy (intuitively, when $i$
sends a message to $j$, all of $i$'s coins have already been tossed
and their outcomes appear in $i$'s history). 
A message from player $i$ to player $j$ in local
history $h_i$ 
is \emph{correct} if it is not a lie (relative to the protocol $\vec{\sigma}$).

\begin{definition}[$k$-paranoid belief system]\label{def:k-paranoid}
Given a $k$-belief system $b$ and the local histories $\vec{h}_K$ of a
  subset $K$ of players, a global history $\vec{h}$ is
\emph{$b$-consistent} with $\vec{h}_K$ if the probability according to $b$ that
players have global history 
$\vec{h}$
conditional on players in $K$ having
history $\vec{h}_K$ is non-zero. 
Given a protocol $\vec{\sigma}$,
a $k$-belief system $b$ is \emph{$k$-paranoid} if
\begin{itemize}
  \item[(1)] for all subsets $K \subseteq [n]$ with $|K| \le k$ and all
histories $\vec{h}_K$ of players in $K$, in all histories $\vec{h}$ that are
$b$-consistent with $\vec{h}_K$, the set $\overline{K}$ of players is
truthful relative to $\vec{\sigma}$ (where $\overline{K}$ is the
complement of $K$);  
\item[(2)] there exists a sequence $\vec{\sigma}^1, \vec{\sigma}^2,
  \ldots$ of protocols that induces $b$ such that the lies relative 
  to $\vec{\sigma}$ sent in $\vec{\sigma}^1, \vec{\sigma}^2, \ldots$
  do not depend on the players' local histories. 
\end{itemize}
\end{definition}

Intuitively, a $k$-belief system is $k$-paranoid if all subsets of
at most $K$ players believe, regardless of their local history, that
they are the only ones being lied to, 
and also that the lies received convey no information.
Our earlier discussion provides a proof of the following proposition.

\begin{proposition}\label{prop:paranoid-implies-properties}
Let $\vec{\sigma}$ be the protocol presented in Section~\ref{sec:nash}. If $b$ is a $k$-paranoid belief system consistent with $\vec{\sigma}$, then $b$ satisfies properties (a) and (b).
\end{proposition}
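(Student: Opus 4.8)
The goal is to show that any $k$-paranoid belief system $b$ consistent with $\vec{\sigma}$ satisfies properties (a) and (b). The key observation is that the two clauses in Definition~\ref{def:k-paranoid} are exactly what is needed: clause (1) says every coalition $K$ with $|K|\le k$ believes, at every one of its information sets, that the complement $\overline K$ has been mutually truthful relative to $\vec{\sigma}$; clause (2) says the lies that are believed to have been sent (to players in $K$) carry no information about anyone's local history. The proof amounts to tracing the consequences of these two facts through the four-phase protocol of Section~\ref{sec:nash}, relying on the $k$-resilience of the VSS and CC primitives.

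First I would establish property (a). Fix a coalition $K$ with $|K|\le k$, a $K$-information set, and a global history $\vec{h}$ that is $b$-consistent with $\vec{h}_K$. By clause (1), in $\vec{h}$ every player in $\overline K$ sends every other player in $\overline K$ exactly the message prescribed by $\vec{\sigma}$. Since $|\overline K| \ge n - k \ge 2k+1$ (using $n > 3k$), the players in $\overline K$ can by themselves run all the $k$-resilient VSS instances in Phase~1, all the $k$-resilient CC instances in Phase~2, and the reconstruction step in Phase~4 --- the $k$-resilience guarantees correct completion of each primitive even when the (at most $k$) players in $K$ deviate arbitrarily or send nothing. Hence the players in $\overline K$ terminate their part of $\vec{\sigma}$, sample $\vec a \in \mu(\vec t)$ correctly, and in Phase~3 send each player $j \in K$ the shares $o_j^i$ that let $j$ reconstruct $a_j$; this holds no matter what the players in $K$ do, which is precisely property (a).

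Next, property (b): players in $K$ cannot infer anything about $\vec a$ beyond their own component $\vec a_K$ and the public data $\mu$. Here I would combine two ingredients. The secrecy of $k$-resilient VSS and CC (established in Appendix~\ref{sec:tools2}) guarantees that, along any history in which everyone followed $\vec{\sigma}$, the joint view of $K$ during Phases~1--3 reveals nothing about $\vec t_{-K}$ or $\vec r_{-K}$ beyond what $\vec a_K$ and $\mu$ already determine. It remains to handle histories off the equilibrium path, where some player outside $K$ is believed to have lied to a player in $K$: this is exactly where clause (2) is used. Because $b$ is induced by a sequence $\vec{\sigma}^1,\vec{\sigma}^2,\ldots$ in which the lies sent do not depend on the senders' local histories, any message a player in $K$ receives that is incompatible with $\vec{\sigma}$ is (believed to be) drawn from a fixed distribution independent of the senders' private inputs; conditioning on such a message therefore does not change $K$'s posterior over $\vec a_{-K}$. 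Combining the two ingredients yields property (b).

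The main obstacle is the off-equilibrium-path part of property (b): one must argue carefully that \emph{every} message a coalition in $K$ could see which is inconsistent with $\vec{\sigma}$ is, under $b$, attributable to the no-information lies of clause (2) rather than to some information-bearing deviation, and that mixing these uninformative lies into the otherwise-secret transcript of VSS/CC does not accidentally leak a share. This requires invoking both clauses of $k$-paranoia together with the secrecy guarantee of the primitives, and checking that the ``$b$-consistent'' histories are exactly those of the form ``$\overline K$ truthful among themselves, arbitrary behavior toward $K$, lies toward $K$ history-independent.'' Fortunately, as the text notes, the informal discussion preceding the proposition already walks through this reasoning, so the formal proof is a matter of assembling those observations; I do not expect any genuinely new difficulty beyond making the conditioning argument precise.
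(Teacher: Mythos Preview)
Your proposal is correct and follows essentially the same approach as the paper, which does not give a separate formal proof but simply states that ``our earlier discussion provides a proof of the following proposition.'' That earlier discussion is exactly the argument you outline: clause~(1) of $k$-paranoia plus $k$-resilience of VSS/CC yields property~(a), and clause~(2) (history-independent lies) combined with the secrecy of the primitives yields property~(b); your write-up is a faithful and somewhat more detailed rendering of that same reasoning.
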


Thus, by 
Proposition~\ref{prop:seq-rational},
Theorem~\ref{thm:main}
reduces to find a $k$-paranoid belief system $b$ that is
consistent with $\vec{\sigma}$. The first candidate for generating
such a belief system is the sequence  $\vec{\sigma}^1,
\vec{\sigma}^2, \ldots$ of strategy profiles  
mentioned earlier, except that with
strategy $\sigma_i^m$, rather than player $i$ lying to player $j$ with 
probability $1/m$
if $j$ is in some fixed set $K$ and with probability $m^{-m}$ if $j$
is not in $K$,
now $i$ lies to all players with probability
$1/m$.
\commentout{
(as opposed 
to our previous example, in which they lied with probability $1/m$
only to players in a fixed subset $K$).  
}
We might
hope that if players detect that they received inappropriate messages, they
would assume that they are the only ones being lied to, since lies are
highly unlikely. For instance, if a player $i$ receives four
inconsistent messages, we would hope that $i$ would believe that these
are the only  lies.   Unfortunately, this is not necessarily the case.
If an earlier lie by some other player could result in players
sending these messages to $i$ according to $\vec{\sigma}$, 
then $i$ would instead believe that there was only 
one
lie.
To prevent this, we want players to believe that earlier lies are much
less likely than 
later
lies.  It turns out that if we do this, then
we can get the result that we want.

\begin{proposition}\label{prop:paranoid}
For all strategies $\vec{\sigma}$ in $\Gamma_{\ACT, syn}$, there
exists a $k$-paranoid belief system $b$ that is consistent with
$\vec{\sigma}$. 
\end{proposition}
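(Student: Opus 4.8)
The plan is to construct the required $k$-paranoid belief system by exhibiting an explicit sequence $\vec{\sigma}^1, \vec{\sigma}^2, \ldots$ of completely mixed strategies converging to $\vec{\sigma}$, where the perturbation probabilities are chosen so that lies sent at later points in the execution are vastly more probable than lies sent at earlier points. Concretely, for each player $i$ and each point in $i$'s local history at which $i$ is supposed to send a message, assign that ``slot'' a depth $d$ equal to the number of messages $i$ has already sent (or some similar monotone measure of how far along the execution we are). In $\vec{\sigma}^m_i$, player $i$ does exactly what $\vec{\sigma}_i$ prescribes, except that at a slot of depth $d$ it replaces the prescribed message $msg$ by a message $msg'$ drawn from a fixed full-support distribution $\rho$ (the same $\rho$ for every slot, independent of $i$'s history) with probability $m^{-f(d)}$, where $f$ is a rapidly increasing function — for instance $f(d) = c^{D-d}$ for a constant $c$ large enough and $D$ an upper bound on the number of send-slots on any path (in the synchronous finite-tree setting such a $D$ exists). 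The key point is that a lie at depth $d$ costs a factor $m^{-f(d)}$, and $f(d') \gg f(d)$ whenever $d' < d$, so replacing an early lie by one or more later lies is always ``cheaper'' in the limit. Each $\vec{\sigma}^m$ is completely mixed because $\rho$ has full support and every message slot is perturbed with positive probability, and $\vec{\sigma}^m \to \vec{\sigma}$ since $m^{-f(d)} \to 0$. This immediately gives condition (2) of Definition~\ref{def:k-paranoid}, since the substituted messages are drawn from the fixed distribution $\rho$ and do not depend on the sender's local history.

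The substance is verifying condition (1): for every $K$ with $|K| \le k$ and every local history $\vec{h}_K$ of the players in $K$, every global history $\vec{h}$ that is $b$-consistent with $\vec{h}_K$ must have $\overline{K}$ truthful relative to $\vec{\sigma}$. I would argue as follows. Fix $\vec{h}_K$. Among all global histories $\vec{h}$ extending $\vec{h}_K$, consider the asymptotic order (in $m$) of $\Pr_{\vec{\sigma}^m}[\vec{h}]$: this probability factors as a product over all send-slots appearing in $\vec{h}$, contributing $1 - O(m^{-f(d)})$ for a slot where the prescribed message was sent and $m^{-f(d)} \rho(msg')$ for a slot where a lie $msg'$ was sent. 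Hence $\Pr_{\vec{\sigma}^m}[\vec{h}] = \Theta(m^{-\sum_{\text{lies in }\vec{h}} f(d_{\text{lie}})})$. The belief $b$ assigns, conditional on $\vec{h}_K$, positive probability exactly to those $\vec{h}$ extending $\vec{h}_K$ that minimize $\sum_{\text{lies}} f(d_{\text{lie}})$ among all global histories consistent with $\vec{h}_K$ (this is the standard computation of a consistent belief as a limit of Bayes-updated beliefs, picking out the slowest-vanishing terms). So I must show: any global history $\vec{h}$ extending $\vec{h}_K$ in which some player in $\overline{K}$ lies to another player in $\overline{K}$ is \emph{not} a minimizer of $\sum_{\text{lies}} f(d_{\text{lie}})$, because there is a competing history $\vec{h}'$ extending the same $\vec{h}_K$ with a strictly smaller penalty sum.

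Here is the mechanism for producing the competitor $\vec{h}'$, and this is the part I expect to be the main obstacle and to require care. Suppose in $\vec{h}$ some $i \in \overline{K}$ lies to some $j \in \overline{K}$ at a slot of depth $d$. The idea is to ``undo'' that lie — have $i$ send the correct message there — and instead explain the portion of $\vec{h}_K$ (the observable part) by lies sent \emph{only to players in $K$}, which by construction can be placed at the deepest possible slots. Because $\vec{h}_K$ only records messages received by members of $K$, and because the protocol $\vec{\sigma}$ is such that any global history can be ``corrected'' among $\overline{K}$ while preserving what $K$ sees (members of $\overline K$ just run $\vec{\sigma}$ honestly among themselves and among themselves-to-$K$ as needed, and whatever discrepancy $K$ observes is attributed to direct lies into $K$), one can reroute: replace the lie $i \to j$ of depth $d$ by honest behavior, and add lies directed into $K$ at depths $\ge d$ — in fact at depth $D$ or as deep as possible — to reproduce $\vec{h}_K$. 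Since $f$ is strictly decreasing in depth and one correct-ification at depth $d$ is replaced by lies at strictly larger depth, $\sum f(d_{\text{lie}})$ strictly decreases. Iterating removes all $\overline K$-internal lies, so the true minimizers have $\overline K$ truthful, which is condition (1). The delicate points are (i) confirming that rerouting a lie preserves $\vec{h}_K$ — i.e. that $\vec{\sigma}$ is ``local enough'' that $\overline K$ can always present the same face to $K$, which should follow from the message-passing structure of $\Gamma_{\ACT, syn}$ and the fact that $\vec{h}_K$ is a function only of messages crossing into $K$; (ii) ensuring the rerouted $\vec{h}'$ is itself a legitimate global history (a path in the game tree), which requires that the depth bookkeeping be consistent with the tree structure — using the number of messages sent so far as the depth, together with a large enough gap constant $c$, handles the worst case where undoing one early lie forces many later lies. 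Once (1) and (2) are in hand, $b$ is $k$-paranoid and consistent with $\vec{\sigma}$, completing the proof.
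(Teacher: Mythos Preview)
Your proposal is correct and follows essentially the same approach as the paper: perturb $\vec{\sigma}$ so that lies at earlier rounds are vastly less probable than lies at later rounds (the paper uses probability $m^{-(2n)^{-r}}$ at round $r$, which is the same mechanism as your $m^{-f(d)}$ with $f$ decreasing), and then argue that any global history containing a lie within $\overline{K}$ is dominated by a competitor with the same $K$-view. The paper's competitor construction---take the \emph{first} round $r$ with an $\overline{K}$-internal lie, keep all messages to $K$ at round $r$ unchanged, correct all messages within $\overline{K}$ at round $r$, and from round $r+1$ on keep the messages into $K$ identical to those in $\vec h$---makes your ``delicate points'' (i) and (ii) transparent and reduces the comparison to a lexicographic one on the per-round lie-count vector (equivalently your penalty sum), so you should tighten your sketch along those lines; note also the minor slip that your $f$ must be \emph{decreasing} in $d$ (as your example $c^{D-d}$ indeed is), not ``rapidly increasing.''
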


\begin{proof}
  Consider the sequence $\vec{\sigma}^1, \vec{\sigma}^2,
    \vec{\sigma}^3, \ldots$ of strategy profiles where
in $\vec{\sigma}^n$ players 
act
exactly as they do in 
$\vec{\sigma}$, except that whenever $i$ would send a message $msg$ to
another player $j$, 
with probability $m^{-\left((2n)^{-r}\right)}$ (where $r$ is the
current round of communication), $i$ replaces this message by a 
message $msg'$ chosen according to some fixed distribution and sends
$msg'$ instead.
Intuitively, these are protocols where each player lies to
other players with a probability that (greatly) increases with the round
number and that decreases with $m$.
It is straightforward to check that 
\commentout{
the belief assessments induced by
this sequence of strategy profiles
}
$\vec{\sigma}^1, \vec{\sigma}^2,
    \vec{\sigma}^3, \ldots$
    converges to $\vec{\sigma}$.
  Let $b$ be the belief system induced by the sequence.  We want to
    show that $b$ is $k$-paranoid.
\commentout{
We need some
preliminary lemmas.
}
By construction, $b$ satisfies property (2) of Definition~\ref{def:k-paranoid} since lies in $\vec{\sigma}^1, \vec{\sigma}^2,
    \vec{\sigma}^3, \ldots$ are sampled from a fixed distribution. To
        show that it also satisfies property (1), we need some
        preliminary lemmas.  

Given a global history $\vec{h}$ of
$\vec{\sigma}$,
let  
$L(\vec{h})$
be the sequence $(\ell_1, \ell_2, \ldots, \ell_R)$, where $R$ is the
total number of communication rounds in $\vec{h}$ and $\ell_r$ is the number
of lies sent at round $r$ in $\vec{h}$.

\begin{lemma}\label{lemma:aux}
If $P^m$ is the probability on global histories induced by
$\vec{\sigma}^m$, and $\vec{h}$ and $\vec{h}'$ are global history
profiles 
such that  
$L(\vec{h}) < L(\vec{h}')$
in lexicographical order (the smaller sequence is the one that has the
smaller value in the first position in which they differ), then
$\lim_{m \rightarrow \infty} \frac{P^m(\vec{h}')}{P^{m}(\vec{h})} = 0$.
\end{lemma}

\begin{proof}
If
$L(\vec{h}) = (\ell_1, \ell_2, \ldots, \ell_R)$,
then $$\log(P^m(\vec{h})) = -\log(m)\left(\sum_{r = 1}^n \ell_r
(2n)^{-r}\right) + \Theta(1),$$ 
%
where $\Theta(1)$ denotes an expression that is bounded by some constant.
Thus, if $L(\vec{h}) = (\ell_1, \ell_2, \ldots, \ell_R)$ and $L(\vec{h}') =
(\ell'_1, \ldots, \ell'_{R'})$, we must show that if
$L(\vec{h}) < L(\vec{h}')$, then $\sum_{r = 1}^{R'} \ell_r (2n)^{-r} < \sum_{r =
    1}^{R'} \ell'_r (2n)^{-r}$, which is equivalent to showing that $\sum_{r =
  1}^{\max(R,R')} (\ell'_r - \ell_r) (2n)^{-r} > 0$ 
(taking the elements beyond the end of a sequence to be 0). 

Since each player sends exactly one message to each other player, we have
that $0 \le \ell_i \le n$. Therefore, if $r^*$ is the first round such
that $\ell'_r \not = \ell_r$, we have that $$\sum_{r = 1}^{\max(R,R')}
(\ell'_r - \ell_r) (2n)^{-r} \ge (2n)^{-r^*} - n \sum_{r = r^* +
  1}^{\max(R, R')}  
(2n)^{-r}
= (2n)^{-r^*}\left(1 - n\sum_{r = 
    1}^{\max(R, R')-r^*} (2n)^{-r} \right),$$
  and
   $$1 - n\sum_{r = 1}^{\max(R, R')-r^*} (2n)^{-r} > 1 - n\sum_{r > 0}
  (2n)^{-r} > 1 - n\sum_{r > 0} 2^{-r} n^{-1} = 0 $$ 
\end{proof}

\begin{lemma}\label{lemma:aux2}
  If $|K| \le k$ and $\vec{h}_K^*$
is a local history of the players in $K$, then 
for every global history profile $\vec{h}$ such that
$\overline{K}$ is not truthful in $\vec{h}$, $$\lim_{m \rightarrow \infty}
P^m(\vec{h} \mid \vec{h}_K^*) = 0.$$ 
\end{lemma}

\begin{proof}
  Suppose that for some global history $\vec{h}$, $P^m(\vec{h} \mid
  \vec{h}_K^*) > 0$ and $\overline{K}$ is not truthful in $\vec{h}$. Since  
$$P^m(\vec{h} \mid \vec{h}_K^*) = \frac{P^m(\vec{h})}{\sum_{\vec{h}' : \vec{h}'_K = \vec{h}_K^*} P^m(\vec{h}')},$$ by 
Lemma~\ref{lemma:aux}, this result reduces to finding a global history
 $\vec{h}'$ such that  
 $\vec{h}'_K = \vec{h}_K$ and
 $L(\vec{h}') < L(\vec{h})$.
 This history can be constructed as follows: suppose that $r$ is the
 first round in $\vec{h}$ where a player $i \in \overline{K}$ lies to
 another player $j \in \overline{K}$. Consider a history $\vec{h}'$ in
 which  
 players in $K$ perform the same internal computations and send
 exactly the same messages as in $\vec{h}$, but players in
 $\overline{K}$ only do so until round $r-1$. In round $r$, players in
 $\overline{K}$ perform exactly the same internal computations as in
 $\vec{h}$ and send the same messages to players in $K$, but they send
 \commentout{
 the correct messages to other players in  
 $\overline{K}$.
 }
 to other players in $\overline{K}$ the \emph{correct} messages (i.e.,
 the messages 
  players in $\overline{K}$ were supposed to send according to the
  protocol).  Notice that since a player $i$'s local history includes
  the random bits that the players needs, there is a unique correct
  message, even if player $i$ is uses a randomized protocol.
 From round $r+1$ on, players in $\overline{K}$ 
 send players in $K$ exactly the same messages as in $\vec{h}$.
 Clearly, 
 by construction,
  $\vec{h}'_K = \vec{h}_K$. Let $L(\vec{h}) = (\ell_1, \ell_2, \ldots,
 \ell_R)$ and $L(\vec{h}') = (\ell'_1, \ell'_2, \ldots,
 \ell'_R)$. Then $\ell_i = \ell'_i$ 
  for all $i < r$,
 since players 
 have the same local histories in $\vec{h}$ and $\vec{h}'$ for the
 first $r-1$ rounds of communication, 
 and they also send exactly the same messages.
 A similar argument shows that, for all $r' \ge r$, the players in
 $K$ tell identical lies in round $r'$ in both $h$ and $\vec{h}'$.  
    Finally, note that strictly fewer lies are told by players in
    $\overline{K}$ in  round $r$ of $\vec{h}'$ than in round $r$ of
    $\vec{h}$, since the players in $\overline{K}$ have  
    exactly the same local history in both $\vec{h}$ and $\vec{h}'$,
    send exactly the same 
 messages to players in $K$ 
 in both,
 but in $\vec{h}'$ they tell no lies to each other whereas, by assumption,
 there is at least one such lie told in $\vec{h}$. 
 Thus, $\ell'_r < \ell_r$, and $L(\vec{h}') < L(\vec{h})$ as desired.
\end{proof}

This shows that $b$ 
satisfies property (1) of Definition~\ref{def:k-paranoid}, and thus that $b$
is a $k$-paranoid belief system consistent with
$\vec{\sigma}$, completing the proof of Proposition~\ref{prop:paranoid}.
\end{proof}

\commentout{
\section{Proof of Theorem~\ref{thm:main2}}

The proof of Theorem~\ref{thm:main2} is quite similar to the proof of
Theorem~\ref{thm:main}. Given a strategy profile $\vec{\sigma}' +
\sigma'_d \in SE_k(\Gamma_{d, asyn})$,
we first construct a $k$-resilient Nash equilibrium
$\vec{\sigma}$ that implements $\vec{\sigma}' + \sigma'_d$ and
guarantees that, if less than $k$ players deviate, all players will 
eventually be able to compute the action that they should play,
and then we construct a $k$-paranoid belief system $b$ that is
consistent with $\vec{\sigma}$ just as in the proof of
Theorem~\ref{thm:main}. An argument similar to the one used in the
synchronous case shows that $(\vec{\sigma}, b)$ is indeed a
$k$-resilient sequential equilibrium. The only major difference
between the synchronous and the asynchronous case is that, in an
asynchronous system, constructing a $k$-resilient Nash equilibrium
$\vec{\sigma}$ that implements $\vec{\sigma}' + \sigma'_d$ is far more
complicated than in a synchronous system (the reason for this can be
found below), and is beyond the scope of this work. In this paper, we
only provide a very high-level overview of the construction used in
\cite{GH19} and \cite{ADGH19} that satisfies all the properties we
need. 
As in the case of Theorem~\ref{thm:main}, we prove Theorem~\ref{thm:main2} for the case of $k$-resilient sequential equilibrium. The case of strongly $k$-resilient equilibrium is analogous.

As discussed in Section~\ref{sec:results}, in general, there
is no simple description of $SE_k(\Gamma_{d, asyn})$; thus,
the proof of Theorem~\ref{thm:main2} consists of showing that
arbitrary interactions 
$\vec{\sigma}' + \sigma'_d$
with a trusted mediator
in the asynchronous setting can be emulated by a strategy
$\vec{\sigma}$ without the mediator that tolerates deviations of
coalitions of size at most $k$. We proceed as in the proof of
Theorem~\ref{thm:main}: we first construct a $k$-resilient Nash
equilibrium $\vec{\sigma}$ that implements $\vec{\sigma}' +
\sigma'_d$, and then we construct a $k$-paranoid belief system to
extend the Nash equilibrium to a $k$-resilient sequential equilibrium
$\vec{\sigma}_{seq}$. The main problem is that the
construction of a $k$-resilient Nash equilibrium is not as simple as
in the synchronous setting. The communication between the
players and the mediator can be arbitrarily convoluted (as opposed to
the players just sending their types and receiving their suggested
action), and it may depend on the scheduler. This means that, to
be sure that all possible cases are considered, players in
$\vec{\sigma}$ must somehow be able to simulate all possible
scheduling orders for the players and the messages in $\vec{\sigma}' +
\sigma_d$. (In a given run, the particular order that they choose must
depend on the actual scheduler in $\vec{\sigma}$.) The simulation is
quite nontrivial, but the high-level idea is that each player $i$ in
$\vec{\sigma}$ simulates its counterpart in $\vec{\sigma}' +
\sigma'_d$, except that whenever it would send a message $msg$ in
$\vec{\sigma}' + \sigma'_d$, $i$ shares $msg$ using VSS. Moreover,
each player simultaneously plays its part simulating the mediator,
which means using a \emph{consensus protocol} 
(see Appendix~\ref{sec:consensus})
to guarantee that the
nonfaulty
players agree on when the
mediator is scheduled (there exist $t$-resilient consensus protocols
in asynchronous systems if $n > 3t$ \cite{ADH08})
and using CC to compute which messages it receives, which
messages it sends, and what the contents of those messages are. With
this procedure, each player is able to compute its share of every
message sent by the mediator. If that message was supposed to be sent
to player $j$, then each player $i$ sends its share of the message to
$j$ in order for $j$ to be able to reconstruct it. The full
construction can be found in \cite{GH19}. It actually satisfies a
property called \emph{$k$-bisimulation}, which states the following: 
\begin{itemize}
      \item [(a)] For all coalitions $K$ of size at most $k$, all
      strategies $\vec{\tau}_K$ for players in $K$, and all schedulers
      $\sigma_e$, there exists a strategy $\vec{\tau}'_K$ and a
      scheduler $\sigma_e'$ such that $(\vec{\sigma}_{-K},
      \tau_K)(\sigma_e)$ and 
      $((\vec{\sigma}'_{-K}, \tau'_K) + \sigma'_d)(\sigma'_e)$
      are identically distributed.
          \item [(b)] For all coalitions $K$ of size at most $k$, all
      strategies $\vec{\tau}'_K$ for players in $K$, and all
      schedulers $\sigma'_e$, there exists a strategy $\vec{\tau}_K$
      and a scheduler $\sigma_e$ such that $(\vec{\sigma}_{-K},
      \tau_K)(\sigma_e)$ and 
      $((\vec{\sigma}'_{-K}, \tau'_K) + \sigma'_d)(\sigma'_e)$
      are identically distributed. 
\end{itemize}

Intuitively, if $\vec{\sigma}$ $k$-bisimulates $\vec{\sigma}' + \sigma'_d$,
then all possible distributions over action profiles in $\vec{\sigma}$
when a coalition of at most $k$ players deviates also occur in
$\vec{\sigma}' + \sigma'_d$ when the same coalition deviates (possibly in a
different way). In particular, if a coalition $K$ of players with $|K|
\le k$ can increase their payoff in $\vec{\sigma}$ by deviating, they
could already do so in $\vec{\sigma}' + \sigma'_d$, which contradicts the
assumption that
$\vec{\sigma}' + \sigma'_d$ is a $k$-resilient sequential equilibrium. This
implies that $\vec{\sigma}$ is a $k$-resilient Nash 
equilibrium that implements $\vec{\sigma}' + \sigma'_d$ 
(a result shown by Abraham,
Dolev, Geffner, and Halpern~\nciteyear{ADGH19}).  

In the proof of Theorem~\ref{thm:main2} we use a slightly modified
variant of Geffner and Halpern's construction. In this variant, each
player $i$ terminates only if all 
messages
sent in the past are
correct
according to $\vec{\sigma}$ and $i$'s local history.
If $i$ ever 
lied
to some player $j$, $i$
sends the correct 
message
before terminating. The purpose of this is
that even if a player $j$ is not able to proceed with
$\vec{\sigma}$ (for instance, because it received incorrect messages),
$j$ would still believe that in the future it will still be able to
reconstruct the (simulated) mediator's messages as long as the
remaining players are able to 
continue.
Additionally, as
in the synchonous case, if players cannot
continue with the protocol (for instance, because they received
incorrect messages), they do nothing. 
It is easy to check that this variant still $k$-bisimulates $\vec{\sigma}' + \sigma'_d$ and, thus, $\vec{\sigma}$ is also a $k$-resilient Nash equilibrium that implements $\vec{\sigma}' + \sigma'_d$.

The rest of the proof of Theorem~\ref{thm:main2} consists of finding
an appropriate $k$-paranoid belief system consistent with
$\vec{\sigma}$. If these beliefs exist then, as in the synchronous
case, no coalition $K$ of up to $k$ players would have an incentive
to deviate from the proposed protocol since, according to their
beliefs, the remaining players would still be able to compute their
part of the action profile $\vec{a}$ (which is sampled from a
communication equilibrium), and thus it is optimal for players in $K$
to play their part of $\vec{a}$ as well. Note that, since 
the protocol states that players should
send the correct 
messages at the very end of
the protocol, with any belief system consistent with $\vec{\sigma}$,
players in $K$ always believe that they will be able to compute
their part of $\vec{a}$ if they wait long enough, regardless of their
local history. This means that proving Theorem~\ref{thm:main2} reduces
to proving the following analogue of Proposition~\ref{prop:paranoid} for
asynchronous systems: 

\begin{proposition}\label{prop:paranoid-asyn}
For all strategies $\vec{\sigma}$ in $\Gamma_{\ACT, asyn}$, there
exists a $k$-paranoid belief system $b$ that is consistent with
$\vec{\sigma}$. 
\end{proposition}

\begin{proof}
Consider a sequence $\vec{\sigma}^1, \vec{\sigma}^2, \ldots$ of
strategy profiles defined just as in the proof of
Proposition~\ref{prop:paranoid}, and consider a randomized scheduler
$\sigma_e$ such that all valid \emph{schedule prefixes} (i.e., ordered
list of players scheduled and messages delivered so far) occur with
positive probability. For instance, $\sigma_e$ could be a scheduler
that at each point in time either delivers a message or chooses a
player to be scheduled uniformly at random. Consider the belief system
$b$ induced by the previous sequence of strategies with scheduler
$\sigma_e$. An argument similar to that given in
Lemmas~\ref{lemma:aux} and~\ref{lemma:aux2} shows that $b$ is
a $k$-paranoid belief system, as desired. 
\end{proof}
}

\section{Characterization of $SE_k(\Gamma_{d})$ and $SE_k^S(\Gamma_{d})$} \label{sec:descriptions}

In this section we characterize the sets $SE_k(\Gamma_{d})$,
and $SE_k^S(\Gamma_{d})$
(note that, by Theorem~\ref{thm:main}, the characterization of $SE_k(\Gamma_{\ACT})$ and $SE_k^S(\Gamma_{\ACT})$ is identical if $n > 3$.)
If $\Gamma$ is a 
normal-form game, let $CE_k(\Gamma)$ denote the set of possible
distributions over action profiles 
induced by a $k$-resilient correlated
equilibrium
in $\Gamma$; if $\Gamma$ is a Bayesian game, let
$Com_k(\Gamma)$ denote the set of possible maps from type profiles to
distributions over action profiles induced by $k$-resilient
communication equilibria in $\Gamma$. We also denote by $CE_k^S(\Gamma)$ the set of possible distributions
over action profiles induced by a strongly $k$-resilient correlated
equilibrium in $\Gamma$. $Com_k^S(\Gamma)$ and $SE_k^S(\Gamma)$ are
defined analogously. 

The first characterization shows that, if $\Gamma$ is a normal-form game, then $SE_k(\Gamma_{d})$ is the set of $k$-resilient correlated equilibria of $\Gamma$ and $SE_k^S(\Gamma_{d})$ is the set of strongly $k$-resilient correlated equilibria of $\Gamma$.

\begin{proposition}\label{prop:desc1}
If $\Gamma = (P,A,U)$ is a normal-form game for $n$ players, then
$SE_k(\Gamma_{d}) = CE_k(\Gamma)$
and 
$SE_k^S(\Gamma_{d}) = CE_k^S(\Gamma)$
for all $k \le n$. 
\end{proposition}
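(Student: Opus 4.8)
The plan is to prove each of the two set equalities by a pair of inclusions; I describe the argument for $SE_k(\Gamma_d) = CE_k(\Gamma)$, the argument for $SE_k^S(\Gamma_d) = CE_k^S(\Gamma)$ being essentially the same after swapping ``for some $i \in K$'' with ``for all $i \in K$'' (and, when negating, ``for all $i$'' with ``for some $i$''). For the inclusion $SE_k(\Gamma_d) \subseteq CE_k(\Gamma)$, I would take a $k$-resilient sequential equilibrium $(\vec\sigma + \sigma_d, b)$ of $\Gamma_d$, let $p \in \Delta(A)$ be the distribution over action profiles it induces, and argue $p \in CE_k(\Gamma)$ by contraposition. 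Suppose $p$ fails the inequality of Definition~\ref{def:k-correlated} for some $K$ with $|K| \le k$ and some $\vec a'_K, \vec a''_K$ with $p(\vec a'_K) > 0$; that is, for every $i \in K$ the conditional expected payoff of $\vec a''_K$ exceeds that of $\vec a'_K$ when the complementary components are drawn from $p(\cdot \mid \vec a_K = \vec a'_K)$. Consider the deviation $\tau_K$ in $\Gamma_d$ in which the players of $K$ execute $\vec\sigma_K$ verbatim throughout the communication phase and, when acting in $\Gamma$, play $\vec a''_K$ whenever $\vec\sigma_K$ would have them play $\vec a'_K$ (and play as prescribed otherwise). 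Since $K$ sends exactly the messages it would have sent under $\vec\sigma_K$, the behaviour of $d$ and of the players in $\overline K$ is unaffected, so conditional on the positive-probability event that $\vec\sigma_K$ prescribes $\vec a'_K$ the complementary components are still distributed as $p(\cdot \mid \vec a_K = \vec a'_K)$, and every $i \in K$ strictly gains. Applying Definition~\ref{def:k-sequential} at the initial $K$-information set --- where $b$ is degenerate and $u_i(\vec\sigma, I, b) = u_i(\vec\sigma)$, so the condition is just $k$-resilient Nash --- contradicts $u_i(\vec\sigma) \ge u_i(\tau_K, \vec\sigma_{-K})$ for some $i \in K$.

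For the reverse inclusion $CE_k(\Gamma) \subseteq SE_k(\Gamma_d)$, given $p \in CE_k(\Gamma)$ I would use the canonical mediator protocol $\vec\sigma + \sigma_d$: $d$ samples $\vec a \sim p$ and privately sends $a_i$ to each $i$; each $i$ ignores every message other than $a_i$ and plays $a_i$; and at the game-tree nodes where $i$ has received some $a'_i$ with $p(a'_i) = 0$ or has received nothing, $i$ plays a fixed best response to some fixed belief over $A_{-i}$. This profile induces $p$. To make it a $k$-resilient sequential equilibrium I would exhibit a consistent $k$-belief system $b$ by perturbing only the players' genuine moves --- the choice of action in $\Gamma$ --- to be completely mixed while leaving $\sigma_d$ fixed; Bayes' rule then assigns, at the $K$-information set ``$K$ received $\vec a'_K$'' for each $\vec a'_K$ with $p(\vec a'_K) > 0$, the belief $p(\cdot \mid \vec a_K = \vec a'_K)$ over $\overline K$'s suggestions, and on the $K$-information sets that remain unreachable (those behind a zero-probability move of the fixed mediator) I would let $b$ be the fixed belief used to define the protocol's off-path behaviour. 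Verifying $k$-resilient sequential rationality of $(\vec\sigma + \sigma_d, b)$ at every $K$-information set then reduces exactly to the inequality of Definition~\ref{def:k-correlated}: at a reachable $K$-information set the pooled information of $K$ is precisely the suggestion profile $\vec a'_K$, any continuation amounts to exchanging payoff-irrelevant messages and then choosing an action profile in $A_K$, and no such continuation beats playing $\vec a'_K$ for some $i \in K$; at an unreachable one, optimality holds by construction.

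The hard part will be this last step --- promoting the canonical protocol from a $k$-resilient Nash equilibrium (immediate from $p \in CE_k(\Gamma)$) to a $k$-resilient sequential equilibrium. Two points need care. First, because $\sigma_d$ is not perturbed, some $K$-information sets are genuinely unreachable even in the limit, so one must check that equipping them with otherwise arbitrary beliefs together with a matching best response is compatible with the notion of consistency for mediator games and does not spoil sequential rationality anywhere. Second --- and this matters only in the non-strong case with $k \ge 2$ --- one must confirm that the continuations available to a coalition at a reachable $K$-information set of $\Gamma_d$ (where $K$ could use the communication structure to pool suggestions and share randomness before acting) are, as far as the coalition's payoffs go, no stronger than the relabelings $\vec a'_K \mapsto \vec a''_K$ quantified over in Definition~\ref{def:k-correlated}, so that that condition is exactly what is needed; for strongly $k$-resilient correlated equilibrium this is automatic, since if no pure relabeling strictly benefits a member of $K$ then no randomized, correlated continuation does either.
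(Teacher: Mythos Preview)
Your overall approach---the pure-swap deviation for $SE_k \subseteq CE_k$ and the canonical mediator protocol for $CE_k \subseteq SE_k$---is exactly what the paper does; the paper's proof is a one-paragraph sketch that simply asserts the canonical protocol works. You are considerably more careful, and your point~1 is a routine technicality.

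Your point~2, however, is not merely a hard part to be filled in: with Definition~\ref{def:k-correlated} exactly as written (quantifying only over \emph{pure} $\vec a''_K$), the inclusion $CE_k(\Gamma) \subseteq SE_k(\Gamma_d)$ can fail for $k \ge 2$. Take $n=2$, $A_1=A_2=\{a,b,c\}$, with payoffs $(0,0)$ at $(a,a)$, $(-1,2)$ at $(b,b)$, $(2,-1)$ at $(c,c)$, and $(-10,-10)$ elsewhere. The point mass on $(a,a)$ satisfies Definition~\ref{def:k-correlated} for $K=\{1,2\}$, since every pure $\vec a''_K$ hurts at least one player; yet in $\Gamma_d$ the coalition can communicate, jointly randomize $\tfrac12$--$\tfrac12$ between $(b,b)$ and $(c,c)$, and both obtain expected payoff $\tfrac12>0$, so no profile inducing this distribution is even a $2$-resilient Nash equilibrium of $\Gamma_d$. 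The paper's brief proof glosses over exactly this point. The fix is to read Definition~\ref{def:k-correlated} with $\vec a''_K$ ranging over $\Delta(A_K)$, in line with the paper's own definition of $k$-resilient Nash equilibrium, which already takes $\vec\sigma'_K \in \Delta(A_K)$; once that is done, the inequality you need at each reachable $K$-information set is precisely the defining inequality, and your argument goes through. For the strong version your observation that pure-swap optimality extends to mixtures by linearity is correct, so no issue arises there.
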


\begin{proof}
  Clearly, the distribution over actions induced by a $k$-resilient
sequential equilibrium
(resp., strongly $k$-resilient sequential equilibrium)
$\vec{\sigma}$ in $\Gamma_{d}$ 
is also a $k$-resilient
correlated equilibrium 
(resp., strongly $k$-resilient correlated equilibrium)
of $\Gamma$, for otherwise there exists a coalition
$K$ of at most $k$ players 
such that all (resp., some)
members of the coalition
can increase their 
utility
by deviating
from
$\vec{\sigma}$ by playing a different action in the underlying game.
For the opposite inclusion, observe that all $k$-resilient
(resp., strongly $k$-resilient) 
correlated
equilibria $p$ 
of 
$\Gamma$ can be easily implemented with a mediator
the mediator samples an action profile
$\vec{a}$ following distribution $p$ and gives $a_i$ to each player
$i$. Then each player $i$ plays whatever is sent by the mediator.
\end{proof}

Theorem~\ref{thm:main} and Proposition~\ref{prop:desc1} together
imply the following corollary: 

\begin{corollary}
If $\Gamma = (P,A,U)$ is a normal-form game for $n$ players and $n > 3k$, then
$SE_k(\Gamma_{\ACT}) = CE_k(\Gamma)$ 
and 
$SE_k^S(\Gamma_{\ACT}) = CE_k^S(\Gamma)$. 
\end{corollary}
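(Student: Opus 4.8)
The plan is to obtain the corollary simply by chaining together the two results that immediately precede it in the paper. Since $n > 3k$, Theorem~\ref{thm:main} applies and gives $SE_k(\Gamma_{\ACT}) = SE_k(\Gamma_d)$ together with $SE_k^S(\Gamma_{\ACT}) = SE_k^S(\Gamma_d)$. The only point worth noting is that Theorem~\ref{thm:main} is stated for Bayesian games, whereas $\Gamma$ here is a normal-form game; but a normal-form game is exactly the degenerate Bayesian game in which each $T_i$ is a singleton (and $q$ is the point mass on the unique type profile), so the theorem applies verbatim, and in this special case the "correlated strategy profile" induced by a strategy in $\Gamma_{\ACT}$ is just a distribution over action profiles, which is the object appearing in $CE_k(\Gamma)$.

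Next, since $n > 3k$ (together with $k \ge 1$) forces $k < n$, Proposition~\ref{prop:desc1} applies and gives $SE_k(\Gamma_d) = CE_k(\Gamma)$ and $SE_k^S(\Gamma_d) = CE_k^S(\Gamma)$. Composing the two equalities --- first replacing $SE_k(\Gamma_{\ACT})$ by $SE_k(\Gamma_d)$ via Theorem~\ref{thm:main}, then $SE_k(\Gamma_d)$ by $CE_k(\Gamma)$ via Proposition~\ref{prop:desc1}, and symmetrically for the strong variants --- yields $SE_k(\Gamma_{\ACT}) = CE_k(\Gamma)$ and $SE_k^S(\Gamma_{\ACT}) = CE_k^S(\Gamma)$, which is exactly the assertion of the corollary.

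There is no genuine obstacle at this stage: all of the real work has already been done, in Theorem~\ref{thm:main} (the hard direction being the construction of the $k$-resilient Nash equilibrium $\vec{\sigma}$ from VSS and CC and its extension to a $k$-resilient sequential equilibrium via a $k$-paranoid belief system) and in Proposition~\ref{prop:desc1} (whose nontrivial content is that any $k$-resilient correlated equilibrium is implementable with a mediator by the obvious sample-and-announce protocol, while conversely a deviation to a different action in the underlying game would break the equilibrium). The corollary is the bookkeeping step that packages these into the promised characterization of $SE_k(\Gamma_{\ACT})$ for normal-form games; in particular, taking $k = 1$ recovers the threshold $n \ge 4$, improving Gerardi's $n \ge 5$.
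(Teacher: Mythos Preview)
Your proposal is correct and takes exactly the same approach as the paper, which simply states that the corollary follows by combining Theorem~\ref{thm:main} with Proposition~\ref{prop:desc1}. Your added remarks (that a normal-form game is the trivial Bayesian game with singleton type sets, and that $n>3k$ ensures $k\le n$ so Proposition~\ref{prop:desc1} applies) are helpful clarifications but not required by the paper.
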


\commentout{
In the asynchronous setting, the description of $SE_k(\Gamma_{d,
  asyn})$ is a bit more convoluted. Reasoning similar to that used in
the proof of
  Proposition~\ref{prop:desc1} shows that for a fixed scheduler, 
  a $k$-resilient (resp., strongly $k$-resilient) sequential
  equilibrium in $SE_k(\Gamma_{d, asyn})$ induces a $k$-resilient (resp.,
strongly $k$-resilient) correlated equilibrium in $\Gamma$. Thus,
$SE_k(\Gamma_{d, asyn}) \subseteq \mathcal{P}(CE_k(\Gamma))$ and
$SE_k^S(\Gamma_{d, asyn}) \subseteq \mathcal{P}(CE_k^S(\Gamma))$,
where $\mathcal{P(S)}$ denotes the power set of $S$ (i.e., the set of
all subsets of $S$). The next proposition gives a precise description
of $SE_k(\Gamma_{d, asyn})$ and $SE_k^S(\Gamma_{d, asyn})$.

\begin{proposition}\label{prop:desc-asyn}
Given a set $S$ of strategies, let $\mathcal{P}_=(S)$ the set of 
  nonempty subsets $S'$ of $S$ such that every element of $S'$ gives the
same expected utility to all players. If $\Gamma = (P,A,U)$ is a
normal-form game for $n$ players and $n \ge k$, then 
$SE_k(\Gamma_{d, asyn}) = \mathcal{P}_=(CE_k(\Gamma))$
and 
$SE_k^S(\Gamma_{d, asyn}) = \mathcal{P}_=(CE_k^S(\Gamma))$.
\end{proposition}
\begin{proof}
In games with a mediator, we write $\sigma_d$ to denote a generic
strategy for the mediator, and $\vec{\sigma} + \sigma_d$ to denote a
strategy profile for the players and the mediator.
To show that $SE_k(\Gamma_{d, asyn}) \subseteq
\mathcal{P}_=(CE_k(\Gamma))$ and $SE_k^S(\Gamma_{d, asyn}) \subseteq
\mathcal{P}_=(CE_k^S(\Gamma))$, 
suppose by way of contradiction that some strategy $\vec{\sigma} +
\sigma_d$ in $SE_k(\Gamma_{d, asyn})$ (resp., $SE_k^S(\Gamma_{d,
  asyn})$) induces a set $S$ of strategies such that, for some player $i$,
there exist two strategies $\vec{\tau}, \vec{\tau}' \in S$ induced
by schedulers $\sigma_e$ and $\sigma_e'$, respectively, such that
$u_i(\vec{\tau}) < 
u_i(\vec{\tau}')$.  Consider a scheduler $\sigma_e''$ that does the
following: it first schedules player $i$. If $i$ sends a message to
itself when it is first scheduled, then $\sigma_e$ acts like
$\sigma'_e$, and otherwise it acts like $\sigma''_e$. By construction,
when the scheduler is $\sigma''_e$, 
$i$ gains by deviating from $\sigma_i$
and sending a message to itself when it is first scheduled.  (This
construction will not work if $\sigma_i$ already requires $i$ to send
a message to itself; in this case, we construct $\sigma''_e$
so that the signal from $i$ to the
scheduler is encoded differently, for example, by $i$ sending two
messages to itself.)  This
shows that $SE_k(\Gamma_{d, asyn}) \subseteq
\mathcal{P}_=(CE_k(\Gamma))$ and $SE_k^S(\Gamma_{d, asyn}) \subseteq
\mathcal{P}_=(CE_k^S(\Gamma))$. 

To prove the opposite inclusions, 
since we consider only equilibria with rational
probabilities, the set $CE_k(\Gamma)$ (resp., $CE_k^S(\Gamma)$) is
countable, as is any subset $S \subseteq CE_k(\Gamma)$ (resp., any
subset $S \subseteq CE_k^S(\Gamma)$). Given $S = \{\vec{\tau}^1,
\vec{\tau}^2, \ldots \}$ such that for all $k$ and players $i$ and
$j$, $\vec{\tau}^k_i = \vec{\tau}^k_j$, consider
the following strategy $\vec{\sigma} + \sigma_d$.
Each player sends an
empty message to the mediator when it is first scheduled. Let $N$
be the number of times the mediator that has been scheduled before
receiving a message from some player. The mediator samples
an action profile $\vec{a}$ from $\vec{\tau}^N$ (or from $\vec{\tau}^1$
if $S$ is finite and has fewer than $N$ elements), and sends $a_i$ to
each player $i$. Players play whatever they receive from the
mediator. It is straightforward to check that for each strategy
$\vec{\tau}$ in $S$, there is a scheduler $\sigma_e$ that induces
$\vec{\tau}$ when the players and the mediator play $\vec{\sigma} +
\sigma_d$. To check that $\vec{\sigma} + \sigma_d$ is indeed a
$k$-resilient (resp., strongly $k$-resilient) correlated equilibrium,
note that all strategies in $S$ are $k$-resilient (resp., strongly
$k$-resilient) correlated equilibria of $\Gamma$ that give the same
utility to each player, and thus (a) 
after determining the number of times $t$ that the mediator is
scheduled before receiving the first message,  
no coalition of $k$ or less
players can benefit from playing an action profile  different from the
one suggested by the mediator, and (b) 
\commentout{
regardless of 
what the
coalition does,
the expected 
payoff at the end of the protocol is constant. 
}
the expected utility of each player is independent of $t$. 
This shows that $\vec{\sigma} +
\sigma_d$ is indeed a $k$-resilient (resp., strongly
$k$-resilient) correlated equilibrium.
\end{proof}

Theorem~\ref{thm:main2} and Proposition~\ref{prop:desc-asyn} together imply:
\begin{corollary}
  If $\Gamma = (P,A,U)$ is a normal-form game for $n$ players and $n >
  4k$, then
  $SE_k(\Gamma_{\ACT, asyn}) = \mathcal{P}_=(CE_k(\Gamma))$
and 
$SE_k^S(\Gamma_{\ACT, asyn}) = \mathcal{P}_=(CE_k^S(\Gamma))$
with both the DM and AH approaches. 
\end{corollary}
}

For Bayesian games,
we can show that the set of sequential equilibria in $\Gamma_d$ is equal to the set of communication equilibria of $\Gamma$. 

\begin{proposition}\label{prop:desc2}
  If $\Gamma = (P, T, q, A,U)$ is a Bayesian game for $n$ players and
  $n \ge k$, then
  $SE_k(\Gamma_{d}) = Com_k(\Gamma)$ 
and 
$SE_k^S(\Gamma_{d}) = Com_k^S(\Gamma)$ .
\end{proposition}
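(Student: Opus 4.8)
The plan is to prove the two inclusions separately, handling the standard and strong variants in parallel; throughout, the only difference between the two is whether the relevant inequality must hold \emph{for some} $i \in K$ or \emph{for all} $i \in K$, exactly as in Definitions~\ref{def:k-comm} and~\ref{def:k-sequential}. This is the Bayesian counterpart of Proposition~\ref{prop:desc1}, and the one new feature is that in $\Gamma_d$ a coalition now moves twice---first it reports a type profile to the mediator, and then it plays an action in the underlying game---rather than once.

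For $SE_k(\Gamma_d) \subseteq Com_k(\Gamma)$: let $(\vec{\sigma}, b)$ be a $k$-resilient sequential equilibrium of $\Gamma_d$, and let $\mu$ be the correlated strategy profile it induces in $\Gamma$. If $\mu \notin Com_k(\Gamma)$, then by Definition~\ref{def:k-comm} there are a coalition $K$ with $|K| \le k$, a profile $\vec{t}_K$, and functions $\psi : T_K \to T_K$ and $\varphi : A_K \to A_K$ for which the inequality fails for every $i \in K$ (for some $i \in K$, in the strong case). Consider the $K$-information set $I$ at which the members of $K$ have just learned $\vec{t}_K$ and have not yet moved; since the mediator never deviates, $I$ lies on the equilibrium path and carries a well-defined belief. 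From $I$ the coalition can deviate by reporting $\psi(\vec{t}_K)$ and, upon receiving $\vec{a}_K$ from the mediator, playing $\varphi(\vec{a}_K)$; because $\vec{\sigma}$ induces $\mu$, the payoff this deviation yields to each $i \in K$ conditional on $I$ equals, up to the positive factor $q(\vec{t}_K)$, the right-hand side of the failed inequality, whereas following $\vec{\sigma}$ yields the left-hand side. Hence the deviation strictly helps every (resp.\ some) $i \in K$, contradicting $k$-resilient (resp.\ strong $k$-resilient) sequential rationality at $I$.

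For the reverse inclusion, given $\mu \in Com_k(\Gamma)$ I would take the canonical protocol $\vec{\sigma}_{can}$ of Section~\ref{sec:nash}---each player reports its type, the mediator samples $\vec{a} \sim \mu(\vec{t})$ (replacing the type of any non-reporter by the default $\bot$) and sends $a_i$ to player $i$, who then plays $a_i$, with a best response prescribed for any player that fails to report or never receives an action---and show that $(\vec{\sigma}_{can}, b)$ is a $k$-resilient sequential equilibrium inducing $\mu$ for a suitable $b$. A consistent $k$-belief system is obtained by perturbing only the players' part of $\vec{\sigma}_{can}$ (the mediator's part stays fixed), letting each player misreport and misplay with probability tending to $0$; indeed one can apply the construction of Proposition~\ref{prop:paranoid} to $\vec{\sigma}_{can}$ to get a $k$-paranoid $b$. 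Since the mediator is honest, every reachable history is consistent with the mediator's protocol, so sequential rationality only has to be checked at the $K$-information sets reached after the coalition itself has deviated; there the coalition's remaining freedom is a choice of reported type (if it has not reported yet) together with a map $\varphi$ from received actions to played actions, and under $b$ the gain from any such deviation is controlled by the corresponding instance of the Definition~\ref{def:k-comm} inequality, so that no $i \in K$ (resp.\ some $i \in K$) benefits.

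The step I expect to be the main obstacle is this last verification: one must pin down the approximating completely-mixed sequence, check that the beliefs it induces at the post-report $K$-information sets are precisely the ones under which continuing with $\vec{\sigma}_{can}$ is optimal, and---since Definition~\ref{def:k-comm} aggregates the choice of $\varphi$ over all received action profiles whereas sequential rationality demands optimality at each information set individually---argue that the aggregated communication-equilibrium inequality does force optimal play at every such set. The remaining points, namely that $\vec{\sigma}_{can}$ induces $\mu$ and the bookkeeping around the default type $\bot$, are routine and can be deferred to the discussion following Definition~\ref{def:k-comm}.
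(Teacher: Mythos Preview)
Your approach matches the paper's: the forward inclusion is the same contradiction argument (pretend to have type $\psi(\vec{t}_K)$ and then apply $\varphi$ to the recommended action), and the reverse inclusion uses the canonical mediator protocol. The paper's proof is in fact terser than yours---it simply asserts that the canonical strategy is a $k$-resilient sequential equilibrium and refers back to the discussion after Definition~\ref{def:k-comm}, without constructing a belief system or flagging the aggregated-versus-pointwise issue you raise; that issue is real but resolves immediately, since for any fixed received profile $\vec{a}_K$ a profitable one-shot deviation yields a $\varphi$ that is the identity elsewhere and violates Definition~\ref{def:k-comm}.

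One small imprecision in your forward direction: you phrase the deviation as ``reporting $\psi(\vec{t}_K)$ and, upon receiving $\vec{a}_K$, playing $\varphi(\vec{a}_K)$,'' which presupposes the canonical report/receive structure, whereas $\vec{\sigma}$ is an arbitrary equilibrium in $\Gamma_d$. The paper handles this by having the coalition ``play $\vec{\sigma}$ as if they had type profile $\psi(\vec{t}_K)$''---i.e., simulate $\vec{\sigma}_K$ on the false type---which works for any protocol and yields the same payoff computation because $\vec{\sigma}$ induces $\mu$.
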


\begin{proof}
Suppose that $\vec{\sigma}$ is a $k$-resilient (resp., strongly
$k$-resilient) sequential equilibrium of $\Gamma_{d}$. Then the
correlated strategy profile $\mu$ induced by $\vec{\sigma}$ in
$\Gamma$ must be a $k$-resilient (resp., strongly $k$-resilient)
communication equilibrium. If $\mu$ is not a $k$-resilient (resp.,
strongly $k$-resilient) communication equilibrium, then 
there exists a coalition $K$ of players with $|K| \le k$ and two
functions $\psi: T_K \rightarrow T_K$ and $\varphi : A_K \rightarrow
A_K$ such that the inequality of Definition~\ref{def:k-comm} does not
hold for some (resp., for all) $i \in K$. It follows that if
agents in $\Gamma_{d}$ $K$ play $\vec{\sigma}$ as if they had
type profile $\psi(\vec{t}_K)$ instead of their true types, and then
play action $\varphi(\vec{a}_K)$ instead of the action profile
$\vec{a}_K$, the utility of all (resp., some) agents would strictly
increase, which contradicts the assumption that $\vec{\sigma}$ is a
$k$-resilient 
(resp., strongly $k$-resilient) sequential equilibrium of $\Gamma_{d}$. For the opposite inclusion, recall from the discussion after
  Definition~\ref{def:k-comm} that we can identify a correlated
strategy profile $\mu$ in $\Gamma$ with a canonical strategy
$\vec{\sigma}$ in $\Gamma_{d}$ in which players tell the mediator
their type and the mediator computes which action profile they should
play according to $\mu$. If $\mu$ is a $k$-resilient (resp., strongly
$k$-resilient) communication equilibrium, then the canonical strategy
$\vec{\sigma}$ is a $k$-resilient (resp., strongly $k$-resilient)
sequential equilibrium of $\Gamma_{d}$ that induces $\mu$ (see
the discussion after Definition~\ref{def:k-comm} for details).
\end{proof}

Theorem~\ref{thm:main} and Proposition~\ref{prop:desc2} together imply
the following:

\begin{corollary}
If $\Gamma = (P, T, q, A,U)$ be a Bayesian game for $n$ players and $n
> 3k$, then 
  $SE_k(\Gamma_{\ACT}) = Com_k(\Gamma)$ 
and 
$SE_k^S(\Gamma_{\ACT}) = Com_k^S(\Gamma)$ .
\end{corollary}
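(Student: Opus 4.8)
The plan is to obtain this corollary as an immediate consequence of the two results already established in this section, namely Theorem~\ref{thm:main} and Proposition~\ref{prop:desc2}. Recall that Theorem~\ref{thm:main} says that whenever $n > 3k$ the outcomes implementable by a $k$-resilient (resp., strongly $k$-resilient) sequential equilibrium are the same with and without a trusted mediator; that is, $SE_k(\Gamma_{\ACT}) = SE_k(\Gamma_d)$ and $SE_k^S(\Gamma_{\ACT}) = SE_k^S(\Gamma_d)$. Proposition~\ref{prop:desc2}, on the other hand, identifies the mediator side of these equalities for a Bayesian game: $SE_k(\Gamma_d) = Com_k(\Gamma)$ and $SE_k^S(\Gamma_d) = Com_k^S(\Gamma)$, under the weaker hypothesis $n \ge k$.

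First I would observe that the hypothesis $n > 3k$ of the corollary implies $n \ge k$ (taking $k \ge 1$), so both of the cited statements are applicable. Then I would simply chain the equalities by transitivity: $SE_k(\Gamma_{\ACT}) = SE_k(\Gamma_d) = Com_k(\Gamma)$, and in exactly the same way $SE_k^S(\Gamma_{\ACT}) = SE_k^S(\Gamma_d) = Com_k^S(\Gamma)$. That completes the argument.

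There is no real obstacle here; the corollary is a bookkeeping composition of work done above. The one point worth being careful about is invoking Theorem~\ref{thm:main} in its full strength (the equality $SE_k(\Gamma_{\ACT}) = SE_k(\Gamma_d)$), rather than only the easy inclusion: the inclusion $SE_k(\Gamma_{\ACT}) \subseteq SE_k(\Gamma_d)$ is trivial because $\Gamma_{\ACT}$ is the special case of $\Gamma_d$ in which the players ignore the mediator, whereas the reverse inclusion is precisely the substantive content of Sections~\ref{sec:nash}--\ref{sec:sequential} (building the $k$-resilient Nash equilibrium $\vec{\sigma}$ via VSS and CC and then promoting it to a sequential equilibrium through a $k$-paranoid belief system). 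Once that is in hand, no further combinatorics are needed.
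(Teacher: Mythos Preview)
Your proposal is correct and matches the paper's approach exactly: the paper states the corollary immediately after Proposition~\ref{prop:desc2} with the remark that it follows from Theorem~\ref{thm:main} and Proposition~\ref{prop:desc2} together, offering no further argument. Your chaining of the two equalities (and the observation that $n>3k$ implies the $n\ge k$ hypothesis of Proposition~\ref{prop:desc2}) is precisely the intended proof.
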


\commentout{
Unfortunately, we do not believe that there is a simple
description of $SE_k(\Gamma_{d, asyn})$.
To understand why, note that a key step in the proofs of
Theorems~\ref{thm:main} and \ref{thm:main2} is for players
to send their types to the mediator.
In asynchronous systems, the
mediator cannot  
distinguish between players that don't send their type to the mediator
and players
that send it 
but the receipt of the message is delayed by the scheduler.
Nevertheless, if players can somehow punish those that never send
their type to the mediator, 
\commentout{
the same construction used in the synchronous case would suffice for the asynchronous
case as well. More precisely, 
with the AH approach, all $\mu \in
Com_k(\Gamma)$ can be implemented in $\Gamma_{d, asyn}$ if
$\Gamma$ has a $k$-\emph{punishment equilibrium} with respect to
$\mu$. 
}
we can guarantee that it is optimal for all players to send a type to the mediator, and thus
we can extend the constructions used in the proof of
Propositions~\ref{prop:desc-asyn} and \ref{prop:desc2} to
describe $SE_k(\Gamma_{d, asyn})$ and $SE_k^S(\Gamma_{d, asyn})$.  

\begin{definition}[\cite{Bp03,ADGH06}]
If $\Gamma = (P, T, q, S,U)$ is a Bayesian game for $n$ players and
$\mu \in Com_k(\Gamma)$ 
or $\mu \in Com_k^S(\Gamma)$,
then a  \emph{$k$-punishment equilibrium}
  with respect 
  to $\mu$ is a 
  $k$-resilient Bayesian Nash equilibrium 
  $\vec{\tau}$
  such that $u_i(\mu) >
  u_i(
  \vec{\tau}
  )$ for all players $i$. 
\end{definition}

Intuitively, a $k$-punishment equilibrium
with respect to $\mu$
is a strategy profile that is a 
$k$-resilient
Bayesian
Nash
equilibrium which, if played by 
at least $n-k$ 
players, results in all
players being worse off than they would be with $\mu$.
If a $k$-punishment equilibrium exists, then 
we have the following.

\begin{proposition}\label{prop:desc3}
If $\Gamma = (P, T, q, S,U)$ is a Bayesian game,
$S$ is a subset of $k$-resilient (resp., strongly $k$-resilient)
communication equilibria of $\Gamma$ such that, for all players $i$
and all types $t_i$ of $i$, the expected utility of $i$ given $t_i$ is
the same for all $\mu$ in $S$
(i.e., $u_i(\mu \mid t_i) = u_i(\mu' \mid t_i)$ for all $\mu, \mu' \in
S$), 
and there exists a $k$-punishment
equilibrium  
$\vec{\tau}$
with respect to some $\mu$ in $S$, then $S \in
SE_k(\Gamma_{d, asyn})$ (resp., $S \in SE_k^S(\Gamma_{d, asyn})$) with
the AH approach or with the default 
move approach if, for each player $i$, $i$'s default move is 
$\tau_i$.

\end{proposition}

Note that if 
$\vec{\tau}$
is a $k$-punishment equilibrium
with respect to some $\mu$ in $S$, then it is a
$k$-punishment equilibrium with respect to all $\mu \in S$, since all $\mu$
give the same utility to the players. 

\begin{proof}
This proof follows the same lines as the proof of
Proposition~\ref{prop:desc-asyn}, with the only difference that the
players must tell the mediator their type at the beginning of the
game. The punishment strategy is required to force players to tell
their true type; if players do not tell their true type, the remaining
players will play their part of the punishment strategy (note that in
Proposition~\ref{prop:desc-asyn}, a punishment strategy is not
needed since players have no types
and thus the mediator requires no inputs).  
Since we are considering only distributions with rational
probabilities, the set $S$ is countable. Let $S = \{\mu^1, \mu^2,
\ldots\}$. Consider a strategy $\vec{\sigma} +  \sigma_d$ in
which each player sends its type to the mediator when it is first 
scheduled and the mediator waits until it receives the type
of each player. Let $N$ be the number of times that the
mediator is scheduled before receiving all the types. The
mediator samples an action profile $\vec{a}$ from $\mu^N$ (or from
$\mu^1$ if $S$ is finite and has fewer than $N$ elements), and sends
$a_i$ to each player $i$. Players play whatever they receive from the
mediator. If player $i$ never receives an action, it plays $\tau_i$
(either because it is the default action or because $i$ leaves it in
its will). As in the proof of Proposition~\ref{prop:desc-asyn}, it is
easy to check that $\vec{\sigma}$ induces $S$. 
\commentout{
To see that
$\vec{\sigma}$ is a $k$-resilient (resp., strongly $k$-resilient)
sequential equilibrium, 
}

We next show that $\vec{\sigma}$ is a $k$-resilient (resp., strongly
$k$-resilient) 
sequential equilibrium for some belief system $b$.
Consider a sequence of strategy profiles $\vec{\sigma}^1, \vec{\sigma}^2, \ldots$
such that, in $\vec{\sigma}^N$, each player $i$ acts as follows. The first time $i$ is scheduled, it sends 
its type $t_i$ to the mediator with probability $1 - 1/N$ (instead of
sending its type to the mediator with probability 1, as in
$\vec{\sigma}$). Moreover, each time it is scheduled, $i$ sends a
random message to each other player $j$ with probability
$1/N$. Whenever a player $i$ receives an action $a_i$  from the
mediator, it plays $a_i$. Otherwise, it plays $\vec{\tau}(t_i)$. By
construction, the strategy profiles $\vec{\sigma}^1 + \sigma_d,
\vec{\sigma}^2 + \sigma_d, 
\ldots$ are completely mixed and converge to $\vec{\sigma}$. Let $b$
be the belief system induced by $\vec{\sigma}^1 + \sigma_d,
\vec{\sigma}^2 + \sigma_d, \ldots$. According to $b$, since all
messages received from other players are randomly generated, they can
be ignored. Thus, $i$'s utility depends only on the message received
by the mediator. If player $i$ receives a message from the mediator,
then it must be that every player sent its type and therefore that it
is optimal for $i$ to play the action received. On the other hand, if $i$ never
receives a message from the mediator, then according to its beliefs
(as characterized by $b$), it must be
that no one else got a message and thus everyone else is going to play
their part of $\vec{\tau}$. Since $\vec{\tau}$ is a $k$-resilient Nash
equilibrium, 
playing $\tau_i$ is optimal. This shows that $(\vec{\sigma}, b)$ is a $k$-resilient (resp., strongly $k$-resilient)
sequential equilibrium.
\commentout{
note that a player $i$ cannot gain by not
sending its type to the mediator, since then the 
remaining players will play 
$\vec{\tau}$
which is strictly worse for
$i$ than
the equilibrium payoff. Moreover, given that all players send an input
to the mediator, since all correlated strategies $\mu \in S$ are
$k$-resilient (resp., strongly $k$-resilient) communication
equilibria, no coalition of $k$ or fewer players can gain
by lying to the mediator or by playing an action other than the
one the mediator sends. 

It remains to show that playing 
$\vec{\tau}(t_i)$
is optimal if player $i$
never gets a message from the mediator. If a player $i$
does not
receive an action to play,
then this must be because the mediator did not receive all inputs from
the players, 
since
the mediator does not deviate from its strategy.
Thus, $i$ will believe that no player received a message from the
mediator,  
and so all players will will play their part of $\vec{\tau}$. Since 
$\vec{\tau}$ is a $k$-resilient Nash equilibrium,
playing $\tau_i$ is optimal. 
}
\end{proof}

Proposition~\ref{prop:desc3} implies the following: 

\begin{corollary}
    If $\Gamma = (P, T, q, S,U)$ is a Bayesian game 
and $\mu$ is a $k$-resilient (resp., strongly $k$-resilient)
correlated equilibrium of $\Gamma$ such that  there exists a
$k$-punishment equilibrium  $\vec{\tau}$ w.r.t. $\mu$, then $\{\mu\}
\in SE_k(\Gamma_{d, asyn})$ (resp., $\{\mu\}  \in SE_k^S(\Gamma_{d,
  asyn})$) with the AH approach or with the default 
move approach if the default move is $\vec{\tau}$.
\end{corollary}
}

\fullv{
\section{The asynchronous setting}\label{sec:sync-async}

Up to this point, as
in the game-theory literature, 
we assumed that the communication between players proceeded in
synchronous rounds. 
That is, at each round $t$,
all the players send messages to each other player (we identify not
sending a message with sending a special message $\bot$)
and these messages are received by
their intended recipients before the beginning of round $t+1$.
The assumption that we identify not sending a message with sending
$\bot$ is made without loss of generality---player $j$ can tell if
player $i$ has not sent her a message.
Moreover, it is typically assumed
when analyzing games with mediators and communication games
that it is common knowledge when the
communication phase ends and that, after it ends, all the players
simultaneously move in the underlying game.  We  call this the
\emph{synchronous} setting.

In this section, we consider an \emph{asynchronous} setting that is quite
standard in the distributed computing literature.  In the asynchronous
setting, 
there is no global notion of time, and messages may take arbitrary
amounts of time to be delivered (although we do assume that all
messages are eventually delivered).
Thus, we can no longer identify not sending a message with sending
$\bot$; if $j$ has not received a message from player $i$, $j$ is not
sure if this is because $i$ did not send $j$ a message or if $i$ sent
a message that has not yet been delivered.
For ease of exposition, we assume
that message delivery is under the control of a
\emph{scheduler}, who 
also decides when
each player gets to move, with the guarantee that all players
eventually get to move if they want to move. 
More precisely, in an asynchronous setting, the players and the
scheduler alternate turns. During a player's turn, it receives all the
messages delivered to it by the scheduler since its last turn; it may
also perform additional computation and send messages to
other players (note that these messages are not received until they
are delivered by the scheduler to their recipients). When the player
finishes all necessary operations and sends all the necessary
messages, it ends its turn, at which point it become's the scheduler's
turn.  During its turn, the scheduler
can deliver messages that have been sent but not yet delivered, or it
may schedule a player. If the scheduler schedules player $i$,
the scheduler's turn ends and $i$'s turn begins.

What internal computations a player $i$ performs, what messages $i$
sends, and  to which players $i$ sends messages during its 
turn is dictated by $i$'s strategy, which is a function from $i$'s
local history to operations and messages (i.e., a strategy tells a
player what to do as a function of its local history). Note that 
$i$'s local history is simply an ordered list containing, for each
time that $i$ was scheduled, the messages that $i$ sent and received,
and the internal computations that $i$ performed.  (We can represent
this as a game, in which case we would identify each information set
of $i$ with a local history.)  Note that $i$ does not in general know
how many times another player $j$ 
has been scheduled since $i$'s last turn.

The scheduler's behavior is also encoded in its strategy, which
describes which messages the scheduler delivers and which players are
scheduled, as a function of the scheduler's local
history. The scheduler's local history is again an ordered list,
which includes which players have been scheduled up to that point
and which messages have been delivered. Note that the scheduler does not
know the contents of the messages sent by the players, but it can
still identify the messages by, for instance, labelling them by the
order they were sent. For example, an entry in the scheduler's local
history could be ``delivered player 3's fourth message''. The only
constraint on the scheduler's strategy is that
it must eventually deliver all messages that have been sent and,
starting from any local history, it must eventually schedule all
players that have not terminated.  (We assume that ``done'' is
a signal that a player can send the scheduler that indicates that it
has terminated.)

\commentout{
In both the synchronous
and the asynchronous setting, we assume that the messages are
\emph{authenticated}, so that each recipient knows who sent the message, and
messages are never corrupted or modified.
}

If $\Gamma_{\ACT}$ is asynchronous, the payoff of the players may
depend on the strategy $\sigma_e$ of the scheduler.
This means that the definition of implementation and
of the solution concepts must take the scheduler into account.
We extend the definitions of Nash equilibrium, correlated equilibrium,
sequential equilibrium, and communication equilibrium by requiring that the
relevant inequality 
holds for all choices of $\sigma_e$. For example
a strategy profile $\vec{\sigma}$ is a Nash
equilibrium in an asynchronous setting if, for all $i \in P$, all
strategies $\tau_i$ for 
player $i$, and all schedulers $\sigma_e$, $u_i(\vec{\sigma},
\sigma_e) \ge u_i(\tau_i, \vec{\sigma}_{-i}, \sigma_e)$.
Since
the action profile played might depend on the
scheduler, 
a strategy $\vec{\sigma}$ in $\Gamma_{\ACT}$ might induce
more than one strategy in $\Gamma$ (see Example~\ref{ex:race-game}
below). 
A strategy profile $\vec{\sigma}$ for
$\Gamma_{\ACT}$ \emph{implements a set $S$ of strategy profiles in
$\Gamma$} in an asynchronous setting 
if (a)
for every scheduler $\sigma_e$ in $\Gamma_{\ACT}$,
the outcome obtained when playing $\vec{\sigma}$ with scheduler
$\sigma_e$ is the same as that obtained when playing some strategy profile
in $S$, 
and (b) for every strategy profile $\vec{\tau}$ in $S$, there exists a
scheduler $\sigma_e$ such that
the outcome obtained when playing $\vec{\sigma}$ with $\sigma_e$
is the same as that obtained when playing $\vec{\tau}$. Intuitively, 
$\vec{\sigma}$ implements $S$ if the set of
outcomes that result from playing $\vec{\sigma}$ with different
choices of scheduler are precisely those that result from playing
strategy profiles in
$S$. As in the synchronous setting, if $\Gamma$ is a
normal-form game, we take the strategy profiles in $S$ to consist of
distributions over action profiles in $\Gamma$, while if $\Gamma$ is 
a Bayesian game, the strategy profiles in $S$ are 
correlated strategy profiles.

In cheap-talk games in synchronous systems, we assumed (as is
standard in the literature) that the game tree is 
finite and at the last move on a path, the players play an action in the
underlying game; we take the utility of a 
terminal node
to be the
utility of the corresponding action profile in the normal-form game.
We cannot assume
this in asynchronous systems. Indeed, in asynchronous systems, to
allow for players who deviate from an equilibrium, we must consider
game trees with infinite paths. For example, a player $i$ may wait forever
for a message that never arrives, because the sender deviated and
never sent it, so $i$ may never play an action in the underlying game.   We
must thus define what the payoffs are on infinite paths and at the 
terminal node
of
a finite path where some players do not play an action in the
underlying game.    This amounts to treating players who have not
played an action as, in fact, having played some action.
The two main approaches for doing this are the \emph{default-move
approach}
\cite{ADGH19}, 
where if player $i$ never plays an action in the
underlying game,  its action is sampled from a default distribution, and
the \emph{Aumann and Hart} (AH) approach \cite{AH03}, where the
action played by $i$ is some function of $i$'s local history.
The AH approach essentially 
assumes
that players 
can leave a ``will'' defining
the action in the underlying game that they should play if they never
actually play an action in the underlying game while playing the
cheap-talk game.  
In the asynchronous setting
(with both  the AH and the default move approach),
we say that a strategy profile $\vec{\sigma}$ 
in $\Gamma_{\ACT}$
\emph{implements} a strategy profile
$\vec{\tau}$ 
in $\Gamma$
if
the set of strategy profiles implemented by $\vec{\sigma}$ (for
different choices of scheduler)
is
$\{\vec{\tau}\}$ (thus, playing $\vec{\sigma}$ results in the 
same
outcome no matter what the scheduler does).

The following example illustrates some of the new subtleties that asynchrony introduces.

\begin{example}\label{ex:race-game}
Let $\Gamma$ be a normal-form game for $n$ players where the set
of actions is $\{1,2,\ldots, n\}$, and let $\Gamma_{d,asyn}$ be a cheap-talk
extension of $\Gamma$ in the asynchronous setting, where there is a trusted
mediator $d$ and the AH approach is used.   Suppose that
$\vec{\sigma}$ is the strategy profile that proceeds as follows:
each player sends an empty message to the mediator when
they are first scheduled. The mediator waits until
it receives  a message, and then sends a message to all players
with the index of the player whose message was received
first. When a player $j$ receives a message from the mediator with a
number $i \in \{1,2,\ldots, n\}$, player $j$ plays action $i$. If 
player $j$ never receives a message from the mediator, then according
to its will, $j$ plays action $j$.
The player whose index appears the most often in the resulting action
profile receives a payoff of 1 (if there are ties, they all get 1),
while the remaining players get 0. 

This game can be viewed as a
race. The player whose message gets to the mediator first receives a
payoff of 1 while the rest receive 0. Note that, regardless of the
scheduler, all players play the same action in $\Gamma$.
However, the scheduler decides which action is played. Thus, the
strategy $\vec{\sigma}$ in the example implements the set of
strategies $\{(1,1 \ldots, 1), (2,2,\ldots, 2), \ldots, (n, n, \ldots,
n)\}$. This strategy
is not a Nash equilibrium. To see this,
consider the scheduler $\sigma_e$ that schedules the players
sequentially (first player 1, then player 2, and so on).  
If player $i$ sends two messages to the
mediator and is the only player to do so, then the scheduler delivers
$i$'s first message before it delivers 
any other player's message. If more than one player sends two messages to the
mediator, the schedulers chooses one of these players at
random and delivers her message first.  The remaining messages 
are delivered in some random order.
With this scheduler, the players benefit by deviating and
sending two messages to the mediator instead of just
one. Thus, $\vec{\sigma}$ is not a Nash equilibrium. A similar
argument can be used to show that, for all games $\Gamma$, if
$\vec{\sigma}$ is a $k$-resilient Nash equilibrium in $\Gamma_{\ACT}$,
then the payoffs of the players when playing $\vec{\sigma}$ cannot
depend on the scheduler (see~\cite{ADGH19}). 
\end{example}

This example shows that in asynchronous systems, the set of possible
deviations is much larger than in synchronous
systems. However, there are cases where controlling the
scheduling of the messages and players does not give that much power
to an adversary. For instance, if we consider an asynchronous
cheap-talk extension of Example~\ref{ex:comm-games}, the strategy
profile in which both agents send their type to each other the first
time they are scheduled and play action $t_1t_2$ is a Nash
equilibrium (note that this is almost equivalent to the strategy
proposed in Example~\ref{ex:comm-games}, however there is no notion of
``rounds'' in an asynchronous setting). Many of the
results that hold in the synchronous case also hold in the
asynchronous case, but may require a larger proportion of the players
to be honest. For instance, asynchronous multiparty secure
computation tolerates deviations by at most a quarter of the agents,
while synchronous multiparty secure computation tolerates deviations
by at most a third \cite{BGW88,BCG93}. Our results hold for similar
bounds.  (An
intuitive explanation of why there is a difference between the
thresholds for the synchronous and asynchronous settings can be
found in Appendix~\ref{sec:tools2}.)

\subsection{Main results}\label{sec:results-asyn}

Given a normal-form or Bayesian game $\Gamma$, let $\Gamma_{\ACT, asyn}$ and $\Gamma_{d, asyn}$ be the asynchronous equivalents of $\Gamma_{\ACT}$ and $\Gamma_d$ respectively (i.e., the asynchronous cheap-talk and mediator extensions of $\Gamma$). As in Section~\ref{sec:results}, let $SE_k(\Gamma_{\ACT, asyn})$ (resp., $SE_k^S(\Gamma_{\ACT, asyn})$) denote the set of possible strategy profiles in $\Gamma$ induced by $k$-resilient sequential equilibria in $\Gamma_{\ACT, asyn}$. However, since in the asynchronous setting the outcome may depend on the 
scheduler
(see the discussion in
Section~\ref{sec:sync-async}), if $\Gamma$ is a normal-form game, then
$SE_k(\Gamma_{\ACT, asyn})$ is a set of sets of strategy profiles, and if
$\Gamma$ is a Bayesian game, then $SE_k(\Gamma_{\ACT, asyn})$ is a set
of sets of correlated strategy profiles. 

In the asynchronous setting we have equivalent results to those in the synchronous setting.
\begin{theorem}\label{thm:main2}
  If $\Gamma = (P,T, q, A,U)$ is a Bayesian game for $n$ players and
  $n > 4k$, then
  $SE_k(\Gamma_{\ACT, asyn}) = SE_k(\Gamma_{d, asyn})$
and 
$SE_k^S(\Gamma_{\ACT, asyn}) = SE_k^S(\Gamma_{d, asyn})$
with both the default move and AH approaches. 
\end{theorem}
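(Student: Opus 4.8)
The plan is to follow the two-part structure of the proof of Theorem~\ref{thm:main}, adapting each step to asynchrony. The inclusion $SE_k(\Gamma_{\ACT,asyn}) \subseteq SE_k(\Gamma_{d,asyn})$ (and its strong variant) is immediate, since a cheap-talk strategy profile is also a strategy profile in the mediator game in which the mediator stays silent; so the content is in showing that every $k$-resilient (resp., strongly $k$-resilient) sequential equilibrium $\vec{\sigma}' + \sigma'_d$ of $\Gamma_{d,asyn}$ can be implemented by a $k$-resilient (resp., strongly $k$-resilient) sequential equilibrium of $\Gamma_{\ACT,asyn}$, under both the default-move and the AH conventions.

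First I would obtain a $k$-resilient Nash equilibrium $\vec{\sigma}$ in $\Gamma_{\ACT,asyn}$ that implements $\vec{\sigma}' + \sigma'_d$. The canonical ``send your type, receive your action'' protocol no longer works here, because the interaction with the mediator may be arbitrarily involved and may depend on the scheduler. Instead I would invoke the construction of Abraham, Dolev, Geffner, and Halpern~\cite{ADGH19}, in which the players jointly simulate the mediator: whenever a player would send a message in $\vec{\sigma}' + \sigma'_d$ it instead shares it with a $k$-resilient asynchronous VSS, the players run a $k$-resilient asynchronous consensus protocol to agree on when the simulated mediator is scheduled, and they use $k$-resilient CC to compute the mediator's outgoing messages, each player forwarding its share of a message to its intended recipient so that only that recipient can reconstruct it. Asynchronous VSS and CC require $n > 4k$ (consensus only $n > 3k$), which is exactly the hypothesis. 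This construction $k$-bisimulates $\vec{\sigma}' + \sigma'_d$, so any distribution over action profiles that a coalition of size at most $k$ can produce by deviating in $\vec{\sigma}$ it can already produce in $\vec{\sigma}' + \sigma'_d$; since the latter is in particular a $k$-resilient Nash equilibrium, so is $\vec{\sigma}$. As in the synchronous proof I would work with the variant of this construction in which a player terminates only after every message it has sent is correct relative to $\vec{\sigma}$ and its local history, resending any incorrect message first, and in which a player that cannot proceed does nothing; the first feature ensures that, under any belief system consistent with $\vec{\sigma}$, every coalition $K$ with $|K| \le k$ believes that it will eventually reconstruct its part $\vec{a}_K$ of the simulated output, whatever its local history.

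Next I would extend $\vec{\sigma}$ to a $k$-resilient sequential equilibrium by producing a $k$-paranoid belief system consistent with it, i.e., the asynchronous analogue of Proposition~\ref{prop:paranoid}. The idea is to reuse the family of completely-mixed perturbations from that proof --- each player replaces its intended message by one drawn from a fixed, positive-probability distribution, with a probability that shrinks in the index $m$ and grows steeply the ``later'' the message is --- except that lateness is now measured by the message's position in the schedule rather than by a round number, and these perturbed strategies are run against a randomized scheduler under which every valid schedule prefix has positive probability (for instance, one that at each of its turns either delivers a pending message or schedules a player, each choice made with positive probability). The lexicographic-domination argument of Lemmas~\ref{lemma:aux} and~\ref{lemma:aux2} then carries over: any global history in which $\overline{K}$ lies to itself is, conditional on the coalition's local history, infinitely less likely than the history obtained by replacing the earliest such lie with the correct message while leaving everything the coalition observes unchanged. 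This gives property~(1) of Definition~\ref{def:k-paranoid}, and property~(2) is immediate because the perturbation messages come from a fixed distribution. Finally, the argument of Proposition~\ref{prop:seq-rational} goes through as in the synchronous case: with this belief system each coalition $K$ with $|K| \le k$ believes that the remaining players faithfully complete the simulation of $\vec{\sigma}' + \sigma'_d$ and that $K$ learns nothing beyond $\vec{a}_K$, so since $\vec{\sigma}' + \sigma'_d$ is a $k$-resilient sequential equilibrium, no deviation by $K$ is profitable; this yields a $k$-resilient sequential equilibrium $\vec{\sigma}^{seq}$ of $\Gamma_{\ACT,asyn}$ implementing $\vec{\sigma}' + \sigma'_d$. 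The strongly $k$-resilient case is identical, and since the default-move and AH conventions for infinite or incomplete paths are already built into $\vec{\sigma}' + \sigma'_d$, the argument is indifferent to which one is used.

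The main obstacle is the asynchronous $k$-paranoid belief system. In the synchronous proof the round number furnishes a globally agreed time coordinate along which earlier lies can be made to dominate later ones in the limit; asynchrony has no such coordinate, and the coalition's local history may reveal very little about the global schedule. Choosing the perturbation-and-scheduler pair so that the induced belief system is well-defined (the perturbed profiles converge and Bayes' rule is applicable at every $K$-information set) and so that lexicographic domination survives conditioning on the coalition's limited view is where the real care lies; everything else is either cited (the $k$-bisimulating Nash equilibrium) or a transcription of the synchronous argument.
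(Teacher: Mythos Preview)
Your proposal is correct and follows essentially the same approach as the paper: reduce to the cited $k$-bisimulating construction of \cite{ADGH19,GH19} (modified so that players resend any incorrect messages before terminating), then build a $k$-paranoid belief system via the same perturbation scheme run against a randomized scheduler under which every schedule prefix has positive probability, and conclude via the analogue of Proposition~\ref{prop:seq-rational}. You are in fact more explicit than the paper about the key technical wrinkle---replacing the round index $r$ in the perturbation probability $m^{-(2n)^{-r}}$ by a position in the schedule---which the paper's proof of Proposition~\ref{prop:paranoid-asyn} leaves implicit by saying the sequence is ``defined just as in the proof of Proposition~\ref{prop:paranoid}.''
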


Note that the only difference with respect to Theorem~\ref{thm:main}, is that in the asynchronous setting we require that $n > 4k$ instead of $n > 3k$. This difference comes down to the fact that, in the asynchronous setting, players cannot tell if a message has not been received because the player didn't send it or because the message is being delayed by the scheduler. In fact, this is the reason why most of the distributed computing primitives in asynchronous systems are resilient to deviations of up to a fourth of the players, as opposed to a third in the synchronous case (see Appendix~\ref{sec:tools2} for more details).

As in the synchronous case, $SE_k(\Gamma_{\ACT, asyn})$ and $SE_k^S(\Gamma_{\ACT, asyn})$ have a relatively simple characterization. This is described in Section~\ref{sec:desc-asyn}.

\subsection{Proof of Theorem~\ref{thm:main2} (outline)}

The proof of Theorem~\ref{thm:main2} is quite similar to the proof of
Theorem~\ref{thm:main}. Given a strategy profile $\vec{\sigma}' +
\sigma'_d \in SE_k(\Gamma_{d, asyn})$,
we first construct a $k$-resilient Nash equilibrium
$\vec{\sigma}$ that implements $\vec{\sigma}' + \sigma'_d$ and
guarantees that, if less than $k$ players deviate, all players will 
eventually be able to compute the action that they should play,
and then we construct a $k$-paranoid belief system $b$ that is
consistent with $\vec{\sigma}$ just as in the proof of
Theorem~\ref{thm:main}. An argument similar to the one used in the
synchronous case shows that $(\vec{\sigma}, b)$ is indeed a
$k$-resilient sequential equilibrium. The only major difference
between the synchronous and the asynchronous case is that, in an
asynchronous system, constructing a $k$-resilient Nash equilibrium
$\vec{\sigma}$ that implements $\vec{\sigma}' + \sigma'_d$ is far more
complicated than in a synchronous system (the reason for this can be
found below), and is beyond the scope of this work. In this paper, we
only provide a very high-level overview of the construction used in
\cite{GH19} and \cite{ADGH19} that satisfies all the properties we
need. 
As in the case of Theorem~\ref{thm:main}, we prove Theorem~\ref{thm:main2} for the case of $k$-resilient sequential equilibrium. The case of strongly $k$-resilient equilibrium is analogous.

As discussed in Section~\ref{sec:results}, in general, there
is no simple description of $SE_k(\Gamma_{d, asyn})$; thus,
the proof of Theorem~\ref{thm:main2} consists of showing that
arbitrary interactions 
$\vec{\sigma}' + \sigma'_d$
with a trusted mediator
in the asynchronous setting can be emulated by a strategy
$\vec{\sigma}$ without the mediator that tolerates deviations of
coalitions of size at most $k$. We proceed as in the proof of
Theorem~\ref{thm:main}: we first construct a $k$-resilient Nash
equilibrium $\vec{\sigma}$ that implements $\vec{\sigma}' +
\sigma'_d$, and then we construct a $k$-paranoid belief system to
extend the Nash equilibrium to a $k$-resilient sequential equilibrium
$\vec{\sigma}_{seq}$. The main problem is that the
construction of a $k$-resilient Nash equilibrium is not as simple as
in the synchronous setting. The communication between the
players and the mediator can be arbitrarily convoluted (as opposed to
the players just sending their types and receiving their suggested
action), and it may depend on the scheduler. This means that, to
be sure that all possible cases are considered, players in
$\vec{\sigma}$ must somehow be able to simulate all possible
scheduling orders for the players and the messages in $\vec{\sigma}' +
\sigma_d$. (In a given run, the particular order that they choose must
depend on the actual scheduler in $\vec{\sigma}$.) The simulation is
quite nontrivial, but the high-level idea is that each player $i$ in
$\vec{\sigma}$ simulates its counterpart in $\vec{\sigma}' +
\sigma'_d$, except that whenever it would send a message $msg$ in
$\vec{\sigma}' + \sigma'_d$, $i$ shares $msg$ using VSS. Moreover,
each player simultaneously plays its part simulating the mediator,
which means using a \emph{consensus protocol} 
(see Appendix~\ref{sec:consensus})
to guarantee that the
nonfaulty
players agree on when the
mediator is scheduled (there exist $t$-resilient consensus protocols
in asynchronous systems if $n > 3t$ \cite{ADH08})
and using CC to compute which messages it receives, which
messages it sends, and what the contents of those messages are. With
this procedure, each player is able to compute its share of every
message sent by the mediator. If that message was supposed to be sent
to player $j$, then each player $i$ sends its share of the message to
$j$ in order for $j$ to be able to reconstruct it. The full
construction, as well as more intuition, can be found in
\cite{GH19}. It satisfies a 
property called \emph{$k$-bisimulation}, where $\vec{\sigma}$
$k$-bisimulates $\vec{\sigma}'$ if
\begin{itemize}
      \item [(a)] For all coalitions $K$ of size at most $k$, all
      strategies $\vec{\tau}_K$ for players in $K$, and all schedulers
      $\sigma_e$, there exists a strategy $\vec{\tau}'_K$ and a
      scheduler $\sigma_e'$ such that 
      $(\vec{\sigma}_{-K},\tau_K,\sigma_e)$ and 
      $((\vec{\sigma}'_{-K}, \tau'_K) + \sigma'_d,\sigma'_e)$
      are identically distributed.
          \item [(b)] For all coalitions $K$ of size at most $k$, all
      strategies $\vec{\tau}'_K$ for players in $K$, and all
      schedulers $\sigma'_e$, there exists a strategy $\vec{\tau}_K$
      and a scheduler $\sigma_e$ such that 
      $(\vec{\sigma}_{-K}, \tau_K,\sigma_e)$ and 
      $((\vec{\sigma}'_{-K}, \tau'_K) + \sigma'_d,\sigma'_e)$
      are identically distributed. 
\end{itemize}

Intuitively, if $\vec{\sigma}$ $k$-bisimulates $\vec{\sigma}' + \sigma'_d$,
then all possible distributions over action profiles in $\vec{\sigma}$
when a coalition of at most $k$ players deviates also occur in
$\vec{\sigma}' + \sigma'_d$ when the same coalition deviates (possibly in a
different way). In particular, if a coalition $K$ of players with $|K|
\le k$ can increase their payoff in $\vec{\sigma}$ by deviating, they
could already do so in $\vec{\sigma}' + \sigma'_d$, which contradicts the
assumption that
$\vec{\sigma}' + \sigma'_d$ is a $k$-resilient sequential equilibrium. This
implies that $\vec{\sigma}$ is a $k$-resilient Nash 
equilibrium that implements $\vec{\sigma}' + \sigma'_d$ 
(a result shown by Abraham,
Dolev, Geffner, and Halpern~\nciteyear{ADGH19}).  

In the proof of Theorem~\ref{thm:main2} we use a slightly modified
variant of Geffner and Halpern's construction. In this variant, each
player $i$ terminates only if all 
messages
sent in the past are
correct
according to $\vec{\sigma}$ and $i$'s local history.
If $i$ ever 
lied
to some player $j$, $i$
sends the correct 
message
before terminating. The purpose of this is
that even if a player $j$ is not able to proceed with
$\vec{\sigma}$ (for instance, because it received incorrect messages),
$j$ would still believe that in the future it will still be able to
reconstruct the (simulated) mediator's messages as long as the
remaining players are able to 
continue.
Additionally, as
in the synchonous case, if players cannot
continue with the protocol (for instance, because they received
incorrect messages), they do nothing. 
It is easy to check that this variant still $k$-bisimulates $\vec{\sigma}' + \sigma'_d$ and, thus, $\vec{\sigma}$ is also a $k$-resilient Nash equilibrium that implements $\vec{\sigma}' + \sigma'_d$.

The rest of the proof of Theorem~\ref{thm:main2} consists of finding
an appropriate $k$-paranoid belief system consistent with
$\vec{\sigma}$. If these beliefs exist then, as in the synchronous
case, no coalition $K$ of up to $k$ players would have an incentive
to deviate from the proposed protocol since, according to their
beliefs, the remaining players would still be able to compute their
part of the action profile $\vec{a}$ (which is sampled from a
communication equilibrium), and thus it is optimal for players in $K$
to play their part of $\vec{a}$ as well. Note that, since 
the protocol states that players should
send the correct 
messages at the very end of
the protocol, with any belief system consistent with $\vec{\sigma}$,
players in $K$ always believe that they will be able to compute
their part of $\vec{a}$ if they wait long enough, regardless of their
local history. This means that proving Theorem~\ref{thm:main2} reduces
to proving an analogue of Proposition~\ref{prop:paranoid} for
asynchronous systems: 

\begin{proposition}\label{prop:paranoid-asyn}
For all strategies $\vec{\sigma}$ in $\Gamma_{\ACT, asyn}$, there
exists a $k$-paranoid belief system $b$ that is consistent with
$\vec{\sigma}$. 
\end{proposition}

\begin{proof}
Consider a sequence $\vec{\sigma}^1, \vec{\sigma}^2, \ldots$ of
strategy profiles defined just as in the proof of
Proposition~\ref{prop:paranoid}, and consider a randomized scheduler
$\sigma_e$ such that all valid \emph{schedule prefixes} (i.e., ordered
list of players scheduled and messages delivered so far) occur with
positive probability. For instance, $\sigma_e$ could be a scheduler
that at each point in time either delivers a message or chooses a
player to be scheduled uniformly at random. Consider the belief system
$b$ induced by the previous sequence of strategies with scheduler
$\sigma_e$. An argument similar to that given in
Lemmas~\ref{lemma:aux} and ~\ref{lemma:aux2} shows that $b$ is
a $k$-paranoid belief system, as desired. 
\end{proof}

\subsection{Characterization of $SE_k(\Gamma_{d, asyn})$ and $SE_k^S(\Gamma_{d, asyn})$}\label{sec:desc-asyn}

In the asynchronous setting, the description of $SE_k(\Gamma_{d,
  asyn})$ is a bit more convoluted than those of
Section~\ref{sec:descriptions}. Reasoning similar to that used in 
the proof of
  Proposition~\ref{prop:desc1} shows that for a fixed scheduler, 
  a $k$-resilient (resp., strongly $k$-resilient) sequential
  equilibrium in $SE_k(\Gamma_{d, asyn})$ induces a $k$-resilient (resp.,
strongly $k$-resilient) correlated equilibrium in $\Gamma$. Thus,
$SE_k(\Gamma_{d, asyn}) \subseteq \mathcal{P}(CE_k(\Gamma))$ and
$SE_k^S(\Gamma_{d, asyn}) \subseteq \mathcal{P}(CE_k^S(\Gamma))$,
where $\mathcal{P(S)}$ denotes the power set of $S$ (i.e., the set of
all subsets of $S$). The next proposition gives a precise description
of $SE_k(\Gamma_{d, asyn})$ and $SE_k^S(\Gamma_{d, asyn})$.

\begin{proposition}\label{prop:desc-asyn}
Given a set $S$ of strategies, let $\mathcal{P}_=(S)$
be
the set of 
  nonempty subsets $S'$ of $S$ such that every element of $S'$ gives the
same expected utility to all players. If $\Gamma = (P,A,U)$ is a
normal-form game for $n$ players and $n \ge k$, then 
$SE_k(\Gamma_{d, asyn}) = \mathcal{P}_=(CE_k(\Gamma))$
and 
$SE_k^S(\Gamma_{d, asyn}) = \mathcal{P}_=(CE_k^S(\Gamma))$.
\end{proposition}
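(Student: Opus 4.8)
The plan is to prove the two inclusions of each equality separately; the ordinary and the strong variants are handled by the same argument, the only difference being whether a deviation is required to benefit \emph{all} or \emph{some} members of a coalition, so I describe them in parallel.

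\medskip
\noindent\emph{$SE_k(\Gamma_{d,asyn}) \subseteq \mathcal{P}_=(CE_k(\Gamma))$ (and likewise for the strong versions).} Let $\vec\sigma + \sigma_d$ be a $k$-resilient (resp.\ strongly $k$-resilient) sequential equilibrium of $\Gamma_{d,asyn}$ and let $S$ be the set of strategy profiles it implements; $S$ is nonempty because $\Gamma_{d,asyn}$ has at least one scheduler. First I would show $S \subseteq CE_k(\Gamma)$, reasoning exactly as in Proposition~\ref{prop:desc1}: for any fixed scheduler $\sigma_e$, the induced distribution $p_{\sigma_e}$ over action profiles of $\Gamma$ must be a $k$-resilient (resp.\ strongly $k$-resilient) correlated equilibrium, since otherwise some coalition $K$ with $|K| \le k$ could, after receiving its suggested actions, replace them by a better response at the last move of the underlying game --- a deviation available against $\sigma_e$ that strictly raises the payoff of all (resp.\ some) members of $K$, contradicting that $\vec\sigma + \sigma_d$ is in particular a $k$-resilient Nash equilibrium. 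Second I would show that all elements of $S$ give every player the same expected utility. If not, some player $i$ satisfies $u_i(\vec\tau) < u_i(\vec\tau')$ for $\vec\tau, \vec\tau'$ induced by schedulers $\sigma_e, \sigma_e'$. Let $\sigma_e''$ first schedule $i$ and thereafter behave like $\sigma_e'$ if $i$'s first action was to send a message to itself and like $\sigma_e$ otherwise; then against $\sigma_e''$ player $i$ profitably deviates by sending itself a message, contradicting the equilibrium condition. (If $\sigma_i$ already prescribes sending $i$ a message to itself, encode the signal by sending two such messages.) Hence $S \in \mathcal{P}_=(CE_k(\Gamma))$.

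\medskip
\noindent\emph{$\mathcal{P}_=(CE_k(\Gamma)) \subseteq SE_k(\Gamma_{d,asyn})$ (and likewise for the strong versions).} Fix $S \in \mathcal{P}_=(CE_k(\Gamma))$. Since only rational probabilities are allowed, $S$ is countable; write $S = \{\vec\tau^1, \vec\tau^2, \ldots\}$. Consider the strategy $\vec\sigma + \sigma_d$ in which every player sends an empty message to the mediator when first scheduled; the mediator, after having been scheduled $N$ times and then receiving its first message, samples $\vec a$ from $\vec\tau^N$ (or from $\vec\tau^1$ if $S$ has fewer than $N$ elements) and sends $a_i$ to each player $i$; each player plays the action received from the mediator and ignores all other messages. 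For clause (a) of ``implements'', every scheduler leads the mediator to pick some $\vec\tau^N \in S$, so every outcome lies in $S$; for clause (b), a scheduler that lets the mediator take $m$ steps before delivering any player's message induces $\vec\tau^m$, so every element of $S$ is realized. To see that $(\vec\sigma + \sigma_d, b)$ is a $k$-resilient (resp.\ strongly $k$-resilient) sequential equilibrium for a suitable $b$, take $b$ to be the belief system obtained by perturbing $\vec\sigma$ toward sending history-independent random messages among the players (as in the proof of Proposition~\ref{prop:paranoid}), so that a coalition receiving an unexpected message from another player regards it as uninformative; since the mediator never deviates, every player always believes it will receive its suggested action. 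A coalition's only effective levers are (i) whether and when its members message the mediator, which affects only the value of $N$, and (ii) what it plays at the last move of the underlying game. Lever (i) is payoff-irrelevant because every $\vec\tau^m \in S$ gives every player the same expected utility, so the payoff is independent of $N$; and conditional on any value of $N$, lever (ii) cannot benefit all (resp.\ some) members of the coalition because $\vec\tau^N$ is a $k$-resilient (resp.\ strongly $k$-resilient) correlated equilibrium. Combining these, no coalition deviation strictly helps all (resp.\ some) of its members at any $K$-information set.

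\medskip
The delicate step, and the main obstacle, is the sequential-rationality check in the second inclusion: one must argue carefully that (i) and (ii) really are the only levers and, more subtly, that a coalition cannot profitably combine partial information about $N$ (which it does not observe directly) with its last-move action choice. The point is that the witnessing player in the correlated-equilibrium inequality may depend on $N$, so decoupling the ``choose $N$'' effect from the ``choose action'' effect genuinely requires the hypothesis that all elements of $S$ yield the same expected-utility profile. The forward inclusion is comparatively routine, being essentially Proposition~\ref{prop:desc1} plus the self-message scheduler trick.
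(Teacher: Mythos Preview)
Your proposal is correct and follows essentially the same approach as the paper: the forward inclusion via the self-message scheduler trick (together with the Proposition~\ref{prop:desc1} argument for each fixed scheduler), and the reverse inclusion via the ``mediator counts $N$ before first message and samples from $\vec\tau^N$'' construction with the two-part check that $N$ is payoff-irrelevant and that, conditional on $N$, no coalition gains by replacing its suggested actions. Your treatment is in fact slightly more explicit than the paper's in one respect: you spell out a belief system $b$ (via history-independent perturbations) to justify \emph{sequential} rationality, whereas the paper's write-up of this proposition only verifies the Nash-style inequalities and leaves the belief system implicit.
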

\begin{proof}
In games with a mediator, we write $\sigma_d$ to denote a generic
strategy for the mediator, and $\vec{\sigma} + \sigma_d$ to denote a
strategy profile for the players and the mediator.
To show that $SE_k(\Gamma_{d, asyn}) \subseteq
\mathcal{P}_=(CE_k(\Gamma))$ and $SE_k^S(\Gamma_{d, asyn}) \subseteq
\mathcal{P}_=(CE_k^S(\Gamma))$, 
suppose by way of contradiction that some strategy $\vec{\sigma} +
\sigma_d$ in $SE_k(\Gamma_{d, asyn})$ (resp., $SE_k^S(\Gamma_{d,
  asyn})$) induces a set $S$ of strategies such that, for some player $i$,
there exist two strategies $\vec{\tau}, \vec{\tau}' \in S$ induced
by schedulers $\sigma_e$ and $\sigma_e'$, respectively, such that
$u_i(\vec{\tau}) < 
u_i(\vec{\tau}')$.  Consider a scheduler $\sigma_e''$ that does the
following: it first schedules player $i$. If $i$ sends a message to
itself when it is first scheduled, then $\sigma_e$ acts like
$\sigma'_e$, and otherwise it acts like 
$\sigma_e$.
By construction,
when the scheduler is $\sigma''_e$, 
$i$ gains by deviating from $\sigma_i$
and sending a message to itself when it is first scheduled.  (This
construction will not work if $\sigma_i$ already requires $i$ to send
a message to itself; in this case, we construct $\sigma''_e$
so that the signal from $i$ to the
scheduler is encoded differently, for example, by $i$ sending two
messages to itself.)  This
shows that $SE_k(\Gamma_{d, asyn}) \subseteq
\mathcal{P}_=(CE_k(\Gamma))$ and $SE_k^S(\Gamma_{d, asyn}) \subseteq
\mathcal{P}_=(CE_k^S(\Gamma))$. 

To prove the opposite inclusions, 
since we consider only equilibria with rational
probabilities, the set $CE_k(\Gamma)$ (resp., $CE_k^S(\Gamma)$) is
countable, as is any subset $S \subseteq CE_k(\Gamma)$ (resp., any
subset $S \subseteq CE_k^S(\Gamma)$). Given $S = \{\vec{\tau}^1,
\vec{\tau}^2, \ldots \}$ such that for all 
indices
$k, k'$
and
all
players $i$,
$u_i(\vec{\tau}^k) = u_i(\vec{\tau}^{k'})$,
consider
the following strategy $\vec{\sigma} + \sigma_d$.
Each player sends an
empty message to the mediator when it is first scheduled. Let $N$
be the number of times 
that the mediator
has been scheduled before
receiving a message from some player. The mediator samples
an action profile $\vec{a}$ from $\vec{\tau}^N$ (or from $\vec{\tau}^1$
if $S$ is finite and has fewer than $N$ elements), and sends $a_i$ to
each player $i$. Players play whatever they receive from the
mediator. It is straightforward to check that, for each strategy
$\vec{\tau}$ in $S$, there is a scheduler $\sigma_e$ that induces
$\vec{\tau}$ when the players and the mediator play $\vec{\sigma} +
\sigma_d$. To check that $\vec{\sigma} + \sigma_d$ is indeed a
$k$-resilient (resp., strongly $k$-resilient) correlated equilibrium,
note that all strategies in $S$ are $k$-resilient (resp., strongly
$k$-resilient) correlated equilibria of $\Gamma$ that give the same
utility to each player, and thus (a) 
after determining the number of times $t$ that the mediator is
scheduled before receiving the first message,  
no coalition of $k$ or less
players can benefit from playing an action profile  different from the
one suggested by the mediator, and (b) 
the expected utility of each player is independent of $t$. 
This shows that $\vec{\sigma} +
\sigma_d$ is indeed a $k$-resilient (resp., strongly
$k$-resilient) correlated equilibrium.
\end{proof}

Theorem~\ref{thm:main2} and Proposition~\ref{prop:desc-asyn} together imply:
\begin{corollary}
  If $\Gamma = (P,A,U)$ is a normal-form game for $n$ players and $n >
  4k$, then
  $SE_k(\Gamma_{\ACT, asyn}) = \mathcal{P}_=(CE_k(\Gamma))$
and 
$SE_k^S(\Gamma_{\ACT, asyn}) = \mathcal{P}_=(CE_k^S(\Gamma))$
with both the DM and AH approaches. 
\end{corollary}

For Bayesian games, we do not believe that there is a simple
description of $SE_k(\Gamma_{d, asyn})$.
To understand why, note that a key step in the proofs of
Theorems~\ref{thm:main} and \ref{thm:main2} is for players
to send their types to the mediator.
In asynchronous systems, the
mediator cannot  
distinguish between players that don't send their type to the mediator
and players
that send it 
but the receipt of the message is delayed by the scheduler.
Nevertheless, if players can somehow punish those that never send
their type to the mediator, 
we can guarantee that it is optimal for all players to send a type to the mediator, and thus
we can extend the constructions used in the proof of
Propositions~\ref{prop:desc-asyn} and \ref{prop:desc2} to
describe $SE_k(\Gamma_{d, asyn})$ and $SE_k^S(\Gamma_{d, asyn})$.  

\begin{definition}[\cite{Bp03,ADGH06}]
If $\Gamma = (P, T, q, S,U)$ is a Bayesian game for $n$ players and
$\mu \in Com_k(\Gamma)$ 
or $\mu \in Com_k^S(\Gamma)$,
then a  \emph{$k$-punishment equilibrium}
  with respect 
  to $\mu$ is a 
  $k$-resilient Bayesian Nash equilibrium 
  $\vec{\tau}$
  such that $u_i(\mu) >
  u_i(
  \vec{\tau}
  )$ for all players $i$. 
\end{definition}

Intuitively, a $k$-punishment equilibrium
with respect to $\mu$
is a strategy profile that is a 
$k$-resilient
Bayesian
Nash
equilibrium which, if played by 
at least $n-k$ 
players, results in all
players being worse off than they would be with $\mu$.
If a $k$-punishment equilibrium exists, then 
we have the following.

\begin{proposition}\label{prop:desc3}
If $\Gamma = (P, T, q, S,U)$ is a Bayesian game,
$S$ is a subset of $k$-resilient (resp., strongly $k$-resilient)
communication equilibria of $\Gamma$ such that, for all players $i$
and all types $t_i$ of $i$, the expected utility of $i$ given $t_i$ is
the same for all $\mu$ in $S$
(i.e., $u_i(\mu \mid t_i) = u_i(\mu' \mid t_i)$ for all $\mu, \mu' \in
S$), 
and there exists a $k$-punishment
equilibrium  
$\vec{\tau}$
with respect to some $\mu$ in $S$, then $S \in
SE_k(\Gamma_{d, asyn})$ (resp., $S \in SE_k^S(\Gamma_{d, asyn})$) with
the AH approach or with the default 
move approach if, for each player $i$, $i$'s default move is 
$\tau_i$.

\end{proposition}

Note that if 
$\vec{\tau}$
is a $k$-punishment equilibrium
with respect to some $\mu$ in $S$, then it is a
$k$-punishment equilibrium with respect to all $\mu \in S$, since all $\mu$
give the same utility to the players. 

\begin{proof}
This proof follows the same lines as the proof of
Proposition~\ref{prop:desc-asyn}, with the only difference that the
players must tell the mediator their type at the beginning of the
game. The punishment strategy is required to force players to tell
their true type; if players do not tell 
report a type,
the remaining
players will play their part of the punishment strategy (note that in
Proposition~\ref{prop:desc-asyn}, a punishment strategy is not
needed since players have no types
and thus the mediator requires no inputs).  
Moreover, this type must be their true type since all strategies in $S$ are $k$-resilient communication equilibria of $\Gamma$.
Since we are considering only distributions with rational
probabilities, the set $S$ is countable. Let $S = \{\mu^1, \mu^2,
\ldots\}$. Consider a strategy $\vec{\sigma} +  \sigma_d$ in
which each player sends its type to the mediator when it is first 
scheduled and the mediator waits until it receives the type
of each player. Let $N$ be the number of times that the
mediator is scheduled before receiving all the types. The
mediator samples an action profile $\vec{a}$ from $\mu^N$ (or from
$\mu^1$ if $S$ is finite and has fewer than $N$ elements), and sends
$a_i$ to each player $i$. 
If a player sent its true type to the mediator, it plays whatever it receives from the mediator. Otherwise, it plays the optimal action conditioned on what it sent and what it receives from the mediator.
If player $i$ never receives an action, it plays $\tau_i$
(either because it is the default action or because $i$ leaves it in
its will). As in the proof of Proposition~\ref{prop:desc-asyn}, it is
easy to check that $\vec{\sigma}$ induces $S$. 

We next show that $\vec{\sigma}$ is a $k$-resilient (resp., strongly
$k$-resilient) 
sequential equilibrium for some belief system $b$.
Consider a sequence of strategy profiles $\vec{\sigma}^1, \vec{\sigma}^2, \ldots$
such that, in $\vec{\sigma}^N$, each player $i$ acts as follows. The first time $i$ is scheduled, it sends 
its type $t_i$ to the mediator with probability $1 - 1/N$ (instead of
sending its type to the mediator with probability 1, as in
$\vec{\sigma}$). Moreover, each time it is scheduled, $i$ sends a
random message to each other player $j$ with probability
$1/N$. Whenever a player $i$ receives an action $a_i$  from the
mediator, it plays $a_i$. Otherwise, it plays $\vec{\tau}(t_i)$. By
construction, the strategy profiles $\vec{\sigma}^1 + \sigma_d,
\vec{\sigma}^2 + \sigma_d, 
\ldots$ are completely mixed and converge to $\vec{\sigma}$. Let $b$
be the belief system induced by $\vec{\sigma}^1 + \sigma_d,
\vec{\sigma}^2 + \sigma_d, \ldots$. According to $b$, since all
messages received from other players are randomly generated, they can
be ignored. Thus, $i$'s utility depends only on the message received
by the mediator. If player $i$ receives a message from the mediator,
then it must be that every player sent its type and therefore,
if $i$ sent its true type,
it
is optimal for $i$ to play the action received. 
If $i$ did not report its true type, by construction $i$ also plays its optimal action.
On the other hand, if $i$ never
receives a message from the mediator, then according to its beliefs
(as characterized by $b$), it must be
that no one else got a message and thus everyone else is going to play
their part of $\vec{\tau}$. Since $\vec{\tau}$ is a $k$-resilient Nash
equilibrium, 
playing $\tau_i$ is optimal. This shows that $(\vec{\sigma}, b)$ is a $k$-resilient (resp., strongly $k$-resilient)
sequential equilibrium.
\end{proof}

Proposition~\ref{prop:desc3} implies the following: 

\begin{corollary}
    If $\Gamma = (P, T, q, S,U)$ is a Bayesian game 
and $\mu$ is a $k$-resilient (resp., strongly $k$-resilient)
correlated equilibrium of $\Gamma$ such that  there exists a
$k$-punishment equilibrium  $\vec{\tau}$ w.r.t. $\mu$, then $\{\mu\}
\in SE_k(\Gamma_{d, asyn})$ (resp., $\{\mu\}  \in SE_k^S(\Gamma_{d,
  asyn})$) with the AH approach or with the default 
move approach if the default move is $\vec{\tau}$.
\end{corollary}
}

\section{Extending the set of equilibria}\label{sec:extension}

Up to now we have considered only distributions over action profiles
that involve rational probabilities.    We can extend our result to
some extent to distributions described by real numbers.  Specifically,
suppose that players can send messages and operate with real
numbers, and
an equilibrium $e = \sum_i \lambda_i e_i$ is a convex combination of
countable
rational equilibria $e_i$. If  players could
jointly sample a real number $r \in [0,1)$ uniformly at random,
they could easily implement $e$: after sampling $r \in [0,1]$
uniformly at random, they play their part of  
the rational equilibrium $e_m$ such that $\sum_{i = 1}^{m-1}
\lambda_i \le r < \sum_{i = 1}^{m} \lambda_i$.
(Our earlier results show that $e_m$ can be 
implemented.)
Note that this procedure guarantees that players 
implement $e_m$ with probability $\lambda_m$, as desired.

The following is  a $k$-resilient
strategy $\vec{\tau}$ that allows players to jointly sample $r$ in the
synchronous setting if $n > 3k$: 
\commentout{
In Section~\ref{sec:results} we discussed how players could implement
some strategies with non-rational probabilities given that they could
send real numbers in their messages, and that this reduced to the
problem of jointly sampling a real number $r \in [0,1)$ uniformly at
  random. In this section, we present a $k$-resilient strategy profile
  $\vec{\tau}$ that computes $r$ if $n > 3k$.  
In the synchronous setting, the description of $\vec{\tau}$ is as follows:
}
\begin{enumerate}
    \item Each player privately generates a random number $r_i \in [0, 1)$ and broadcasts it using Bracha's 
    $k$-resilient 
            broadcast protocol~\cite{Bracha87}. If $i$ doesn't broadcast a
    number, $r_i$ is assumed to be $0$. 
      \item Players take $r:= frac(r_1 + \ldots + r_n)$, where $frac(x)$
      is the fractional part of $x$. 
\end{enumerate}

\fullv{
In the asynchronous setting, this process is harder because players
cannot distinguish players that didn't broadcast a value from players
that are being delayed by the scheduler. Moreover, players cannot
agree on using only a subset $C$ of the values shared since players
might want to influence the computation of $C$ in such a way that the
number $r$ chosen is beneficial for them. However, if $n > 4k$, each
player can generate $r_i$ as in the synchronous setting, and then
compute $r$ using a $k$-resilient secure computation~\cite{BCG93}. 
}

\section{Conclusion}
We have shown that, for all Bayesian games $\Gamma$ for $n$ players,
all $k$-resilient sequential equilibria in $\Gamma_d$ can also be
implemented in $\Gamma_{\ACT}$ if $n > 3k$ in the synchronous setting
or $n > 4k$ in the asynchronous setting. These results are optimal
since it follows from \cite{ADH07} and \cite{GH21} that if $n \le 3k$
in the synchronous case or $n \le 4k$ in the asynchronous case, there
are $k$-resilient sequential equilibria in $\Gamma_d$ that cannot be
implemented even with a $k$-resilient Nash equilibrium. 

Our results also allow us to characterize the sets $SE_k(\Gamma_{\ACT})$
of all possible strategies in $\Gamma$ induced by $k$-resilient
sequential equilibria of $\Gamma_{ACT}$, generalizing
results of Gerardi \nciteyear{Gerardi04}. If players have perfect information
(i.e. $\Gamma$ is a normal-form game), then  $SE_k(\Gamma_{\ACT})$ is
the set of $k$-resilient correlated equilibria of $\Gamma$; if
$\Gamma$ is a Bayesian game, then $SE_k(\Gamma_{\ACT})$ is
the set of $k$-resilient communication equilibria. In the asynchronous
setting, we show that $SE_k(\Gamma_{\ACT}) = SE_k(\Gamma_{d})$, but
characterizing the set $SE_k(\Gamma_{d})$ of $k$-resilient sequential
equilibria with a trusted mediator is still an open problem. However, if $\Gamma$ has a $k$-punishment equilibrium, 
then $SE_k(\Gamma_{d})$ is also the set of $k$-resilient communication equilibria of $\Gamma$.

If $\Gamma$ has a $k$-punishment equilibrium, Abraham, Dolev, Gonen,
and Halpern~\nciteyear{ADGH06} and Abraham, Dolev, Geffner, and
Halpern~\nciteyear{ADGH19} showed that
all $k$-resilient Nash
equilibria in $\Gamma_d$ could be implemented in $\Gamma_{\ACT}$ if $n
> 2k$ in the synchronous setting or $n > 3k$ in the asynchronous
setting respectively. The question of whether these results can be
extended to $k$-resilient sequential equilibria is still open.

\shortv{
All of the results presented in this paper assume a \emph{synchronous}
setting: communication 
proceeds in atomic rounds, and all messages sent during round $r$
are received by round $r+1$.  In an \emph{asynchronous} setting,
there are no rounds and messages sent by the players may take
arbitrarily long to get to their 
recipients. 
Asynchrony is a standard assumption in 
the distributed computing
and cryptography
literature, precisely because many systems
that practitioners care about,
such as markets or the internet,
are asynchronous in practice.
Considering asynchronous systems can have significant implications for how
players will play a game.
For instance, in an online second-price auction, the seller can
benefit from inserting fake transactions whose value is between that
of the
highest and second-highest bid immediately after a new highest bid
is received, thus increasing the second-highest price at no 
cost~\cite{Roughgarden2020}.
This type of
attack can be carried out only in asynchronous or \emph{partially synchronous}
systems (where there is an upper bound on how long messages take to
arrive), since in a synchronous system, all bids are received at the
same time. (In a synchronous system, the seller would have to guess
what the players will bid in order to benefit from a fake transaction.) 
\commentout{
For instance, all blockchain implementations assume \emph{partial
    synchrony}, where there is an upper 
bound on how long messages take to arrive.
}

The simplicity of our approach allows us to generalize our main result
to the asynchronous setting with very little work. 
Given a $k$-resilient sequential equilibria $\vec{\sigma}$ in an
asynchronous game with $n$ players and a mediator, we extend 
implementation given by Abraham et al. \nciteyear{ADGH19} of asynchronous
$k$-resilient Nash equilibria in $\Gamma_{\ACT}$ for $n > 4k$
by constructing a consistent $k$-paranoid belief
system. Reasoning analogous to that used in the synchronous case
shows that the resulting strategy and belief system form a
$k$-resilient sequential equilibria.
Details can be found in \cite{GH23}.
In the asynchronous setting, this
result is also optimal, since it matches the $n > 4k$ lower
bound given by Geffner and Halpern~\nciteyear{GH21}. 

}


\newpage
\appendix

\section{Game theory: basic definitions}\label{sec:definitions}

In this section, we review the basic concepts and definitions that
are needed for this paper.
\commentout{
We begin by reviewing basic
game-theoretic definitions, and then review some distributed computing 
primitives that are used  in our protocols.
}
We begin by presenting
several types of games that are common in
the literature, 
each of them with its own settings. For each of these games, we also
introduce some of 
the \emph{solution concepts} that we are interested in. A solution
concept describes what rational agents should do in a game, according
to some theory  of rationality.  Typically, a solution concept
determines for each game a set of \emph{strategy profiles},  
where a strategy profile $\vec{s} = (s_1, \ldots, s_n)$ is a tuple
consisting of one strategy for each agent, and a strategy is 
a description of how each player should
move/act in all ``situations'' in the given game.  (As we shall see,
what counts as a situation depends 
on
the type of game we consider.) 

\subsection{Nash, Correlated, and Sequential Equilibrium}

A normal-form game $\Gamma$ is a tuple $(P, A, U)$, where
$P =
\{1, \ldots, n\}$ is the set of \emph{players}, $A = A_1 \times \cdots 
A_n$, where  $A_i$ is the set of possible 
actions
for
player $i \in P$, and $U = (u,
\ldots, u_n)$ is an $n$-tuple of \emph{utility}
functions $u_i: A \rightarrow \mathbb{R}$, again, one for each player
$i \in P$.  
A
\emph{pure strategy
  profile} 
  $\vec{a}$
  is
  an $n$-tuple of 
  actions
  $(a_1, \ldots,
  a_n)$, with $a_i \in A_i$.
  A \emph{mixed strategy} for player $i$ is
an element of $\Delta(A_i)$, the set of probability distributions
on $A_i$.  We extend $u_i$ to mixed strategy profiles
$\sigma = (\sigma_1, \ldots,
\sigma_n)$ by defining
$u_i(\sigma) = 
\sum_{(a_1,\ldots,a_n) \in A} \sigma_1(a_1)\ldots\sigma_n(a_n) u_i(a_1,
\ldots, a_n)$: that is, the sum over all pure strategy profiles $\vec{a}$ of the
probability of playing $\vec{a}$ according to
$\sigma$ times
the utility 
of $\vec{a}$ to $i$.

\begin{definition}\label{def:nash-eq}
In a normal-form game $\Gamma$, a (mixed) strategy profile
$\vec{\sigma} = (\sigma_1, \ldots, \sigma_n)$ is a \emph{Nash equilibrium}
if, for all $i \in P$ and strategies 
$\sigma_i' \in \Delta(A_i)$,
$$u_i(\sigma_i, \vec{\sigma}_{-i}) \ge u_i(\sigma'_i,
\vec{\sigma}_{-i}).$$ 
\end{definition}

\begin{definition}\label{def:correlated-eq}
  In a normal-form game $\Gamma = (P, A, U)$, a distribution
    $p \in \Delta(A)$ is a \emph{correlated equilibrium} (CE) if,
  for all  
  $i \in P$
and all $a'_i, a''_i \in A_i$ such that $p(a'_i) >
0$,  $$\sum_{\vec{a} \in \Delta(A) : a_i = a'_i} u_i(a'_i,
\vec{a}_{-i})p(\vec{a} \mid a_i = a'_i) \ge \sum_{\vec{a} \in \Delta(A)
    : a_i = a'_i} u_i(a_i'', 
\vec{a}_{-i})p(\vec{a} \mid a_i = a'_i).$$ 
\end{definition}

Intuitively, for a distribution $p$ over 
action profiles
to be a
correlated equilibrium, it cannot be worthwhile
for player $i$ to deviate from 
$a_i$ to $a_i'$
if $i$ knows only that 
$\vec{a}$ is sampled from distribution $p$ (and $a_i$).
Note that all Nash equilibria are correlated equilibria as well.

An \emph{extensive-form game} $\Gamma$ is a tuple $(P, G, M, U, R)$, where
\begin{itemize}
  \item
    $P = \{1, \ldots, n\}$ is the set of players.
\item $G$ is a
game 
tree
in which each node represents a possible state of the game and
each of the edges going out of a node an action that a player
can perform in the state corresponding to that node. (In the sequel,
we refer to the edges as actions.)
\item
  $M$ is a 
function that associates with each 
non-terminal
node of $G$ (we let 
$G^\circ$
denote the 
non-terminal
nodes) a player in $P$
that describes which player moves at each of these nodes; if $M(v) =
i$, then all the edges going out of node $v$ must correspond to
actions that player $i$ can play.
\item
$U = (u_1, \ldots, u_n)$ 
is an $n$-tuple of 
utility
functions from the leaves of $G$ to $\mathbb{R}$.
\item
$R = (\sim_1, \ldots, \sim_n)$ is an $n$-tuple of equivalence
relations on the nodes of  
$G^\circ$, one for each player.
Intuitively, if $v \sim_i v'$, then nodes $v$
and $v'$ are indistinguishable to player $i$.
Clearly, if $v
\sim_i v'$ for some $i \in [n]$, 
then the set of possible actions that can be performed at $v$ and $v'$
must be identical and $M(v) = M(v')$
(for otherwise, $i$ would have a basis for
distinguishing $v$ and $v'$).  
\end{itemize}
The equivalence relation $\sim_i$ induces a partition of $G^\circ$ into
equivalence classes called \emph{information sets} (of player $i$).
It is more standard in game theory to define $\sim_i$ as an
equivalence relation only on the nodes where $i$ moves.  We define it
on all 
non-terminal
nodes because if $K$ is a subset of players, we are
then able to define $\sim_K$ as the the intersection of $\sim_i$ for 
$i \in K$.  
Intuitively, $v \sim_K v'$ if the agents in 
$K$ cannot distinguish $v$ from $v'$, 
even if they pool all their information together.
We use the equivalence relation $\sim_K$ when defining $k$-resilient
sequential equilibrium.

A \emph{(pure) strategy}
$s_i$
for player $i$ in an extensive-form game
$\Gamma$ is a function
that maps each 
node $v$ where $i$ moves
to an action that can be performed at $v$,
with the constraint that if two nodes are indistinguishable to $i$, then $s_i$ must choose the same action on both. More precisely, if 
$v \sim_i v'$ then
$s_i(v) = s_i(v')$.
Each pure strategy profile $\vec{s}$
induces a unique path from the root of the tree to some
terminal node
$\ell \in G \setminus G^\circ$ where, given node 
$v$
on the path, the
next node in the path is 
the node reached from $v$ by the edge (action) $s_{M(v)}(v)$.
The payoff of player $i$
given strategy $\vec{\sigma}$ is given by $u_i(\ell)$, although we
also write $u_i(\vec{\sigma})$ for simplicity. 
A
\emph{behavioral strategy} for player $i$ 
allows randomization. More precisely, a behavioral strategy
is a function $s_i$ that maps each node 
$v$
such that 
$M(v) = i$
to a  distribution over
$i$'s possible actions at 
$v$ (again, with the requirement that $s_i(v) = s_i(v')$ if $v \sim_i v'$).
We denote by $u_i(\vec{\sigma})$ the
expected payoff of player $i$  if players play (behavioral) strategy 
profile $\vec{\sigma}$. 

Nash equilibrium is defined for extensive-form games just as it is in
strategic-form games.
\fullv{
Given a NE $\vec{\sigma}$, the \emph{equilibrium path} consists of
the nodes of $G$ that can be reached with positive probability
using $\vec{\sigma}$. 
}
It is well known that in extensive-form games, Nash equilibrium does
not always 
describe what intuitively would be a reasonable play.
\fullv{
For instance,
consider the following game for two players in which $(p_1,p_2)$
describes the payoff of players 1 and 2 respectively: 
\begin{center}
    \includegraphics[]{Example1.pdf}
\end{center}
In this case, the strategy profile in which
player 1 plays $B$ at node $a$ and player 2 plays $L$ at node $b$ and
$R$ at $c$ 
is a Nash equilibrium.
However, the reason that it is better for player 1 to play
$B$ is that if she plays $A$, then player 2 would play $L$ and they
would both get a utility of $-5$. This means that player 1 is being
influenced by an irrational threat, since if 1 plays $A$, it is in
player 2's best interest to play $R$ instead of $L$. In order 
to avoid these situations, we can extend the notion of Nash
equilibrium to \emph{subgame-perfect Nash equilibrium}, in which,
roughly speaking, the
strategy must be a Nash equilibrium, not only in $\Gamma$, but in all
the of subgames of $\Gamma$ as well. In
the example above, the only subgame-perfect equilibrium is the one
given by $\sigma_1(a) = A$, $\sigma_2(b) = R$ and $\sigma_2(c) = R$.

Unfortunately, subgame-perfect equilibrium may not be well-defined
if  players have nontrivial information sets.
Consider the following game, where $b$ and $b'$ are in the same
information set for player 2, as are $c$ and $c'$.
\begin{center}
    \includegraphics[]{Example2.pdf}
\end{center}
At node $b'$, player 2 would play $L$; 
at $b$, he would play $R$.
For player 2 to decide its
move, it must have a belief about the probability of being at each
node of the information set, and this belief must be consistent with
the strategy $\vec{\sigma}$ being played. For example, if
$\sigma_1(a) = B$, then
if player 2 is in information set $\{b, b'\}$,
then 2 would be sure that the true node is $b'$.
    }
      \shortv{We focus here on what is perhaps the most common
        refinement of Nash equilibrium considered in extensive-form
        games: \emph{sequential equilibrium}.  Roughly speaking, a
        strategy in a sequential equilibrium if it is a best response
        even to deviations off the equilibrium path.  To make this
        precise, we must consider player's beliefs off the equilibrium
        path.}
\fullv{        
  We capture the players' beliefs by using a \emph{belief system} $b$, }
\shortv{ We capture these  by using a \emph{belief system} $b$, }
which is
a function from information sets $I$ to a probability
distribution over nodes in $I$. Intuitively, if $I$ is an information
set for player $i$, then $b$ represents
how likely it is for $i$ to be at each of the nodes of $I$.
Note that if $\vec{\sigma}$
is \emph{completely mixed}, which means that all actions are taken with positive probability, then
$b$ is uniquely determined by Bayes' rule.
More generally, we 
 say that a belief 
 system $b$ is \emph{consistent}
 with a strategy $\vec{\sigma}$ if there exists a
 sequence  $\vec{\sigma}^1, \vec{\sigma}^2,  \ldots$ of
 completely-mixed strategies that converges to $\vec{\sigma}$ such that the
 beliefs 
induced by Bayes' rule converge to $b$.
Given an extensive-form game $\Gamma = (P,G, M, U, R)$, a strategy
profile $\vec{\sigma}$, a belief system $b$, and an information set
$I$ for player $i$, let $u_i(\vec{\sigma}, I, b)$  denote $i$'s
expected utility conditional on being in information set $I$ and having
belief system $b$, if $\vec{\sigma}$ is played.

\begin{definition}[\cite{KW82}]\label{def:seq-eq}
A pair $(\vec{\sigma},b)$ consisting of a strategy profile
$\vec{\sigma}$ and a belief system $b$ consistent with $\vec{\sigma}$
is a \emph{sequential equilibrium}
if, for all players
$i$, all information sets $I$ of player $i$, and all strategies
$\tau_i$ for player $i$, $$u_i(\vec{\sigma}, I, b) \ge u_i((\tau_i,
\vec{\sigma}_{-i}), I, b).$$
\end{definition}

Even though it is standard to define a sequential equilibrium as a
pair $(\vec{\sigma}, b)$ consisting of strategy profile and a belief
system, for convenience we also say that a strategy profile
$\vec{\sigma}$ is a sequential equilibrium if there exists a belief
system $b$ such that $(\vec{\sigma}, b)$ is a sequential
equilibrium. 

\subsection{Bayesian games}\label{sec:bayesian}

In all of the previous definitions, the utility of each player is
assumed to be common knowledge (\emph{perfect information}). However,
this is not always the case. Bayesian games capture this idea by
assuming that each player $i$ has a type 
$t_i \in T_i$ sampled from a
distribution $q \in \Delta (T)$, where $T = (T_1, \ldots, T_n)$, and
that the utility $u_i$ of $i$ is not only a function of the action
profile being played, but also of its type $t_i$.
Formally, a \emph{Bayesian games} is a tuple $(P, T, q, A, U)$, where,
as in  normal-form games, 
$P$, $A$, and $U$ 
are  the set of players, their 
actions, 
and their utility
functions, respectively;
$T$ is the set  
of possible type profiles, and $q$ is a distribution in $\Delta(T)$.

A \emph{strategy} in a Bayesian game for player $i$ is a map $\mu_i :
T_i \rightarrow \Delta(A_i)$. Intuitively, 
a strategy in a Bayesian game tells player $i$ how to choose its
action given its type. Since the distribution $q$ is common knowledge, given a  strategy profile $\vec{\mu}$ in $\Gamma$, the expected
utility of a player $i$ 
is

\begin{equation}\label{eq:expected-payoff}
u_i(\vec{\mu}) = \sum_{t_i \in T_i} q(t_i) \sum_{\vec{t}}
q(\vec{t} \mid t_i) u_i(\vec{\mu}(\vec{t})),
\end{equation}
where 
$u_i(\vec{\mu}(\vec{t}))$ denotes the expected utility of player
$i$ when an action profile 
is chosen
according to 
$(\mu(\vec{t}))$.
This allows us to define Bayesian Nash equilibrium as follows:

\begin{definition}\label{def:bayesian-nash}
In a Bayesian Game $\Gamma = (P,T, q, A, U)$, a strategy profile
$\vec{\mu} := (\mu_1, \ldots, \mu_n)$ 
is a \emph{(Bayesian) Nash equilibrium} if, 
for
all players $i$ and all strategies $u'_i$ for $i$,
$$u_i(\vec{\mu}) \ge u_i(\vec{\mu}_{-i}, \mu'_i).$$
\end{definition}

We can generalize correlated equilibrium to
Bayesian games as
follows. We can view 
a correlated equilibrium as a distribution $p \in \Delta(A)$ such that
if a trusted mediator samples an action profile $\vec{a} \in p$ and
sends action $a_i$ to each player $i$, it is always better for $i$ to
play $a_i$ rather than something else. In a Bayesian game,
the mediator instead samples the action
profile from a distribution that depends on the type profile.
More precisely, suppose that players send their types to a
trusted mediator, the mediator samples an action profile
$\vec{a}$ from a distribution 
$\mu(\vec{t})$
that depends on the type
profile $\vec{t}$ received,
and then sends action $a_i$ to each player $i$. We
say that 
$\mu$ 
is a
\emph{communication equilibrium} if it is optimal 
for the players to (a) tell their true type to the mediator, and (b)
play the action sent by the mediator.
The following definitions make this precise.

\begin{definition}\label{def:comm-eq}
\cite{F86,Myerson86}
  A 
  correlated strategy profile
  $\mu : T \rightarrow \Delta(A)$ is a \emph{communication equilibrium} of
    Bayesian Game $\Gamma = (P,T, q, A, U)$ if, for all $i \in P$, all
    $t_i \in T_i$, all
    $\psi: T_i \rightarrow T_i$
    and all $\varphi : A_i \rightarrow
        A_i$, we have that $$\sum_{\vec{t}_{-i} \in T_{-i}} \sum_{\vec{a} \in A} q(t_{-i},
t_i)\mu(\vec{a} \mid \vec{t}_{-i}, t_i)u_i(\vec{t}_{-i}, t_i,\vec{a})
\ge $$ $$\sum_{\vec{t}_{-i} \in T_{-i}} \sum_{\vec{a} \in A} q(t_{-i},
t_i)\mu(\vec{a} \mid \vec{t}_{-i}, \psi(t_i))u_i(\vec{t}_{-i},
t_i,\vec{a}_{-i}, \varphi(a_i)).$$ 
\end{definition}

We can combine the notions of extensive-form game and Bayesian game in
the obvious way to get \emph{extensive-form Bayesian games}: we start
with an extensive form game, add a type space $T$ and a commonly known
distribution $q$ on $T$, and then 
have the utility function  depend on the type profile as well the 
terminal node
reached. We leave formal details to the reader.

\section{VSS and CC: a review}\label{sec:tools2}

In this section, we present the two main primitives used to construct
the strategies required for the 
proof of Theorem~\ref{thm:main}.
The purpose of these primitives is that agents are
able to distribute information between them (verifiable secret
sharing), and compute any relevant data from the information shared
without learning anything besides the data itself (circuit computation).

Intuitively, verifiable secret sharing is designed to distribute
between the players a given piece of information $y$ in some finite
field $\mathbb{F}_q$ in such a way that each agent $i$ knows only a
part $y_i$ of the secret that does not reveal any information about
the secret by itself. In fact, with $k$-resilient verifiable secret
sharing, even if $k$ agents collude and put their information
together, they still cannot deduce anything about the shared secret,
but any subset of $k+1$ agents can compute the secret with no error
given their pieces of information. To achieve these properties, the
idea is that the agent $i$ that knows the secret $y$ shares it
using Shamir's secret-sharing scheme \cite{S79}: $i$ gives each
agent $j$ a value $y_j$ such that there exists a polynomial $p$ of
degree $k$ such that $p(j) = y_j$ for each $j$ and $p(0) =
y$. It is easy to check that, if $p$ is chosen uniformly at random,
then the values $y_i$ received by any subset of $k$ agents is
uniformly distributed in $\mathbb{F}_q^k$, and thus convey no
information about $y$. However, any subset of $k+1$ agents can
reconstruct the secret $y$ by interpolating their $k+1$ points. Note
that,
since up to $k$ agents may misreport their values,
players cannot just simply reconstruct the secret using the first 
$k+1$ points they receive.
In fact, it can be shown that at least $3k+1$ players are required in
order to reconstruct the secrets with no error. Suppose that there
exists a mechanism that splits the information about a certain secret
between the players in such a way that (a) if no player defects, every
subset of $k+1$ players can combine their information shares in order to
reconstruct the secret with no error, and (b) no subset of $k$ or less
players can learn anything about the secret, even if they combine
their information shares. If at most $k$ players defect, honest
players can do the following to guarantee that the reconstructed
secret is indeed correct. They first wait until they find a
subset $X$ of $2k+1$ players such that the information shares of all
subsets of $k+1$ players in $X$ define the same secret $s$. If this
occurs, they can guarantee that the original secret is $s$. To see
this, note that at most $k$ players can misreport their
shares. Therefore, at least one of the subsets of $k+1$
players in $X$ 
consists of only  honest players, which means that $s$ must be
the correct secret. For this strategy to work, it is necessary that
honest players can somehow guarantee that they will always find such a
subset $X$, and this is why at least $3k+1$ players are
necessary.\footnote{Note that this is not a complete proof since it
shows only that $3k+1$ players are necessary for this particular
strategy to work. For a complete proof, see~\cite{ADH07, GH21}.}
Since at most $k$ players defect, the subset $X$ defined by the
players that do not defect satisfies all the desired
properties. Applying this idea to Shamir's secret-sharing scheme, it
follows that players have to wait until they find a polynomial that
interpolates $2k+1$ of the received shares.  
\commentout{
In fact, they must wait until receiving at
least $2k+1$ points that lie on the same polynomial $p$ of degree
$k$. In that case, even if $k$ players misreport their values, at least
$k+1$ of the points in $p$ are guaranteed to be reported by honest
players, which ensures that $p$ defines the correct secret. At a high
level, this is why this primitive requires that the number of players
$n$ satisfies $n > 3k$: even if $k$ players misreport their values, at
least $2k+1$ of those values are shared by honest players, and thus
all players are guaranteed to eventually find $2k+1$ points that lie
on the same polynomial of degree $k$,
which allows them to
reconstruct the secret correctly.
}
\commentout{
if a secret is distributed this way, then there is a way to
reconstruct the original secret $y$ even if $k$ of the agents report
wrong values: if $2k+1$ points lie on the same polynomial $p$, then
$p$ is guaranteed to be the correct polynomial since at least $k+1$ of
the points reported are correct. This means that if the number of
players $n$ satisfies $n > 3k$, they just have to find a subset
of $2k+1$ points that lie on the same polynomial to 
reconstruct the secret with no error. This subset of points is
guaranteed to exist since at most $k$ points are reported incorrectly.  
}

In order to share a secret using Shamir's secret-sharing scheme, it is
unfortunately not enough to  have the sender choose a polynomial $p$ uniformly
at random such that $p(0) = y$ and send each agent $i$ the value of
$p(i)$. This protocol is vulnerable when the sender is malicious: it
could generate $n$ points that do not lie on the same polynomial, and
agents could never reconstruct the secret. In order to guarantee that
the shared points define some secret, players have to follow a
non-trivial protocol that checks the validity of the points without
leaking information about the underlying secret.  In the synchronous
setting,  Ben-Or,  Goldwasser, and Wigderson \nciteyear{BGW88} provide
a $k$-resilient VSS protocol if $n > 3k$; in the asynchronous
setting, Ben-Or, Canetti, and Goldreich \nciteyear{BCG93} provide a 
$k$-resilient VSS protocol if $n > 4k$. Intuitively, in asynchronous systems,
$n > 4k$ is necessary since $k$ of the $4k+1$ players might
be delayed arbitrarily (and are indistinguishable from deviating
players that didn't send a message), and $k$ might send the wrong values,
which leaves $2k+1$ points to interpolate without error. A more
rigorous description of the properties of VSS can be found in
Appendix~\ref{sec:VSS}. 

In circuit computation, players generate the shares of either the sum
or product of two shared secrets without leaking information about the
secrets themselves. More precisely, suppose that secrets $y,z \in
\mathbb{F}_q$ are shared among the players.
Then $k$-resilient circuit          %
computation that allows each player to compute its share
of $y+z$ or $yz$ without leaking any information about $y$ or $z$ to
any coalition of at most $k$ players. By successively using this
primitive, players can compute the shares of any secret that is the
output of a circuit with addition and multiplication gates given the
inputs. This, in turn, allows agents to compute the shares of the
output of any function whose domain and range are finite, since,
without loss of generality, we can 
then take the domain and range to be $\mathbb{F}_q$ for some $q$
sufficiently large, and any function
$f: \mathbb{F}_q \rightarrow \mathbb{F}_q$ can be viewed as a
polynomial on $\mathbb{F}_q$, and so 
involves only addition and multiplication.
An implementation of a $k$-resilient circuit computation protocol is
given by Ben-Or, Goldwasser, and Wigderson \nciteyear{BGW88} 
in the case of synchronous systems, and by Ben-Or, Canetti, and
Goldreich \nciteyear{BCG93} in the case of
asynchronous systems. A more detailed description of the
properties of CC can be found in Appendix~\ref{sec:CC}. 

\subsection{Verifiable secret sharing}\label{sec:VSS}

Verifiable secret sharing allows a player (the sender) to securely
distribute shares of a private value $v$ among all players.
A protocol $\vec{\sigma}$ is a \emph{$k$-resilient implementation of VSS in
synchronous systems} if, for all senders $s \in [n]$, all coalitions $K
\subseteq [n]$ of at most $k$ players, and all strategies
$\vec{\tau}_K$ for players in $K$, the following holds of
$(\vec{\sigma}_{-K}, \vec{\tau}_K)$: 
\begin{itemize}
    \item [(a)] All players not in $K$ terminate.
          \item [(b)] If $s^i$ is $i$'s output, then there exists a
      polynomial $r$ of degree $k$ such that $r(i) = s^i$ for all $i
      \not \in K$. Moreover, if the sender $s$ is not in $K$, $r(0)$
      is its input $v$. 
    \item [(c)] If the sender $s$ is not in $K$, the distribution over
      histories $h_K^v$ of players in $K$ when $s$ shares $v$ is
the same for all $v \in \mathbb{F}_q$.  
\end{itemize}

Properties (a) and (b) guarantee that, regardless of what a coalition of at
most $k$ players does, all honest players terminate and their
outputs are consistent even when the sender is deviating from the
protocol, which means that even if the sender decides not to share an
input or not send messages at all, all honest players eventually
agree on some shares. Property (c) guarantees that no coalition of $k$
players can learn anything about the secret being shared if the sender
is not in the coalition. 

In asynchronous systems, properties (a) and (b) cannot be guaranteed
simultaneously, since a player that sends no messages is
indistinguishable from a player that is being delayed by the
scheduler. In this case we have that: 
\begin{itemize}
    \item [(a1)] If the sender is not in $K$, all players not in $K$ terminate.
    \item [(a2)] If a player $i \not \in K$ terminates, all players not in $K$ terminate.
    \item [(b)] If $S$ is the subset of players not in $K$ that
      terminate and $s^i$ is the output of player $i \in
      S$, then  there exists a polynomial $r$ of degree $k$ such that
      $r(i) = s^i$ for all $i \in S$. Moreover, if the sender $s$ is
      not in $K$, $r(0)$ is its input $v$. 
          \item [(c)] If the sender $s$ is not in $K$, then for all schedulers,
      the distribution over histories $h_K^v$ of players in $K$ when
            $s$ shares $v$ is the same for all $v \in
      \mathbb{F}_q$. 
\end{itemize}

Note that, in the asynchronous setting, a player might not terminate if
the sender 
deviates from the protocol. However, the protocol must guarantee that
the outputs of the honest players are consistent, which means that
either they all terminate or they all don't, and, if they do, their
outputs should all lie on the same polynomial of degree $k$ (which
encodes the secret shared if the sender is honest). Finally,
note that (c) is quantified over all schedulers, since histories
depend on the scheduling of the players and messages.


Since the shares of a given secret $s$ are just evaluations of a
polynomial $p$ of degree $k$ at $1, 2, \ldots, n$, if a player $i$
learns at least $k+1$ of the shares of $s$, it can compute $p$ by
interpolating the given shares and then compute the secret $s :=
p(0)$. Note, 
however, that reconstructing a secret this way is not $k$-resilient,
since if only one player lies about its share to $i$, then $i$ would
not reconstruct the right secret. To deal with this concern, $i$
waits for $2k+1$ shares that lie on the same polynomial $p$.
This allows $i$ to be sure that, if at most $k$ players lie, then at
least $k+1$ of 
those shares are truthful, and thus define the correct polynomial $p$.

\subsection{Circuit computation}\label{sec:CC}

Circuit computation allows players to compute the shares of the sum or
product of two inputs without revealing any information about
them. More precisely, given shares $\{s_v^i\}_{i \in [n]}$ of $v$ and
$\{s_{v'}^i\}_{i \in [n]}$ of $v'$, we say that $\vec{\sigma}$ is a
$k$-resilient implementation of a \emph{multiplication gate} in both
synchronous and asynchronous systems if, for all coalitions $K
\subseteq [n]$ of at most $k$ players, and all strategies
$\vec{\tau}_K$ for players in $K$ the following holds for
$(\vec{\sigma}_{-K}, \vec{\tau}_K)$: 
\begin{itemize}
    \item [(a)] All players not in $K$ terminate.
          \item [(b)] If $s^i$ is $i$'s output, then there exists a
      polynomial $r$ of degree $k$ such that $r(0) = vv'$ and $r(i) =
      s^i$ for all $i \not \in K$. 
    \item [(c)] The distribution over histories $h_K^v$ of players in
      $K$ is the same for all sets 
      $\{s_v^i\}_{i \in [n]}$ and $\{s_{v'}^i\}_{i \in [n]}$ of shares.  
\end{itemize}

As in the case of VSS, properties (a) and (b) guarantee termination
and consistency among the outputs of honest players, while property
(c) guarantees that no coalition of at most $k$ players can deduce
anything from $v$, $v'$, or $vv'$ by running $\vec{\sigma}$. The
definition of a $k$-resilient \emph{addition gate} is analogous. A
$k$-resilient circuit computation protocol consists of a $k$-resilient
implementation of an addition gate and a $k$-resilient implementation
of a multiplication gate. 

Note that, in the case of the sum, it is easy to check that if
$\{s_v^i\}_{i \in [n]}$ are shares of some secret $v$ and
$\{s_{v'}^i\}_{i \in [n]}$ are shares of $v'$, then $\{s_v^i +
s_{v'}^i\}_{i \in [n]}$ are shares of $v + v'$. Thus, implementing a
$k$-resilient addition gate is trivial.
Implementing a $k$-resilient multiplication gate requires more
work.
(See \cite{BCG93,BGW88} for details.)


\section{Consensus protocols}\label{sec:consensus}

The construction of the strategy used for the proof of
Theorem~\ref{thm:main2} requires the use of a \emph{consensus protocol}.
In a $k$-resilient consensus protocol, each player has an input which
is a single bit (i.e., either 0 or 1), which intuitively represents
their preference, and players must 
agree on a bit with the following
guarantees if $k$ or fewer players deviate: 
\begin{itemize}
    \item All honest players eventually terminate with the same output.
    \item If an honest player terminates with output $o$, at least one
            honest player preferred $o$ to $1-o$.
\end{itemize}

More precisely, suppose that there $n$ players and each player $i$ has
a preference $x_i \in \{0,1\}$. A consensus protocol should satisfy
the following properties for all subsets $K \subseteq [n]$ such that
$|K| \le k$ and all strategies $\vec{\tau}_K$ for players in $K$: 

\begin{itemize}
    \item All players $i \not \in K$ eventually terminate. Moreover, if $i$ terminates with output $o$, all players not in $K$ are guaranteed to eventually terminate with output $o$.
    \item If $x_i = x_j$ for all players $i, j \not \in K$, then, if a player $i \not \in K$ terminates, it terminates with output $x_i$ (i.e., if all players not in $K$ have the same preference, then this preference is their output).
\end{itemize}

Note that by running several consensus protocols in succession, players can agree on an element of a set of arbitrary (finite) size. Bracha~\nciteyear{Br84} provided a consensus protocol that is $k$-resilient in both synchronous and asynchronous systems if $n > 3k$.

\commentout{
\subsection{Reconstructing a secret}

\commentout{
Since shares are just evaluations of a polynomial at different values of $x$, a set $S = \{s_{s_1}, \ldots, s_{j_t}\}$ of shares is $k$-consistent if there exists a polynomial $p$ of degree $k$ such that $p(j_\ell) = s_{j_\ell}$ for all $\ell \le t$. If $|S| > k$, this polynomial is uniquely defined and the value defined by $S$ is $p(0)$.
}
Since the shares of a given secret $s$ are just evaluations of a
polynomial $p$ of degree $k$ at $1, 2, \ldots, n$, if a player $i$
learns at least $k+1$ of the shares of $s$, it can compute $p$ by
interpolating the given shares and then compute $s := p(0)$. Note,
however, that reconstructing a secret this way is not $k$-resilient,
since if only one player lies about its share to $i$, then $i$ would
reconstruct an erroneous secret. To deal with this concern, $i$
waits for $2k+1$ shares that lie in the same polynomial $p$.
This allows $i$ to be sure that, if at most $k$ players lie, then at
least $k+1$ of 
those shares are truthful, and thus define the correct polynomial $p$.
}

\bibliographystyle{ACM-Reference-Format}
\bibliography{joe,game1}

\end{document}